\newtheorem{theorem}{Theorem}[section]
\newtheorem{lemma}[theorem]{Lemma}
\newtheorem{prop}[theorem]{Proposition}
\newtheorem{remark}[theorem]{Remark}
\numberwithin{equation}{section}
\newcommand{\R}{\mathbb{R}}
\newcommand{\N}{\mathbb{N}}
\newcommand{\abs}[1]{\left|#1\right|}
\newcommand{\eps}{\varepsilon}
\newcommand{\norm}[1]{\left\|#1\right\|}
\renewcommand{\leq}{\leqslant}
\renewcommand{\geq}{\geqslant}
\renewcommand{\bar}{\overline}
\renewcommand{\tilde}{\widetilde}
\def\signmb{\bigskip \begin{center} {\sc
Marc Briant\par\vspace{3mm}
Brown University\par
Division of Applied Mathematics\par
182 George Street, Box F,
Providence, RI 02192, USA\par
\vspace{3mm}   
e-mail:} \tt{briant.maths@gmail.com} \end{center}}
\def\signae{\bigskip \begin{center} {\sc
Amit Einav\par\vspace{3mm}
University of Cambridge\par
DPMMS, Centre for Mathematical Sciences\par
Wilberforce Road,
Cambridge CB3 0WA,
UK\par\vspace{3mm}
e-mail:} \tt{a.einav@dpmms.cam.ac.uk} \end{center}}
\begin{document} 

\title[CAUCHY PROBLEM FOR HOMOGENEOUS BOSONIC BOLTZMANN-NORDHEIM]{ON THE CAUCHY PROBLEM FOR THE HOMOGENEOUS BOLTZMANN-NORDHEIM EQUATION FOR BOSONS}
\author{Marc Briant}
\author{Amit Einav}
\thanks{The first author was supported by EPSRC grant EP/H023348/1 for the Cambridge Centre for Analysis, and by the $150^{th}$ Anniversary Postdoctoral Mobility Grant of the London Mathematical Society.}
\thanks{The second author was supported by EPSRC grant EP/L002302/1}
\maketitle

\begin{abstract}
The Boltzmann-Nordheim equation is a modification of the Boltzmann equation, based on physical considerations, that describes the dynamics of the distribution of particles in a quantum gas composed of bosons or fermions. In this work we investigate the Cauchy theory of the spatially homogeneous Boltzmann-Nordheim equation for bosons, in dimension $d\geq 3$. We show existence and uniqueness locally in time for any initial data in $L^\infty\left(1+\abs{v}^s\right)$ with finite mass and energy, for a suitable $s$, as well as the instantaneous creation of moments of all order.
\end{abstract}

\vspace*{10mm}

\textbf{Keywords:} Boltzmann-Nordheim equation, Kinetic model for bosons, Bose-Einstein condensattion, Subcritical solutions, Local Cauchy Problem.
%
%
\\

\textbf{Acknowledgements:} The first author would like to thank Cl\'{e}ment Mouhot for suggesting the Boltzmann-Nordheim for bosons as a possible venue of research, and Miguel Escobedo for many fruitful discussions they had. 
The authors would also like to thank the anonymous reviewers for their comments and helpful suggestions.

\tableofcontents

\section{Introduction} \label{sec:intro}

This work considers the dynamics of a distribution function of particles in a dilute homogeneous quantum bosonic gas in $\R^d$, $f(t,v)$.\\
In general, the evolution equation of particles of dilute quantum gas that undergo binary collisions is given by the so-called Boltzmann-Nordheim equation:
\begin{eqnarray*}
\partial_tf &=& Q(f) 
\\ &=& \int_{\R^d\times \mathbb{S}^{d-1}}B\left(v,v_*,\theta\right)\left[f'(1+\alpha f)f'_*(1+\alpha f_*) - f(1+\alpha f')f_*(1+\alpha f'_*)\right]\:dv_*d\sigma,
\end{eqnarray*}
with $(t,v)\in \R^{+}\times \R^d$, $\alpha\in \left\lbrace -1,1 \right\rbrace$ and where $f'$, $f_*$, $f'_*$ and $f$ are the values taken by $f$ at $v'$, $v_*$, $v'_*$ and $v$ respectively. $B$ is the collision kernel that encodes the physical properties of the collision process, and
$$\left\{ \begin{array}{rl}&\displaystyle{v' = \frac{v+v_*}{2} +  \frac{|v-v_*|}{2}\sigma} \vspace{2mm} \\ \vspace{2mm} &\displaystyle{v' _*= \frac{v+v_*}{2}  -  \frac{|v-v_*|}{2}\sigma} \end{array}\right., \: \quad \quad \mbox{cos}\:\theta = \left\langle \frac{v-v_*}{\abs{v-v_*}},\sigma\right\rangle .$$

This equation has been derived by Nordheim (see \cite{Nor}) using quantum statistical considerations. One notices that when $\alpha=0$ one recovers the Boltzmann equation, which rules the dynamics of particles in a dilute gas in classical mechanics when only elastic binary collisions are taken into account. The quantum effects manifest themselves in the fact that the probability of collision between two particles depends not only on the the number of particles undergoing the collision, but also the number of particles already occupying the final collision state. This appears in the Boltzmann-Nordheim equation in the form of the added multiplicative term where $\alpha=-1$ corresponds to fermions and $\alpha=1$ corresponds to bosons.

\bigskip
The collision kernel $B$ contains all the information about the interaction between two particles and is determined by physics. We mention, at this point, that one can derive this type of equations from Newtonian mechanics (coupled with quantum effects in the case of the Boltzmann-Nordheim equation), at least formally (see \cite{Ce} or \cite{Ce1} for the classical case and \cite{Nor} or \cite{ChapCow} for the quantum case). However, while the validity of the Boltzmann equation from Newtonian laws is known for short times (Landford's theorem, see \cite{La} or more recently \cite{GST,PSS}), we do not have, at the moment, the same kind of proof for the Boltzmann-Nordheim equation.

\bigskip


\subsection{The problem and its motivations} \label{subsec:previously}

Throughout this paper we will assume that the collision kernel $B$ can be written as
$$B(v,v_*,\theta) = \Phi\left(|v - v_*|\right)b\left( \mbox{cos}\:\theta\right),$$
which covers a wide range of physical situations (see for instance \cite{Vi2} Chapter $1$).
\par Moreover, we will consider only kernels with hard potentials, that is 
\begin{equation}\label{hardpot}
\Phi(z) = C_\Phi z^\gamma \:,\:\: \gamma \in [0,1],
\end{equation}
where $C_\Phi>0$ is a given constant. Of special note is the case $\gamma=0$ which is usually known as Maxwellian potentials.
We will assume that the angular kernel $b\circ \mbox{cos}$ is positive and continuous on $(0,\pi)$, and that it satisfies a strong form of Grad's angular cut-off:
\begin{equation}\label{binfty}
b_\infty=\norm{b}_{L^\infty_{[-1,1]}}<\infty
\end{equation}
The latter property implies the usual Grad's cut-off \cite{Gr1}:
\begin{equation}\label{lb}
l_b = \int_{\mathbb{S}^{d-1}}b\left(\mbox{cos}\:\theta\right)d\sigma = \left|\mathbb{S}^{d-2}\right|\int_0^\pi b\left(\mbox{cos}\:\theta\right) \mbox{sin}^{d-2}\theta \:d\theta < \infty.
\end{equation}
Such requirements are satisfied by many physically relevant cases. The hard spheres case ($b=\gamma=1$) is a prime example.

\bigskip
With the above assumption we can rewrite the Boltzmann-Nordheim equation for bosonic gas as
\begin{equation}\label{BN}
\partial_t f = C_\Phi\int_{\R^d\times \mathbb{S}^{d-1}}|v-v_*|^\gamma b\left(\mbox{cos}\:\theta\right)\left[f'f'_*(1+f+ f_*) - ff_*(1+f'+f'_*)\right]dv_*d\sigma.
\end{equation}
and break it into obvious gain and loss terms
$$\partial_t f = Q^+(f)- fQ^-(f)$$
where 
\begin{eqnarray}
Q^+(f) &=& C_\Phi\int_{\R^d\times \mathbb{S}^{d-1}}|v-v_*|^\gamma b\left(\mbox{cos}\:\theta\right)f'f'_*(1+f+ f_*)\:dv_*d\sigma, \label{Q+}
\\ Q^-(f) &=&C_\Phi\int_{\R^d\times \mathbb{S}^{d-1}}|v-v_*|^\gamma b\left(\mbox{cos}\:\theta\right)f_*(1+f'+f'_*)\:dv_*d\sigma. \label{Q-}
\end{eqnarray}

\bigskip
The goal of this work is to show local in time existence and uniqueness of solutions to the Boltzmann-Nordheim equation for bosonic gas. The main difficulty with the problem is the possible appearance of a Bose-Einstein condensation, i.e. a concentration of mass in the mean velocity, in finite time. In mathematical terms, this can be seen as the appearance of a Dirac function in the solution of the equation $\eqref{BN}$, noticeable by a blow-up in finite time.
\par Such concentration is physically expected, based on various experiments and numerical simulations, as long as the temperature $T$ of the gas is below a critical temperature $T_c (M_0)$ which depends on the mass $M_0$ of the bosonic gas. We refer the interested reader to \cite{EscVel} for an overview of these results.
\bigskip


\subsection{\textit{A priori} expectations for the creation of a Bose-Einstein condensation} \label{subsec:explainations}

In this subsection, we explore some properties of the Boltzmann-Nordheim equation bosonic gas in order to motivate why a concentration phenomenon is expected. We emphasize that everything is stated \textit{a priori} and should not be considered a rigorous proof.

\bigskip
We start by noticing the symmetry property of the Boltzmann-Nordheim operator.

\bigskip
\begin{lemma}\label{lem:integralQ}
Let $f$ be such that $Q(f)$ is well-defined. Then for all $\Psi(v)$ we have
$$\int_{\R^d}Q(f)\Psi\:dv = \frac{C_\Phi}{2}\int_{\R^d\times\R^d\times\mathbb{S}^{d-1}}q(f)(v,v_*)\left[\Psi'_* + \Psi' - \Psi_* - \Psi\right]\:d\sigma dvdv_*,$$
with
$$q(f)(v,v_*) = |v-v_*|^\gamma b\left(\mbox{cos}\:\theta\right)ff_*\left(1+f'+f'_*\right).$$
\end{lemma}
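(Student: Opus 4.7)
The identity is the standard weak form of a Boltzmann-type collision operator, and the plan is the usual four-way symmetrization of the integral over $\R^d\times\R^d\times\mathbb{S}^{d-1}$, adapted so as to preserve the quantum factors. First I would split
$$\int_{\R^d}Q(f)\Psi\,dv = I_1-I_2,$$
with $I_1$ (resp.\ $I_2$) equal to $C_\Phi$ times the integral of $|v-v_*|^\gamma b(\cos\theta)f'f'_*(1+f+f_*)\Psi$ (resp.\ $|v-v_*|^\gamma b(\cos\theta)ff_*(1+f'+f'_*)\Psi$) over $\R^d\times\R^d\times\mathbb{S}^{d-1}$.

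Next I would invoke two standard unit-Jacobian involutions of $\R^d\times\R^d\times\mathbb{S}^{d-1}$. The pre-post collisional involution $(v,v_*,\sigma)\mapsto\bigl(v',v'_*,(v-v_*)/|v-v_*|\bigr)$ preserves $|v-v_*|^\gamma b(\cos\theta)$ by elasticity and interchanges the roles of the unprimed and primed velocities. The particle-exchange involution $(v,v_*,\sigma)\mapsto(v_*,v,-\sigma)$ also preserves $|v-v_*|^\gamma b(\cos\theta)$, since the sign flip of $v-v_*$ is compensated by that of $\sigma$ inside $\cos\theta$; it fixes $(v',v'_*)$ up to swapping them and leaves the ``$q$-factor'' $|v-v_*|^\gamma b(\cos\theta)ff_*(1+f'+f'_*)=q(f)(v,v_*)$ invariant. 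Applying the pre-post involution to $I_1$ rewrites it as $C_\Phi\int q(f)(v,v_*)\Psi'\,d\sigma dvdv_*$, and composing with the particle-exchange involution further rewrites it as $C_\Phi\int q(f)(v,v_*)\Psi'_*\,d\sigma dvdv_*$. Averaging the two expressions yields
$$I_1=\frac{C_\Phi}{2}\int_{\R^d\times\R^d\times\mathbb{S}^{d-1}} q(f)(v,v_*)\bigl(\Psi'+\Psi'_*\bigr)\,d\sigma dvdv_*.$$
The particle-exchange involution applied directly to $I_2$ gives, by the same averaging trick,
$$I_2=\frac{C_\Phi}{2}\int_{\R^d\times\R^d\times\mathbb{S}^{d-1}} q(f)(v,v_*)\bigl(\Psi+\Psi_*\bigr)\,d\sigma dvdv_*,$$
and subtracting these two identities produces the formula stated in the lemma.

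No real obstacle is anticipated: the standing hypothesis that $Q(f)$ is well-defined makes every integrand absolutely integrable, which licenses Fubini and all three changes of variables. The argument is literally the weak formulation of the classical Boltzmann operator, and it is insensitive to the extra quantum factors $(1+f+f_*)$ and $(1+f'+f'_*)$ precisely because those, exactly like the classical products $ff_*$ and $f'f'_*$, are invariant under both involutions used.
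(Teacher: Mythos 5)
Your proof is correct and follows exactly the route the paper indicates: the paper gives no detailed proof for Lemma~\ref{lem:integralQ}, stating only that it follows from the changes of variables $(v,v_*)\to(v_*,v)$ and $(v,v_*)\to(v',v'_*)$ together with the symmetries of $q(f)$, which is precisely the symmetrization via the particle-exchange and pre-post collisional involutions that you carry out.
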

\bigskip

This result is well-known for the Boltzmann equation and is a simple manipulation of the integrand using changes of variables $(v,v_*)\to (v_*,v)$ and $(v,v_*)\to (v',v'_*)$, as well as using the symmetries of the operator $q(f)$. A straightforward consequence of the above is the \textit{a priori} conservation of mass, momentum and energy for a solution $f$ of $\eqref{BN}$ associated to an initial data $f_0$. That is

\begin{equation}\label{conservationlaws}
\int_{\R^d}\left(\begin{array}{c} 1 \\ v \\ \abs{v}^2\end{array}\right)f(v) \:dv = \int_{\R^d}\left(\begin{array}{c} 1 \\ v \\\abs{v}^2\end{array}\right)f_0(v)\:dv = \left(\begin{array}{c} M_0 \\ u \\M_2\end{array}\right).
\end{equation}

\bigskip
The entropy associated to $\eqref{BN}$ is the following functional
$$S(f) = \int_{\R^d} \left[(1+f)\mbox{log}(1+f) - f\mbox{log}(f) \right]\:dv$$
which is, \textit{a priori}, always increasing in time. It has been proved in \cite{Hua} that for given mass $M_0$, momentum $u$ and energy $M_2$, there exists a unique maximizer of $S$ with these prescribed values which is of the form
\begin{equation}\label{FBE}
F_{BE}(v) = m_0 \delta (v-u) + \frac{1}{e^{\frac{\beta}{2}\left(\abs{v-u}^2-\mu\right)}-1},
\end{equation}
where
\begin{itemize}
\item $m_0 \geq 0$.
\item $\beta\in(0,+\infty]$ is the inverse of the equilibrium temperature.
\item $-\infty < \mu \leq 0$ is the chemical potential.
\item $\mu \cdot m_0 =0$.
\end{itemize}
This suggests that for a given initial data $f_0$, the solution of the Boltzmann-Nordheim equation $\eqref{BN}$ should converge, in some sense, to a function of the form $F_{BE}$ with constants that are associated to the physical quantities of $f_0$. Hence, we can expect the appearance of a Dirac function at $u$ if $m_0 \neq 0$.
\par One can show (see \cite{Lu1} or \cite{EscVel2}) that for a given $(M_0, u, M_2)$ we have that if $d=3$, $m_0 =0$ if and only if
\begin{equation}\label{subcritical}
M_0 \leq \frac{\zeta (3/2)}{\left(\zeta (5/2)\right)^{3/5}}\left(\frac{4\pi}{3}\right)^{3/5} M_2^{3/5},
\end{equation}
where $\zeta$ denotes the Riemann Zeta function. Equivalent formulas can be obtained in a similar way for any higher dimension.

\bigskip
According to \cite{ChapCow} Chapter $2$, the kinetic temperature of a bosonic gas is given by
$$T = \frac{m}{3k_B}\frac{M_2}{M_0},$$
where $k_B$ is the physical Boltzmann constant.
This implies, using $\eqref{subcritical}$, that $m_0 =0$ if and only if $T \geq T_c(M_0)$ where

$$T_c(M_0) = \frac{m\zeta (5/2)}{2\pi k_B\zeta (3/2)}\left(\frac{M_0}{\zeta (3/2)}\right)^{2/3}.$$
\par Initial data satisfying $\eqref{subcritical}$ is called subcritical (or critical in case of equality).

\bigskip
From the above discussion, we expect that for low temperatures, $T<T_c(M_0)$, our solution to the Boltzmann-Nerdheim equation will split into a regular part and a highly concentrated part around $u$ as it approaches its equilibrium $F_{BE}$. In \cite{Spo}, Spohn used this idea of a splitting to derive a physical quantitative study of the Bose-Einstein condensation and its interactions with normal fluid, in the case of radially symmetric (isotropic) solutions.
\bigskip


\subsection{Previous studies} \label{subsec:previous}

The issue of existence and uniqueness for the homogeneous bosonic Boltzmann-Nordheim equation has been studied recently in the setting of hard potentials with angular cut-off, especially by X. Lu \cite{Lu1,Lu3,Lu2,Lu5}, and M. Escobedo and J. J. L. Vel\'azquez \cite{EscVel2,EscVel}. It is important to note, however, that these developments have been made under the isotropic setting assumption. We present a short review of what have been done in these works.
\par In his papers \cite{Lu1} and \cite{Lu3}, X. Lu managed to develop a global-in-time Cauchy theory for isotropic initial data with bounded mass and energy, and extended the concept of solutions for isotropic distributions. Under these assumptions, Lu proved existence and uniqueness of radially symmetric solutions that preserve mass and energy. Moreover, he showed that if the initial data has a bounded moment of order $s>2$, then this property will propagate with the equation. Additionally, Lu showed moment production for all isotropic initial data in $L^1_2$.
\par More recently, M. Escobedo and J. J. L. Vel\'azquez used an idea developed by Carleman for the Boltzmann equation \cite{Ca} in order to obtain uniqueness and existence locally in time for radially symmetric solutions in the space $L^\infty(1+\abs{v}^{6+0})$ (see  \cite{EscVel}). As a condensation effect can occur, we can't expect more than local-in-time results in $L^\infty$ spaces in the general setting.

\bigskip
The issue of the creation of a Bose-Einstein condensation has been extensively studied experimentally and numerically in physics \cite{SemTka1}\cite{SemTka2}\cite{JoPoRi}\cite{LaLaPoRi}. Mathematically, a formal derivation of some properties of this condensation, as well as its interactions with the regular part of the solution, has been studied in \cite{Spo} in the isotropic framework.
In the series of papers, \cite{Lu1,Lu3,Lu2}, X. Lu managed to show a condensation phenomenon, under appropriate initial data and in the isotropic setting, as the time goes to infinity. He has shown that at the low temperature case, the isotropic solutions to (\eqref{BN}) converge to the regular part of $F_{BE}$, which has a smaller mass than the initial data. This loss of mass is attributed to the creation of a singular part in the limit, i.e. the desired condensation. It is interesting to notice, as was mentioned in \cite{Lu2}, that this argument does not require the solution to be isotropic and that this created condensation neither proves, or disproves, creation of a Bose-Einstein condensation in finite time.
\par In a recent breakthroughs, \cite{EscVel2,EscVel}, the appearance of Bose-Einstein condensation in finite time has finally been shown. In \cite{EscVel} the authors showed that if the initial data is isotropic in $L^\infty(1+\abs{v}^{6+0})$ with some particular conditions for its distribution of mass near $\abs{v}^2=0$, then the associated isotropic solution exists only in finite time, and its $L^\infty$-norm blows up. This was done by a thorough study of the concentration phenomenon occurring in a bosonic gas. In \cite{EscVel2}, the authors showed that supercritical initial data indeed satisfy the blow-up assumptions in the case of the isotropic setting.
\par More precisely, in \cite{EscVel} the authors showed that there exist  $R_{blowup}$, $\Gamma_{blowup} >0$ such that if the isotropic initial satisfies
$$\int_{\abs{v}\leq R_{blowup}} f_0(\abs{v}^2)\:dv \geq \Gamma_{blowup},$$
measuring concentration around $\abs{v}=0$, then there will be a blow-up in the $L^\infty$-norm in finite time. This should be compared with the very recent proof of Lu \cite{Lu5} showing global existence of solutions in the isotropic setting when
$$\int_{\R^d}\frac{f_0(\abs{v}^2)}{\abs{v}}\:dv \leq \Gamma_{global}$$
for a known $\Gamma_{global} >0$ and $d=3$. The above gives us a measure of lack of concentration near the origin at $t=0$. 
\par At this point we would like to mention that the problem of finite time condensation, intimately connected to the Boltzmann-Nordheim equations for bosons, is far from being fully resolved, and the aforementioned results by Lu, Escobedo and Vel\'azquez are a paramount beginning of the investigation of this problem.
\bigskip


\subsection{Our goals and strategy} \label{subsec:strategy}

The \textit{a priori} conservation of mass, momentum and energy seems to suggest that a natural space to tackle the Cauchy problem is $L^1_2$, the space of positive functions with bounded mass and energy. While this is indeed the right space for the regular homogeneous Boltzmann equation (see \cite{Lu4,MisWen}), the possibility of sharp concentration implies that the $L^\infty$-norm is an important part of the mix as it can measure the condensation blow up.
Additionally, one can see that for short times, when no condensation is created, the boundedness of the $L^\infty$-norm implies a strong connection between the trilinear gain term in the homogeneous Boltzmann-Nordheim equation and the quadratic gain term in the homogeneous Boltzmann equation. Thus, it seems that the right space to look at, when one investigates the Boltzmann-Nordheim equation, is in fact $L^1_2 \cap L^\infty$, or the intersection of $L^1_2$ with some weighted $L^\infty$ space.

\bigskip
The main goal of the present work is to prove that the above intuition is valid by showing a local-in-time existence and uniqueness result for the Boltzmann-Nordheim equation when initial data in $L^1_2 \cap L^\infty(1+\abs{v}^s)$ for a suitable $s$, \emph{without any isotropic assumption}. One of the main novelty of the present paper is the highlighting of the role played by the $L^\infty$-norm not only on the control of possible blow-ups, but also on the gain of regularity of the solutions. This $L^\infty$ investigation is an adaptation of the work of Arkeryd \cite{Ark2} for the classical Boltzmann operator. A core difference between Arkeryd's work and ours lies in the control of the loss term, $Q^{-}$, which can no longer be controlled above zero using the entropy, as well as more complexities arising from dealing with a trilinear term.
\par We tackle the issue of the existence of solutions with an explicit Euler scheme for a family of truncated Boltzmann-Nordheim operators, a natural approach when one wants to propagate boundedness. The sequence of functions we obtain is then shown to converge to a solution of $\eqref{BN}$. The key ingredients we use are a new control on the gain term, $Q^+$, for large and small relative velocities $v-v_*$, estimations of 'gain of regularitiy at infinity' due to having the initial data in $L^\infty \left( 1+\abs{v}^s \right)$, and a refinement and an extension to higher dimensions of a Povzner-type inequality for the evolution of convex and concave functions under a collision.
\par The issue of uniqueness is being dealt by an adaptation of the strategy developed by Mischler and Wennberg in \cite{MisWen} for the homogeneous Boltzmann equation. The main difficulty in this case is the control of terms of the form $\abs{v-v_*}^{2+\gamma}$ that appear when one studies the evolution of the energy of solutions. \\
Besides our local theorems, we also show the appearance of moments of all orders to the solution of $\eqref{BN}$.\\
As can seen from the above discussion, as well as the proofs to follow, we treat the Cauchy theory, and the creation of moments, for the Boltzmann-Nordheim equation as an 'extension' of known results and methods for the Boltzmann equation - though the technicalities involved are far from trivial.
\bigskip


\subsection{Organisation of the article}\label{subsec:organization}

Section $\ref{sec:mainresults}$ is dedicated to the statements and the descriptions of the main results proved in this paper.
\par In Section $\ref{sec:apriori}$ we derive some key properties of the gain and loss operators $Q^+$ and $Q^-$, and show several \textit{a priori} estimates on solutions to $\eqref{BN}$. We end up by proving a gain of regularity at infinity for solutions to the homogeneous Boltzmann-Nordheim equation.
\par As moments of solutions to $\eqref{BN}$ are central in the proof of uniqueness, Section $\ref{sec:moments}$ is dedicated to their investigation. We show an extension of a Povzner-type inequality and use it to prove the instantaneous  appearance of bounded moments of all order. Lastly, we quantify the blow-up near $t=0$ for the moment of order $2+\gamma$.
\par In Section $\ref{sec:uniqueness}$ we show the uniqueness of bounded solutions that preserve mass and energy and then we turn our attention to the proof of local-in-time existence of such bounded, mass and energy preserving solutions in Section $\ref{sec:existence}$. 
\bigskip

\section{Main results} \label{sec:mainresults}

We begin by introducing a few notation that will be used throughout this work.
As we will be considering spaces in the variables $v$ and $t$ separately at times, we will index by $v$ or $t$ the spaces we are working on. The subscript $v$ will always refer to $\R^d$. For instance $L^1_v$ refers to $L^1(\R^d)$ and  $L^\infty_{[0,T],v}$ refers to $L^\infty([0,T]\times \R^d)$. 

We define the following spaces, when $p\in\left\lbrace 1,\infty \right\rbrace$ and $s\in\mathbb{N}$:
$$L^p_{s,v} =\left\{ f \in L^p_v, \quad \norm{(1+\abs{v}^s)f}_{L^p_v} < +\infty\right\}$$
Lastly, we denote the moment of order $\alpha$, where $\alpha\geq0$, of a function $f$ of $t$ and $v$ by
\begin{equation}\label{moments}
 M_\alpha(t) = \int_{\R^d}\abs{v}^\alpha f(t,v)\:dv.
\end{equation}
Note that when $f\geq 0$ the case $\alpha=0$ corresponds to the mass of $f$ while the case $\alpha=2$ corresponds to its energy.

\bigskip
The main result of the work presented here is summed up in the next theorems:

\bigskip
\begin{theorem}\label{theo:uniqeandproperties}
Let $f_0\geq 0$ be in $L^1_{2,v} \cap L_{s,v}^\infty$ when $d\geq 3$ and $d-1<s$.
Then if a non-negative solution to the Boltzmann-Nordheim equation on $[0,T_0)\times \R^d$, $f\in L^{\infty}_{\mbox{\scriptsize{loc}}}\left([0,T_0),L^1_{2,v} \cap L_v^\infty \right)$, that preserves mass and energy exists it must be unique.\\
Moreover, this solution satisfies
\begin{itemize}
\item For any $0\leq s' <\bar{s}$, where $\bar{s}=\min\left\{s\:,\:\frac{d}{1+\gamma}\left(s-d+1+\gamma+\frac{2(1+\gamma)}{d}\right)\right\}$, we have that
$$f \in L^{\infty}_{\mbox{\scriptsize{loc}}}\left([0,T_0),L^1_{2,v} \cap L_{s',v}^\infty \right),$$
\item if $\gamma>0$ then for all $\alpha>0$ and for all $0<T<T_0$, $$M_\alpha(t) \in L^{\infty}_{\mbox{\scriptsize{loc}}}\left([T,T_0)\right).$$ 
\end{itemize}
\end{theorem}
\begin{theorem}\label{theo:existence}
Let $f_0\geq 0$ be in $L^1_{2,v} \cap L_{s,v}^\infty$ when $d\geq 3$ and $d-1<s$.
\\Then, if
\begin{itemize}
\item[(i)] $\gamma=0$ and $s>d$, or
\item[(ii)] $0<\gamma\leq 1$ and $s>d+2+\gamma$,
\end{itemize}
 there exists $T_0 >0$, $d$, $s$, $C_\Phi$, $b_\infty$, $l_b$, $\gamma$,  $\norm{f_0}_{L^1_{2,v}}$ and $\norm{f_0}_{L^\infty_{s,v}}$,  such that there exists a non-negative solution to the Boltzmann-Nordheim equation on $[0,T_0)\times \R^d$, $f\in L^{\infty}_{\mbox{\scriptsize{loc}}}\left([0,T_0),L^1_{2,v} \cap L_v^\infty \right)$, that preserves mass and energy. Moreover
 $$T_0 = +\infty \quad \mbox{or}\quad \limsup\limits_{T \to T_0^-}\norm{f}_{L^\infty_{[0,T] \times \R^d}} = + \infty.$$
\end{theorem}
\bigskip

\begin{remark}
We mention a few remarks in regards to the above theorem:
\begin{enumerate}
\item It is easy to show that $\bar{s}=s$ if $d=3$ or $s\geq d$.
\item The difference between conditions $(i)$ and $(ii)$ in Theorem \ref{theo:existence} arises from the explicit Euler scheme we employ. To show the existence, we start by solving an appropriate truncated equation. However, any 'regularity at infinity' that may be gained due to the term $\abs{v-v_*}^\gamma$ for $\gamma>0$ is lost due to this truncation. Thus, an additional assumption on the weighted $L^\infty$ norm is required. Note that the case $d=3,\gamma=1$ gives the same condition as that of \cite{EscVel}.
\item Of great importance is the observation that the above theorems identifies an appropriate norm in the general non-isotropic setting, the $L^\infty$ norm, under which a study of the appearance of a blow up in finite time is possible - giving rise to a proof of local existence and uniqueness. We would like to mention that this blow up may not be the Bose-Einstein condensation itself and additional assumptions, such as the ones presented in \cite{EscVel}\cite{EscVel2}, may be needed to fully characterise the condensation phenomena.
\item Much like the classical Boltzmann equation, higher order moments are created immediately, but unlike it, $M_\alpha(t)$ are only locally bounded. We also emphasize here that this creation of moments only requires $f_0$ to be in $L^1_{2,v} \cap L^\infty_v$ as we shall see in Section $\ref{sec:moments}$.
\item Lastly, let us mention that our proofs still hold in $d=2$, but only in the special case $\gamma=0$. This is due to the use of the Carleman representation for $Q^{+}$.
\end{enumerate}
\end{remark}
\bigskip



\section{A priori estimate: control of the regularity by the $L^\infty_v$-norm}\label{sec:apriori}

This section is dedicated to proving an \textit{a priori} estimate in the $L^\infty_v$ space for solutions to $\eqref{BN}$, locally in time. As was mentioned before, we cannot expect more than this as we know from \cite{EscVel} that even for radially symmetric solutions there are solutions with a blow-up in finite time.
\par Many results in this section are an appropriate adaptation of the work of Arkeryd \cite{Ark2}. Nonetheless, we include full proofs to our main claims for the sake of completion.
\par The main theorem of the section, presented shortly, identifies the importance of the $L^\infty_v$ requirement as an indicator for blow-ups. Indeed, as we shall see, the boundedness of the solution, along with appropriate initial conditions, immediately implies higher regularity at infinity.
\bigskip
\begin{theorem}\label{theo:apriori}
Let $f_0 \geq 0$ in $L^1_{2,v} \cap L^\infty_{s,v}$ when $d\geq 3$ and $d-1<s$.
\\ Let $f$ be a non-negative solution of $\eqref{BN}$ in $L^{\infty}_{\mbox{\scriptsize{loc}}}\left([0,T_0),L^1_{2,v}\cap L^\infty_v\right)$, with initial value $f_0$, satisfying the conservation of mass and energy.
\\Define
\begin{equation}\label{sbar}
\bar{s}=\min\left\{s\:;\:\frac{d}{1+\gamma}\left(s-d+1+\gamma+\frac{2(1+\gamma)}{d}\right)\right\}.
\end{equation}
\\Then for all $0\leq T <T_0$ and all $s' <\bar{s}$ there exists an explicit $C_T>0$ such that following holds
$$\forall t \in [0,T], \quad \norm{f(t,\cdot)}_{L^\infty_{s',v}} \leq C_T.$$
The constant $C_T$ depends only on $T$, $d$, the collision kernel, $\norm{f}_{L^\infty_{[0,T],v}}$, $\norm{f_0}_{L^1_{2,v}\cap L^\infty_{s,v}}$, $s$ and $s'$.
\end{theorem}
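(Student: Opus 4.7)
The plan is to adapt Arkeryd's $L^\infty$ propagation technique to the trilinear Boltzmann-Nordheim setting by working with the mild (Duhamel) formulation of the equation. Writing $\partial_t f = Q^+(f) - fQ^-(f)$ and integrating along characteristics with the integrating factor $\exp\left(-\int_s^t Q^-(f)(\tau,v)\,d\tau\right)$ we obtain
$$f(t,v) = f_0(v)\,e^{-\int_0^t Q^-(f)(\tau,v)d\tau} + \int_0^t Q^+(f)(s,v)\,e^{-\int_s^t Q^-(f)(\tau,v)d\tau}\,ds.$$
Multiplied by $(1+\abs{v}^{s'})$, the first term is bounded by $\norm{f_0}_{L^\infty_{s,v}}$ since $s' < s$, so the heart of the matter is to control the weighted gain term and to exploit the exponential damping produced by $Q^-$.

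The first ingredient is a lower bound $Q^-(f)(v) \geq c_0(1+\abs{v})^\gamma$ with $c_0 >0$ depending only on $d$, the collision kernel, $M_0$ and $M_2$. Using $1+f'+f'_* \geq 1$ reduces the loss term to $C_\Phi l_b \int_{\R^d}\abs{v-v_*}^\gamma f(v_*)\,dv_*$, and splitting the $v_*$-integration between $\{\abs{v_*} \leq \abs{v}/2\}$ (where we use mass conservation) and its complement (where a Chebyshev estimate with the energy bound gives a small remainder) yields the announced minoration. This step is the natural replacement for Arkeryd's entropy-based lower bound: as emphasised in the introduction, the additional factor $(1+f'+f'_*)$ in the Nordheim loss operator prevents any lower bound obtained via entropy, but the conservation of mass and energy alone is sufficient here.

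The second ingredient is the weighted upper bound for the gain. Since $1+f+f_* \leq 1 + 2\sup\limits_{[0,T]}\norm{f(t,\cdot)}_{L^\infty_v}$, the trilinear gain reduces, up to a controlled multiplicative constant, to the bilinear Boltzmann gain $Q^+_{\mathrm{Bol}}(f,f)$. Applying the Carleman representation of $Q^+_{\mathrm{Bol}}$, which expresses the operator as an outer integral in $v'$ against a kernel of order $\abs{v-v'}^{\gamma+1-d}$ coupled with a codimension-one integral over the hyperplane $E_{vv'}$ through $v$ orthogonal to $v-v'$, the inner integral is bounded by $C\norm{f}_{L^\infty_{s,v}}$ uniformly in $(v,v')$ precisely when $s > d-1$, which makes $(1+\abs{v'_*}^s)^{-1}$ integrable on $E_{vv'}$. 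Transferring the weight $(1+\abs{v}^{s'})$ inside the outer integral via $\abs{v} \leq \abs{v'} + \abs{v-v'}$, splitting the resulting two pieces and interpolating between the available $L^1_{2,v}$ moment and the $L^\infty_{s',v}$ norm of $f$, one obtains a bound of the schematic form
$$(1+\abs{v}^{s'})Q^+(f)(v) \leq K\,(1+\abs{v})^\gamma\Bigl(1 + \sup\limits_{[0,T]}\norm{f(\sigma,\cdot)}_{L^\infty_{s',v}}\Bigr),$$
where $K$ depends on $d$, the kernel, $M_0$, $M_2$ and $\sup\limits_{[0,T]}\norm{f(t,\cdot)}_{L^\infty_v}$. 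The restriction $s' < \bar{s}$ arises exactly from requiring that the weight transfer be compatible with the available moments.

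Inserting both estimates in the Duhamel formula and using
$$\int_0^t (1+\abs{v})^\gamma\,e^{-c_0(t-s)(1+\abs{v})^\gamma}\,ds \leq \frac{1}{c_0},$$
we obtain a closed integral inequality on $\sup\limits_{\sigma\in[0,t]}\norm{f(\sigma,\cdot)}_{L^\infty_{s',v}}$ that Gronwall's lemma converts into the announced bound $C_T$ on $[0,T]$. The main obstacle is the Carleman-based upper bound for the weighted gain: the precise value of $\bar{s}$ reflects the balance between the integrability constraint $s > d-1$ on $E_{vv'}$, the decay of $(1+\abs{v'})^{-s'}$ controlling the outer Carleman integral when $v'$ is away from $v$, and the moment of order $2$ produced by the energy conservation — whence the explicit expression $\frac{d}{1+\gamma}\bigl(s-d+1+\gamma+\frac{2(1+\gamma)}{d}\bigr)$ appearing in \eqref{sbar}.
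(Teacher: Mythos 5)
Your overall framework—mild (Duhamel) formulation, Carleman representation for the gain, and a lower bound on $Q^-$ to generate exponential damping—is the right general shape and close in spirit to the paper's argument. There are, however, two genuine gaps that prevent the argument from reaching the stated value of $\bar{s}$.

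The central missing idea is the high/low velocity decomposition of $f$ relative to the output velocity $v$. The paper writes $f=f_{l,v}+f_{h,v}$ with $f_{l,v}=\mathbf{1}_{\{\abs{v_*}<\abs{v}/\sqrt{2}\}}f$, and observes that, by kinetic energy conservation, $\abs{v'}^2+\abs{v'_*}^2\geq\abs{v}^2$, so $Q^+(f_{l,v})(v)=0$: two slow particles cannot produce a particle at velocity $v$. This forces at least one factor $f_{h,v}$ into the gain integral, and the hyperplane integral $\int_{E_{vv'}}f_{h,v}(v'_*)\,dE(v'_*)$ then inherits the decay $(1+\abs{v})^{-(s-d+1-\epsilon)}$ (Remark \ref{rem:improved formula}, itself propagated in time via Proposition \ref{prop:weight propagation preperation II}). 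Your proposal instead bounds the inner Carleman integral crudely by a constant $C\norm{f}_{L^\infty_{s,v}}$, losing all decay in $\abs{v}$ from that factor. With only the outer integral available to absorb the weight $(1+\abs{v}^{s'})$, the piece $\int\abs{v-v'}^{s'-(d-1-\gamma)}f(v')\,dv'$ contributed by slow $v'$ grows like $(1+\abs{v})^{s'-(d-1-\gamma)}$, which dominates $(1+\abs{v})^\gamma$ as soon as $s'>d-1$. Similarly, $\int(1+\abs{v'}^{s'})f(v')\abs{v-v'}^{-(d-1-\gamma)}\,dv'$ requires a control on $\norm{f}_{L^1_{s',v}}$ that is not available a priori for $s'>2$. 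Consequently, the announced gain bound $(1+\abs{v}^{s'})Q^+(f)(v)\lesssim(1+\abs{v})^\gamma(1+\sup\norm{f}_{L^\infty_{s',v}})$ fails for $s'>d-1$, so the Duhamel/Gronwall loop never closes above $d-1$, whereas $\bar{s}$ can be substantially larger (e.g.\ $\bar{s}=s$ for $d=3$).

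Second, even granting the corrected gain estimate, a single pass does not reach $\bar{s}$. The specific form of $\bar{s}$ in \eqref{sbar} arises from iterating Lemma \ref{lem:integration lemma part II}: each step improves the available $L^\infty$ weight $s_2$, which improves the decay exponent $b=\min\left(\alpha,\,s_1+\frac{\alpha(s_2-s_1)}{d}\right)$ of the outer Carleman integral, which in turn improves the attainable weight. The fixed point of the recursion $\delta_{n+1}=\xi+\frac{d-1-\gamma}{d}(\delta_n+\gamma)$ is what produces $\frac{d}{1+\gamma}(s-d+1+\gamma+\frac{2(1+\gamma)}{d})$; your proposal has no analogue of this bootstrap. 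A minor further point: the strict lower bound $Q^-(f)(v)\geq c_0(1+\abs{v})^\gamma$ near $v=0$ cannot be obtained from mass and energy alone (a sequence concentrating at the origin makes $\int\abs{v_*}^\gamma f\,dv_*$ arbitrarily small); the constant $c_0$ must also depend on $\sup_{[0,T]}\norm{f(t,\cdot)}_{L^\infty_v}$, as in the paper's Lemma \ref{lem:Q-global}. That dependence is permitted by the theorem, but as stated your claim is incorrect; the paper instead works with the possibly sign-changing lower bound of Lemma \ref{lem:Q^- control} and extracts the decay only for $\abs{v}$ large, where the $(1+\abs{v})^\gamma$ growth of the loss actually dominates.
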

\bigskip

The entire section is devoted to the proof of this result. 
\par We start by stating a technical lemma that will be used throughout the entire section, whose proof we leave to the Appendix.
\bigskip
\begin{lemma}\label{lem:integration lemma part II}
Let $s_1,s_2\geq 0$ be such that $s_2-s_1<d$ and let $f\in L^1_{s_1,v}\cap L^\infty_{s_2,v}$.
\\Then, for any $0\leq \alpha <d$ we have
$$\int_{\R^d}f(v_*)\abs{v-v_*}^{-\alpha}\:dv_* \leq C_{d,\alpha}\left(\norm{f}_{L^1_{s_1,v}}+\norm{f}_{L^\infty_{s_2,v}}\right)\left(1+\abs{v}\right)^{-b}$$
where 
$$b=\min\left(\alpha,s_1+\frac{\alpha(s_2-s_1)}{d}\right)$$
and $C_{d,\alpha}>0$ depends only on $d$ and $\alpha$.
\end{lemma}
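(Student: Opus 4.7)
The plan is to establish the bound via a careful partition of the integration domain that isolates the three regimes where different tools (pure $L^1$ control, $L^\infty$ pointwise decay, or a balance between the two) are optimal, and then to tune a threshold parameter to extract the exact exponent $b$.

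First I would dispose of the low-$v$ regime $|v|\leq 1$: split $\R^d$ into $\{|v-v_*|\leq 1\}$ and $\{|v-v_*|>1\}$. On the inner ball, bound $f$ by $\|f\|_{L^\infty_{s_2,v}}$ and integrate the locally integrable singularity $|v-v_*|^{-\alpha}$ (finite because $\alpha<d$). On the outer piece, use $|v-v_*|^{-\alpha}\leq 1$ and the $L^1$-bound. Both contributions are $O(1)$, which matches $(1+|v|)^{-b}\asymp 1$ uniformly on $|v|\leq 1$.

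For the main regime $|v|>1$, I would split the velocity space into three zones:
\begin{enumerate}
\item[(i)] $|v_*|\leq |v|/2$, on which $|v-v_*|\geq |v|/2$, hence the integral is dominated by $C\,|v|^{-\alpha}\|f\|_{L^1_v}\leq C\,|v|^{-\alpha}\|f\|_{L^1_{s_1,v}}$;
\item[(ii)] $|v_*|\geq 2|v|$, on which $|v-v_*|\geq |v_*|/2\geq |v|$, giving again $C\,|v|^{-\alpha}\|f\|_{L^1_v}$ (and in fact one can squeeze an extra $|v|^{-s_1}$ using the weight, but this is not needed);
\item[(iii)] the annulus $|v|/2\leq |v_*|\leq 2|v|$, where both $|v_*|\asymp |v|$ and the singularity $|v-v_*|^{-\alpha}$ may be felt at close range.
\end{enumerate}
In zone (iii), one further splits according to $|v-v_*|\leq R$ or $|v-v_*|>R$ for a free parameter $R$ (to be chosen shortly). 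On the inner ball, the $L^\infty_{s_2,v}$-bound yields $f(v_*)\leq C\,\|f\|_{L^\infty_{s_2,v}}|v|^{-s_2}$ and the remaining radial integral is $C\,R^{d-\alpha}$ since $\alpha<d$, producing a term of size $C\,\|f\|_{L^\infty_{s_2,v}}|v|^{-s_2}R^{d-\alpha}$. On the complementary piece, replace $|v-v_*|^{-\alpha}$ by $R^{-\alpha}$ and use $f(v_*)\leq C\,\|f\|_{L^1_{s_1,v}}|v_*|^{-s_1}\cdot\ldots$ together with $|v_*|\asymp |v|$ and the weighted $L^1$ norm to obtain $C\,\|f\|_{L^1_{s_1,v}}R^{-\alpha}|v|^{-s_1}$.

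The final step is to minimize the sum $C(\|f\|_{L^\infty_{s_2,v}}|v|^{-s_2}R^{d-\alpha}+\|f\|_{L^1_{s_1,v}}R^{-\alpha}|v|^{-s_1})$ in $R>0$; the unique balance point is $R=|v|^{(s_2-s_1)/d}$, and the assumption $s_2-s_1<d$ guarantees $R\leq 3|v|$ for $|v|$ large enough, so $R$ stays inside our annulus. Substituting produces a zone~(iii) bound of order $(\|f\|_{L^1_{s_1,v}}+\|f\|_{L^\infty_{s_2,v}})|v|^{-(s_1+\alpha(s_2-s_1)/d)}$. Combining with zones (i)--(ii) gives the announced decay rate $b=\min(\alpha,\,s_1+\alpha(s_2-s_1)/d)$. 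The main technical point to watch is the validity of the optimization across the parameter range: one must check that the above computation survives the degenerate cases $s_1=s_2$, $s_2<s_1$, and $\alpha=0$, in each of which the formula for $b$ reduces correctly but the auxiliary radius $R$ behaves differently; this is handled by observing that $s_1(1-\alpha/d)+\alpha s_2/d\geq 0$ whenever $\alpha\leq d$ and by replacing $R$ by $\min(R,3|v|)$ as needed. The proportionality constants depend only on $d$ and $\alpha$, as required.
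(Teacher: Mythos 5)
Your proposal is correct and follows essentially the same strategy as the paper's: split off the regime $|v|\leq 1$, decompose the remaining region according to the size of $|v_*|$ and the distance $|v-v_*|$, use the weighted $L^\infty$ bound near $v$ and the weighted $L^1$ bound away from $v$ (via $|v_*|\gtrsim |v|$), and balance the two contributions. The only substantive difference is expository: the paper bakes the balancing radius $\tfrac12|v|^{(s_2-s_1)/d}$ directly into the definition of its intermediate set $B$ and verifies $|v_*|\geq |v|/2$ there using $s_2-s_1<d$, whereas you introduce a free parameter $R$, derive the bound $C\bigl(\|f\|_{L^\infty_{s_2,v}}|v|^{-s_2}R^{d-\alpha}+\|f\|_{L^1_{s_1,v}}R^{-\alpha}|v|^{-s_1}\bigr)$, and optimize over $R$ to recover the same choice.
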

\bigskip


\subsection{Key properties of the gain and loss operators}\label{subsec:keypropertiesQ+Q-}

In this subsection we gather and prove some useful properties of the gain and loss operators $Q^-$ and $Q^+$ that will be used in what is to follow.
\par First, we have the following control on the loss operator.

\bigskip
\begin{lemma}\label{lem:Q^- control}
Let $f \geq 0$ be in $L^1_{2,v}$. Then 
\begin{equation}\label{eq: Q^- control}
\forall v \in \R^d, \quad Q^{-}(f)(v) \geq C_{\Phi}l_b\left(1+\abs{v}^\gamma\right)\norm{f}_{L^1_v}-C_{\Phi}C_\gamma l_b\norm{f}_{L^1_{2,v}},
\end{equation}
where
\begin{equation}\label{Cgamma}
C_\gamma=\sup\limits_{x\geq 0}\frac{1+x^\gamma}{1+x^2}.
\end{equation}
\end{lemma}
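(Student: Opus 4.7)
The plan is to discard the non-negative terms $f'$ and $f'_*$ in the definition of $Q^-$ and then reduce the resulting inequality to the elementary subadditivity of $x \mapsto x^\gamma$ on $[0,\infty)$ for $\gamma \in [0,1]$.

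First, since $f \geq 0$ we trivially have $1 + f' + f'_* \geq 1$, so that
\begin{equation*}
Q^-(f)(v) \geq C_\Phi \int_{\R^d\times \mathbb{S}^{d-1}} |v-v_*|^\gamma b(\cos\theta) f_* \, dv_* d\sigma.
\end{equation*}
Since the integrand is independent of $\sigma$ in the $v_*$ part and $b \circ \cos$ is independent of $v_*$, the angular integration yields $l_b$ and we get
\begin{equation*}
Q^-(f)(v) \geq C_\Phi l_b \int_{\R^d} |v-v_*|^\gamma f(v_*) \, dv_*.
\end{equation*}
Thus it suffices to prove the pointwise estimate
\begin{equation*}
\int_{\R^d} |v-v_*|^\gamma f(v_*) \, dv_* \geq (1+|v|^\gamma)\norm{f}_{L^1_v} - C_\gamma \norm{f}_{L^1_{2,v}}.
\end{equation*}

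For the second step I would use the subadditivity inequality $(a+b)^\gamma \leq a^\gamma + b^\gamma$ valid for all $a,b \geq 0$ and $\gamma \in [0,1]$. Applied to $a = |v-v_*|$ and $b = |v_*|$, this gives $|v|^\gamma \leq |v-v_*|^\gamma + |v_*|^\gamma$, hence
\begin{equation*}
|v-v_*|^\gamma \geq |v|^\gamma - |v_*|^\gamma = (1+|v|^\gamma) - (1+|v_*|^\gamma).
\end{equation*}
Multiplying by $f(v_*) \geq 0$ and integrating over $\R^d$ gives
\begin{equation*}
\int_{\R^d}|v-v_*|^\gamma f(v_*)\,dv_* \geq (1+|v|^\gamma)\norm{f}_{L^1_v} - \int_{\R^d}(1+|v_*|^\gamma)f(v_*)\,dv_*.
\end{equation*}

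Finally, the definition of $C_\gamma = \sup_{x\geq 0}\frac{1+x^\gamma}{1+x^2}$ (which is finite since $\gamma \leq 1 < 2$ and the ratio tends to $0$ at infinity and equals $1$ at $x=0$) immediately gives $1 + |v_*|^\gamma \leq C_\gamma (1+|v_*|^2)$, so that
\begin{equation*}
\int_{\R^d}(1+|v_*|^\gamma)f(v_*)\,dv_* \leq C_\gamma \norm{f}_{L^1_{2,v}}.
\end{equation*}
Combining the three displayed inequalities and multiplying by $C_\Phi l_b$ yields the claim. The proof is entirely elementary; there is no real obstacle beyond identifying the right elementary inequality ($(a+b)^\gamma \leq a^\gamma + b^\gamma$) and recognizing that the loss term is bounded below by a pure convolution once the non-negative quantum corrections are discarded.
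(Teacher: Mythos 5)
Your proposal is correct and follows essentially the same route as the paper: discard the non-negative quantum terms $f'+f'_*$, reduce to the convolution $\int |v-v_*|^\gamma f_*\,dv_*$, apply the elementary bound $|v|^\gamma - |v_*|^\gamma \leq |v-v_*|^\gamma$ (which you derive from $(a+b)^\gamma \leq a^\gamma+b^\gamma$ together with the triangle inequality and monotonicity of $x\mapsto x^\gamma$), and finish with the definition of $C_\gamma$. The paper compresses the first two steps into one line but the argument is identical.
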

\bigskip

\begin{proof}[Proof of Lemma $\ref{lem:Q^- control}$]
Using the fact that for any $x,y>0$ and $0\leq \gamma\leq 1$ we have
$$\abs{x}^\gamma - \abs{y}^\gamma \leq \abs{x-y}^{\gamma}$$
we find that for any $v\in \R^d$
\begin{equation}\nonumber
\begin{split}
Q^{-}(f)(v) &\geq C_{\Phi}\int_{\R^d\times \mathbb{S}^{d-1}}\left[\left(1+\abs{v}^\gamma\right)-\left(1+\abs{v_*}^{\gamma}\right)\right]b\left(\cos \theta\right)f_* \:dv_* d\sigma
\\&\geq C_{\Phi}l_b\left(1+\abs{v}^\gamma\right)\norm{f}_{L^1_v}-C_{\Phi} C_\gamma l_b\norm{f}_{L^1_{2,v}}.
\end{split}
\end{equation}
\end{proof}
\bigskip

\begin{remark}
Had we had a uniform in time control over the entropy, $\int_{\R^d}f\log dv$, we would have been able to find a strictly positive lower bound for the loss operator, much like in the case of the Boltzmann equation. However, for the Boltzmann-Nordheim equation the appropriate decreasing entropy is given by
$$\int_{\R^d}\left((1+f)\log(1+f)-f\log f\right)dv,$$
which is not as helpful.
\end{remark}

\bigskip
An essential tool in the investigation of the $L^\infty$ properties of solutions to the Boltzmann equation is the so-called Carleman representation. This representation of the gain operator has been introduced by Carleman in \cite{Ca2} and consisted of changing the integration variables in the expression for it from $dv_\ast d\sigma$ to $dv^\prime dv_\ast^\prime$ on $\R^d$ and appropriate hyperplanes. As shown in \cite{GPV}, the representation reads as:
\begin{equation}\label{eq: Carleman rep}
\int_{\R^d\times \mathbb{S}^{d-1}}  B(v-v_*,\sigma)f'f'_*\:dv_*d\sigma = 2^{d-1}\int_{\R^d}\frac{dv'}{\abs{v-v'}}\int_{E_{vv'}}\frac{B\left(2v-v'_*-v',\frac{v'_*-v'}{\abs{v'_*-v'}}\right)}{\abs{v'_*-v'}^{d-2}}f'f'_*\:dE(v'_*)
\end{equation} 
where $E_{vv'}$ is the hyperplane that passes through $v$ and is orthogonal to $v-v'$, and $dE(v'_*)$ is the Lebesgue measure of it. The above suggests that controlling the integration on $E_{vv'}$ may be the key to a good control of the gain operator. This was indeed the successful strategy undertaken by Arkeryd (see \cite{Ark2}), and is the strategy we will follow as well.

\bigskip
\begin{lemma}\label{lem:controlQ+infty}
Let $f\geq 0$ be in $L^1_{2,v} \cap L^\infty_v$. If $\gamma\in[0,d-2]$, then
\begin{equation*}
\norm{Q^+(f)}_{L^\infty_v}\leq C_+ \left(1 + 2\norm{f}_{L^\infty_v}\right)\sup\limits_{v,v'\in \R^{d}}\left[\int_{E_{vv'}}f'_*\:dE(v'_*)\right] \int_{\R^d} \frac{f'}{\abs{v-v'}^{d-1-\gamma}}\:dv',
\end{equation*}
where $C_+ = 2^{d-1}C_\Phi b_\infty$.
\end{lemma}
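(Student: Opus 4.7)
The plan is to apply the Carleman representation $\eqref{eq: Carleman rep}$ to the gain operator after first disposing of the trilinear bosonic factor $(1+f+f_*)$ and the angular part $b(\cos\theta)$ by pointwise bounds. Since $f \geq 0$, we have $1 + f(v) + f(v_*) \leq 1 + 2\norm{f}_{L^\infty_v}$ for every $v,v_*$, and by hypothesis $b(\cos\theta) \leq b_\infty$. Both constants pull outside the integral, reducing the task to estimating
$$
I(v) := \int_{\R^d} \frac{dv'}{\abs{v-v'}} \int_{E_{vv'}} \frac{\abs{2v-v'-v'_*}^\gamma}{\abs{v'_*-v'}^{d-2}}\, f(v')f(v'_*)\, dE(v'_*).
$$
(The $2^{d-1}$ in $C_+$ is the prefactor carried along by the specific normalization of the Carleman formula.)

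The heart of the argument is a geometric identity on the hyperplane $E_{vv'}$. By definition of $E_{vv'}$, whenever $v'_* \in E_{vv'}$ the vectors $a := v-v'$ and $b := v'_* - v$ are orthogonal. Writing $2v - v' - v'_* = a - b$ and $v'_* - v' = a + b$, the Pythagorean identity yields $\abs{2v - v' - v'_*} = \abs{v'_* - v'}$, so the kernel inside the hyperplane integral collapses to
$$
\frac{\abs{2v-v'-v'_*}^\gamma}{\abs{v'_*-v'}^{d-2}} = \abs{v'_*-v'}^{\gamma-(d-2)}.
$$
The same orthogonal decomposition gives $\abs{v'_*-v'}^2 = \abs{v-v'}^2 + \abs{v-v'_*}^2 \geq \abs{v-v'}^2$. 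Since the hypothesis $\gamma \leq d-2$ makes the exponent $d-2-\gamma$ non-negative, this quantity is further bounded above by $\abs{v-v'}^{\gamma-(d-2)}$.

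Combining with the $\abs{v-v'}^{-1}$ factor already present in the Carleman form produces exactly the weight $\abs{v-v'}^{-(d-1-\gamma)}$. To decouple the two integrations, I would simply bound the inner hyperplane integral by its supremum over $v,v' \in \R^d$, leaving the convolution-type integral $\int_{\R^d} f(v')\abs{v-v'}^{-(d-1-\gamma)}dv'$ in the outer integration. Assembling the constants extracted in the first step gives the announced bound.

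The step I expect to be the main obstacle is the geometric simplification: the identity $\abs{2v-v'-v'_*} = \abs{v'_*-v'}$ is elementary once the orthogonality of $a$ and $b$ is noticed, but it is this precise cancellation that makes the trilinear kernel compatible with the Carleman form and produces exactly the exponent $d-1-\gamma$. The restriction $\gamma \in [0, d-2]$ enters crucially here, since only then does the inequality $\abs{v'_*-v'} \geq \abs{v-v'}$ translate into a usable \emph{upper} bound on $\abs{v'_*-v'}^{-(d-2-\gamma)}$, allowing the singular weight to be transferred entirely onto $\abs{v-v'}$. All remaining manipulations are pointwise estimates and a standard factoring out of a supremum.
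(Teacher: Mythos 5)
Your proof is correct and follows essentially the same route as the paper: both use the Carleman representation after bounding the bosonic factor and $b$ pointwise, reduce the kernel on $E_{vv'}$ to a power of $\abs{v'_*-v'}$, and exploit $\gamma\leq d-2$ together with $\abs{v'_*-v'}\geq\abs{v-v'}$ to transfer the singularity entirely onto $\abs{v-v'}$. The only cosmetic difference is that the paper passes through the collision invariance $\abs{v-v_*}=\abs{v'-v'_*}$ before invoking the Pythagorean identity on $E_{vv'}$, whereas you obtain the same equality $\abs{2v-v'-v'_*}=\abs{v'_*-v'}$ directly from the orthogonality of $v-v'$ and $v'_*-v$; these are the same fact.
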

\bigskip

\begin{remark}
Note that the requirement of having $\gamma$ in $[0,d-2]$ prevents our method from working in $d=2$ unless $\gamma=0$. 
\end{remark}
\bigskip

\begin{proof}[Proof of Lemma $\ref{lem:controlQ+infty}$]
As was the noted before the statement of the lemma, the key ingredient to the proof is the Carleman representation $\eqref{eq: Carleman rep}$.

We start by noticing that our collision kernel satisfies
\begin{equation}\nonumber
\frac{B\left(2v-v_*'-v',\frac{v_*'-v'}{\abs{v_*'-v'}}\right)}{\abs{v_*'-v'}^{d-2}} \leq C_\Phi b_\infty\frac{\abs{v-v_*}^\gamma}{\abs{v'-v_*'}^{d-2}}
=\frac{C_\Phi b_\infty}{\abs{v-v_*}^{d-2-\gamma}}=\frac{C_\Phi b_\infty}{\abs{2v-v'-v_*'}^{d-2-\gamma}}
\end{equation}
Since we are on $E_{vv'}$ we have that $\abs{2v-v'-v_*'}=\sqrt{\abs{v-v'}^2+\abs{v-v_*'}^2}$ and we conclude that
\begin{equation}\nonumber
\frac{B\left(2v-v_*'-v',\frac{v_*'-v'}{\abs{v_*'-v'}}\right)}{\abs{v_*'-v'}^{d-2}} \leq \frac{C_\Phi b_\infty}{\abs{v-v'}^{d-2-\gamma}}
\end{equation}
as $\gamma \leq d-2$. Thus, bounding $f$ and $f_*$ by their $L^\infty_v$-norms and then combining the above with the new representation $\eqref{eq: Carleman rep}$ we find that
$$Q^{+}\left(f\right)(v) \leq C_{\Phi}b_\infty\left(1+2\norm{f}_{L^\infty_v}\right)\norm{\int_{E_{vv'}}f'_*\:dE(v_*')}_{L^{\infty}}\norm{\int_{\R^d}f'\abs{v-v'}^{-d+1+\gamma}\:dv'}_{L^\infty_v}$$
which is the desired result.
\end{proof}
\bigskip

The following two lemmas give us control over the integration of the gain operator over Carleman's hyperplanes, which is essential to the proof of the main theorem for this section.

\bigskip
\begin{lemma}\label{lem:integration on hyperplanes of Q^+}
Let $f\geq 0$ be in $L^1_v\cap L^\infty_v$. For any given $v\in \R^d$ we have that almost everywhere in the direction of $v-v'$
\begin{equation}\label{eq:integration on hyperplanes of Q^{+}}
\int_{E_{vv'}}Q^{+}(f)(v_*')\:dE(v_*') \leq C_{+E}\left(1+2\norm{f}_{L^\infty_v}\right)\norm{f}_{L^1_v}\sup\limits_{v_1 \in \R^d}\left[\int_{\R^d}\frac{f(v)}{\abs{v-v_1}^{1-\gamma}}\:dv\right],
\end{equation}
where $C_{+E}>0$ depends only on $d$, $C_\Phi$ and $b_\infty$.
\end{lemma}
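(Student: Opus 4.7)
The plan is to adapt the Carleman-based strategy of Lemma~\ref{lem:controlQ+infty}, exploiting the extra degree of freedom gained by integrating over the hyperplane $E_{vv'}$ (rather than taking a supremum) to convert one of the hyperplane weights into a full volume weight. This is what will deliver the $\norm{f}_{L^1_v}$ factor.

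First I would use the crude bounds $b(\cos\theta)\leq b_\infty$ and $1+f(v'_*)+f(w_*)\leq 1+2\norm{f}_{L^\infty_v}$ inside the integrand defining $Q^{+}(f)(v'_*)$ to reduce the claim to an estimate of
\begin{equation*}
A := \int_{E_{vv'}}dE(w)\int_{\R^d\times \mathbb{S}^{d-1}}\abs{w-w_*}^\gamma f(w')f(w'_*)\,dw_*\,d\sigma,
\end{equation*}
where $w$ plays the role of $v'_*$. Then, for each fixed $w$, I apply the Carleman representation \eqref{eq: Carleman rep} with $u=w'$ free in $\R^d$ and $z=w'_*$ varying over $E_{wu}$. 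Since $(z-w)\perp(u-w)$ on $E_{wu}$, Pythagoras gives $\abs{w-w_*}^2=\abs{u-w}^2+\abs{z-w}^2=\abs{u-z}^2$, so the inner collision integral becomes $\int_{\R^d}\frac{du\,f(u)}{\abs{w-u}}\int_{E_{wu}}\frac{f(z)\,dE(z)}{\abs{u-z}^{d-2-\gamma}}$. Inserting this into $A$ and exchanging the $w$- and $u$-integrations by Fubini,
\begin{equation*}
A = \int_{\R^d}du\,f(u)\int_{E_{vv'}}\frac{dE(w)}{\abs{u-w}}\int_{E_{wu}}\frac{f(z)\,dE(z)}{\abs{u-z}^{d-2-\gamma}}.
\end{equation*}

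The crux is a change of variables on the $(2d-2)$-dimensional manifold $M:=\{(w,z): w\in E_{vv'},\ z\in E_{wu}\}$. For fixed $u$, Thales' theorem identifies the condition $(z-w)\perp(u-w)$ with $w$ lying on the sphere $S_{uz}$ of diameter $[u,z]$. I would therefore reparameterize $M$ by letting $z\in\R^d$ be free and $w$ vary over the fiber $F_{uz}:=S_{uz}\cap E_{vv'}$, which is a $(d-2)$-sphere. Placing the origin at $v$ with $v-v'$ along the last coordinate axis, so that $E_{vv'}=\{w_d=0\}$, the constraint $(z-w)\cdot(u-w)=0$ solves to $z_d=-(\bar z-\bar w)\cdot(\bar u-\bar w)/u_d$; computing the Jacobian by differentiating this relation and reading off the normal direction to $F_{uz}$ in $\R^{d-1}$ yields the explicit identity
\begin{equation*}
dE(w)\,dE(z)=\frac{\abs{u-w}}{2r}\,dz\,dS_{d-2}(w),\quad r:=\tfrac{1}{2}\sqrt{\abs{u-z}^2-(u_d+z_d)^2},
\end{equation*}
where $r$ is the radius of $F_{uz}$. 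The pleasant feature is that the $\abs{u-w}$ in this Jacobian cancels exactly against the $1/\abs{u-w}$ in the integrand, leaving a $w$-independent integrand on each fiber.

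Plugging this in and using $\abs{F_{uz}}=\abs{\mathbb{S}^{d-2}}r^{d-2}$, the inner double integral collapses to $\frac{\abs{\mathbb{S}^{d-2}}}{2}\int_{\R^d}\frac{f(z)\,r^{d-3}}{\abs{u-z}^{d-2-\gamma}}\,dz$. Since $r\leq \abs{u-z}/2$, the hypothesis $d\geq 3$ gives $r^{d-3}\leq (\abs{u-z}/2)^{d-3}$, and this bound collapses to $\frac{\abs{\mathbb{S}^{d-2}}}{2^{d-2}}\int_{\R^d}\frac{f(z)\,dz}{\abs{u-z}^{1-\gamma}}$. Bounding the $z$-integral by its supremum over $v_1\in\R^d$ and peeling off $\int f(u)\,du=\norm{f}_{L^1_v}$ then produces $A\leq \frac{\abs{\mathbb{S}^{d-2}}}{2^{d-2}}\,\norm{f}_{L^1_v}\,\sup_{v_1\in\R^d}\int_{\R^d}\abs{v-v_1}^{-(1-\gamma)}f(v)\,dv$, and combining with the prefactor from the first step gives the claimed inequality with $C_{+E}=C_\Phi b_\infty\abs{\mathbb{S}^{d-2}}/2^{d-2}$. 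The main obstacle is the Jacobian calculation: one must carefully set up adapted coordinates for the two parameterizations of $M$ and verify that the measures transform as claimed; once the cancellation of $\abs{u-w}$ is observed, the remainder is routine, and the restriction $d\geq 3$ enters exactly at the step $r^{d-3}\leq(\abs{u-z}/2)^{d-3}$, consistent with the remark following Lemma~\ref{lem:controlQ+infty}.
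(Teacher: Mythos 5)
Your argument is correct, and it takes a genuinely different route from the one in the paper. The paper's proof avoids the Carleman representation entirely: it approximates the surface measure on $E_{vv'}$ by Gaussians $\varphi_n$ concentrating on the hyperplane, applies the measure-preserving involution $(w,w_*)\mapsto(w',w'_*)$ to $\int\varphi_n Q^+(f)\,dv$, and then bounds the remaining $\int_{\mathbb{S}^{d-1}}\varphi_n(v')\,d\sigma$ by rescaling to the collision sphere $\mathbb{S}_{vv_*}$ of radius $|v-v_*|/2$ and invoking the concentration estimate $\sup_n r^{-(d-2)}\int_{\mathbb{S}_r(a)}\varphi_n\,ds\leq C_d$ (the paper's Lemma~\ref{lem:concentration of delta on a sphere}); this turns the factor $|v-v_*|^\gamma$ into $|v-v_*|^{\gamma-1}$ and the rest is as in your final lines. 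You instead apply Carleman pointwise on $E_{vv'}$, exchange orders of integration, and re-parameterize the manifold $\{(w,z):w\in E_{vv'},\ z\in E_{wu}\}$ by $z$ free and $w$ on the fiber $F_{uz}=S_{uz}\cap E_{vv'}$. I checked your Jacobian identity $dE(w)\,dE(z)=\tfrac{|u-w|}{2r}\,dz\,dS_{d-2}(w)$ in adapted coordinates (writing $dE(z)=\tfrac{|u-w|}{|u_d|}d\bar z$ and $d\bar w=\tfrac{|u_d|\rho^{d-3}}{2}\,dz_d\,dS_{d-2}(\theta)$ with $\rho=r$), and the cancellation of $|u-w|$ and the subsequent steps $r\leq|u-z|/2$, $r^{d-3}\leq(|u-z|/2)^{d-3}$ for $d\geq3$, are all valid. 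The two proofs ultimately exploit the same $(d-2)$-sphere — both $\mathbb{S}_{ww_*}$ and $S_{uz}$ are the same sphere, since $|w-w_*|=|u-z|$ — but the paper hides the geometry inside a concentration lemma and an approximation/limit argument, whereas yours is an exact coarea-type computation with no limiting step. Your approach is perhaps cleaner conceptually (no $\varphi_n$), at the cost of a heavier explicit Jacobian; the paper's is shorter once Lemma~\ref{lem:concentration of delta on a sphere} is available.
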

\bigskip

\begin{proof}[Proof of Lemma \ref{lem:integration on hyperplanes of Q^+}]
Denote by $\varphi_n(v)=\left(\frac{n}{2\pi}\right)^{\frac{1}{2}}e^{-\frac{nD\left(v,E_{vv'}\right)^2}{2}}$, where $D\left(v,A\right)$ is the distance of $v$ from the set $A$.
\par Using the standard change of variables $(v,v_*,\sigma)\to (v',v'_*,\sigma)$ we find that
\begin{equation}\nonumber
\int_{\R^d}\varphi_n(v)Q^{+}(f)(v)\:dv \leq C_{\Phi}b_\infty\left(1+2\norm{f}_{L^\infty_v}\right)\int_{\R^d\times \R^d \times \mathbb{S}^{d-1}}\varphi_n'\abs{v-v_*}^\gamma ff_*\:dvdv_* d\sigma .
\end{equation}
We have that
$$\int_{\mathbb{S}^{d-1}}\varphi_n\left(v'\right)\:d\sigma = \frac{2^{d-1}}{\abs{v-v_*}^{d-1}}\int_{\mathbb{S}_{vv_*}} \varphi_n(x)\:ds(x)$$
where $ds$ is the uniform measure on $\mathbb{S}_{vv_*}$ which is the sphere of radius $\abs{v-v_*}/2$ centred at $(v+v_*)/2$. It is easy to show (see Lemma \ref{lem:concentration of delta on a sphere}) that for any $a\in\R^d$ and $r>0$ we have
$$\sup_{n}\frac{1}{r^{d-2}}\int_{\mathbb{S}_r(a)}\varphi_n(x)ds(x) \leq \abs{\mathbb{S}^{d-2}}.$$
and as such
\begin{equation}\label{eq: varphi_n vs Q^+}
\int_{\R^d}\varphi_n(v)Q^{+}(f)(v)\:dv \leq C_{\Phi}\abs{\mathbb{S}^{d-2}}b_\infty\left(1+2\norm{f}_{L^\infty_v}\right)\int_{\R^d\times \R^d } \frac{ff_*}{\abs{v-v_*}^{1-\gamma}}\:dvdv_* d\sigma .
\end{equation}

Using the fact that $\varphi_n$ converge to the delta function of $E_{vv'}$ we conclude that

\begin{equation*}
\begin{split}
\int_{E_{vv'}}Q^{+}(f)(v'_*)\:dE(v'_*) &= \lim_{n\rightarrow +\infty} \int_{\R^d}\varphi_n(v)Q^{+}(f)(v)\:dv
\\ &\leq \abs{\mathbb{S}^{d-2}} C_{\Phi}b_\infty\left(1+2\norm{f}_{L^\infty_v}\right)\int_{\R^d\times \R^d} \frac{ff_*}{\abs{v-v_*}^{1-\gamma}}\:dvdv_*
\\&\leq \abs{\mathbb{S}^{d-2}} C_{\Phi}b_\infty\left(1+2\norm{f}_{L^\infty_v}\right)\norm{f}_{L^1_v} \sup\limits_{v_1 \in \R^d}\int_{\R^d} \frac{f}{\abs{v-v_1}^{1-\gamma}}\:dv,
\end{split}
\end{equation*}
which is the desired result.
\end{proof}

\bigskip
\begin{lemma}\label{lem:weight propagation preperation I}
Let $a\in \R^d$ and define 
\begin{equation}\nonumber
\psi_a(v)=\begin{cases} 0 & \abs{v}<\abs{a} \\
1 & \abs{v}\geq \abs{a}.
\end{cases}
\end{equation}
If $f\in L^{1}_{s,v}\cap L^\infty_v$ when $s\geq \frac{d}{d-1}$ then for almost every hyperplane $E_{vv'}$
\begin{equation*}
\int_{E_{vv'}}\psi_a(v_*')Q^{+}(f)(v_*')\:dE(v_*') \leq C_\Phi C_{d,\gamma}b_\infty
\left(\norm{f}_{L^1_{s,v}}+\norm{f}_{L^\infty_v}\right)^3 \left(1+\abs{a}\right)^{-s+\gamma-1}
\end{equation*}
where $C_{d,\gamma}>0$ is a constant depending only on $d$ and $\gamma$.
\end{lemma}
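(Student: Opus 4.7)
The proof proceeds along the same lines as that of Lemma \ref{lem:integration on hyperplanes of Q^+}, with the added difficulty of tracking the cut-off $\psi_a$. I start by writing
$$\int_{E_{vv'}} \psi_a(v'_*) Q^+(f)(v'_*)\, dE(v'_*) = \lim_{n\to\infty}\int_{\R^d} \varphi_n(u)\psi_a(u) Q^+(f)(u)\, du,$$
with $\varphi_n$ the Gaussian approximating the hyperplane measure (as in the previous proof). After expanding $Q^+$, performing the involutive change of variables $(u,u_*,\sigma)\to (u',u'_*,\sigma)$, and bounding $b \leq b_\infty$ together with $1+f(u')+f(u'_*) \leq 1+2\norm{f}_{L^\infty_v}$, the weight $\varphi_n(u)\psi_a(u)$ becomes $\varphi_n(u')\psi_a(u')$, and the right-hand side is controlled by
$$C_\Phi b_\infty(1+2\norm{f}_{L^\infty_v})\int_{\R^d\times\R^d\times \mathbb{S}^{d-1}}\abs{u-u_*}^\gamma f(u)f(u_*)\varphi_n(u')\psi_a(u')\,du\,du_*\,d\sigma.$$

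The crucial new observation is that the cutoff $\psi_a(u')$ can be absorbed into a constraint on $(u,u_*)$ alone, thanks to the conservation of energy under the collision. Indeed $\psi_a(u')=1$ forces $\abs{u'}\geq \abs{a}$, so
$$\abs{u}^2+\abs{u_*}^2 = \abs{u'}^2+\abs{u'_*}^2 \geq \abs{a}^2,$$
and hence $\max(\abs{u},\abs{u_*})\geq \abs{a}/\sqrt{2}$. I can therefore bound $\psi_a(u') \leq \mathbf{1}_{\abs{u}^2+\abs{u_*}^2\geq \abs{a}^2}$, pull this factor out of the $\sigma$-integration, and apply Lemma \ref{lem:concentration of delta on a sphere} exactly as in the previous proof. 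Passing to the limit $n\to\infty$ yields
$$\int_{E_{vv'}} \psi_a(v'_*) Q^+(f)(v'_*)\, dE(v'_*) \leq C_d C_\Phi b_\infty (1+2\norm{f}_{L^\infty_v})\int \mathbf{1}_{\abs{u}^2+\abs{u_*}^2\geq \abs{a}^2}\abs{u-u_*}^{\gamma-1} f(u)f(u_*)\, du\, du_*.$$

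To extract the desired $(1+\abs{a})^{-s+\gamma-1}$ decay, I use $\mathbf{1}_{\abs{u}^2+\abs{u_*}^2\geq \abs{a}^2} \leq \mathbf{1}_{\abs{u}\geq \abs{a}/\sqrt{2}}+\mathbf{1}_{\abs{u_*}\geq \abs{a}/\sqrt{2}}$, exploit the $u\leftrightarrow u_*$ symmetry of the integrand to reduce, up to a factor $2$, to the case $\abs{u}\geq \abs{a}/\sqrt{2}$, and apply Lemma \ref{lem:integration lemma part II} to the $u_*$-integral with $\alpha = 1-\gamma$, $s_1=s$, $s_2=0$:
$$\int_{\R^d}f(u_*)\abs{u-u_*}^{\gamma-1}\,du_* \leq C_{d,\gamma}(\norm{f}_{L^1_{s,v}}+\norm{f}_{L^\infty_v})(1+\abs{u})^{-b},$$
where $b=\min\bigl(1-\gamma,\;s(d-1+\gamma)/d\bigr)$. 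The remaining $u$-integral is then handled via the weighted tail bound
$$\int_{\abs{u}\geq \abs{a}/\sqrt{2}}f(u)(1+\abs{u})^{-b}\,du \leq C(1+\abs{a})^{-b}(1+\abs{a})^{-s}\norm{f}_{L^1_{s,v}},$$
obtained from $(1+\abs{u})^{-b}\leq C(1+\abs{a})^{-b}$ on the integration region and $\int_{\abs{u}\geq \abs{a}/\sqrt 2}f(u)\,du \leq C(1+\abs{a})^{-s}\norm{f}_{L^1_{s,v}}$.

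The only real subtlety is the choice of the exponent $b$: the hypothesis $s\geq d/(d-1)$ is exactly what makes $s(d-1+\gamma)/d \geq 1-\gamma$ (for $\gamma\in[0,1]$ and $d\geq 2$), ensuring that Lemma \ref{lem:integration lemma part II} yields $b=1-\gamma$ rather than a smaller value. Combining the two decay factors then gives $(1+\abs{a})^{-s-(1-\gamma)}=(1+\abs{a})^{-s+\gamma-1}$, while collecting the $L^1_{s,v}$ and $L^\infty_v$ norms (one factor from $1+2\norm{f}_{L^\infty_v}$, one from Lemma \ref{lem:integration lemma part II}, one from the tail in $L^1_{s,v}$) produces the cube $(\norm{f}_{L^1_{s,v}}+\norm{f}_{L^\infty_v})^3$ required by the statement. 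Everything else is a routine adaptation of the proof of Lemma \ref{lem:integration on hyperplanes of Q^+}.
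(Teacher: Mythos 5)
Your proof is correct and follows essentially the same route as the paper's: approximate the hyperplane measure by $\varphi_n$, use the pre/post-collisional change of variables, and observe that $\psi_a(u')\neq 0$ forces at least one of $|u|,|u_*|$ to be comparable to $|a|$, then close with the paper's Lemma~\ref{lem:integration lemma part II} and an $L^1_{s}$ tail bound. The only variation is cosmetic — you derive the support restriction from energy conservation $|u|^2+|u_*|^2=|u'|^2+|u'_*|^2$ (giving $\max(|u|,|u_*|)\geq|a|/\sqrt2$), whereas the paper uses the triangle inequality $|u'|\leq|u|+|u_*|$ (giving $\max(|u|,|u_*|)\geq|a|/2$); both yield the same decay after absorbing constants.
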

\bigskip

\begin{proof}[Proof of Lemma $\ref{lem:weight propagation preperation I}$]
The proof follows the same lines of the proof of Lemma \ref{lem:integration on hyperplanes of Q^+}. We define $\varphi_n$ to be the approximation of the delta function on the appropriate hyperplane. Then
\begin{equation*}
\begin{split}
\int_{\R^d}\varphi_n(v)\psi(v)Q^{+}(f)(v)\:dv \leq & C_\Phi b_\infty\left(1+2\norm{f}_{L^\infty_v}\right)
\\&\times\int_{\R^d\times \R^d \times \mathbb{S}^{d-1}}\varphi_n(v')\psi(v')f(v)f(v_*)\abs{v-v_*}^{\gamma}\:dvdv_{*}d\sigma
\end{split}
\end{equation*}
Since $\abs{v}\leq \abs{a}/2$ and $\abs{v_*}\leq \abs{a}/2$ implies $\psi(v')=0$ (as $\abs{v'}\leq \abs{v}+\abs{v_*}$) we conclude that the above is bounded by
\begin{equation*}
\begin{split}
&C_\Phi b_\infty\left(1+2\norm{f}_{L^\infty_v}\right)\int_{\left\{\abs{v}\geq \frac{\abs{a}}{2}\vee  \abs{v_*}\geq \frac{\abs{a}}{2}\right\}  \times \mathbb{S}^{d-1}}\varphi_n(v')f(v)f(v_*)\abs{v-v_*}^{\gamma}\:dvdv_{*}d\sigma 
\\&\leq C_\Phi \abs{\mathbb{S}^{d-2}} b_\infty\left(1+2\norm{f}_{L^\infty_v}\right)\int_{\left\{\abs{v}\geq \frac{\abs{a}}{2}\vee  \abs{v_*}\geq \frac{\abs{a}}{2}\right\}  }f(v)f(v_*)\abs{v-v_*}^{\gamma-1}\:dvdv_{*}d\sigma
\\&\leq C_\Phi \abs{\mathbb{S}^{d-2}} b_\infty\left(1+2\norm{f}_{L^\infty_v}\right)\left(\int_{\abs{v}>\frac{\abs{a}}{2}}f(v)\:dv\right) \left(\sup\limits_{\abs{v}>\frac{\abs{a}}{2}} \int_{\R^d}f(v_*)\abs{v-v_*}^{\gamma-1}\:dv_*\right)
\\&\leq C_\Phi C_{d,\gamma} b_\infty\left(1+2\norm{f}_{L^\infty_v}\right) \frac{\norm{f}_{L^1_{s,v}}}{\left(1+\abs{a}\right)^{s}}\frac{\norm{f}_{L^1_{s,v}}+\norm{f}_{L^\infty_v}}{\left(1+\abs{v}\right)^b}
\end{split}
\end{equation*}
for 
$$b=\min\left\{1-\gamma\:;\:s\left(1-\frac{1-\gamma}{d}\right)\right\}$$
where we have used Lemma \ref{lem:integration lemma part II}. The result follows from taking $n$ to infinity as $s\geq d/(d-1)$ implies 
\begin{equation*}
\max_{0\leq \gamma \leq 1}\frac{1-\gamma}{1-\frac{1-\gamma}{d}}\leq s.
\end{equation*}
\end{proof}
\bigskip


\subsection{A priori properties of solutions of $\eqref{BN}$}\label{subsec:apriorisolutions}

The first step towards the proof of Theorem $\ref{theo:apriori}$ is to obtain some \textit{a priori} estimates on $f$ when $f$ is a bounded solution of the Boltzmann-Nordheim equation.
\par We first derive an estimation of the growth of the moments of $f$ when $f_0$ has moments higher than $2$.

\bigskip
\begin{prop}\label{prop:prop of moments}
Assume that $f$ is a solution to the Boltzmann-Nordheim equation with initial conditions $f_0\in L^1_{s,v}$ for $s>2$. Then, for any $T<T_0$ we have that
\begin{equation*}
\norm{f(t,\cdot)}_{L^1_{s,v}}\leq e^{2C_\Phi C_s b_\infty\left(1+2\sup\limits_{t\in(0,T]}\norm{f}_{L^\infty_v}\right) \norm{f_0}_{L^1_{2,v}}t}\norm{f_0}_{L^1_{s,v}}.
\end{equation*}
\end{prop}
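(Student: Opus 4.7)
The plan is to derive a linear differential inequality for $t \mapsto \|f(t)\|_{L^1_{s,v}}$ and integrate it using Gr\"onwall's lemma. I apply the weak formulation (Lemma~\ref{lem:integralQ}) with the test function $\Psi(v) = (1+|v|^2)^{s/2}$, which is comparable to the weight $1+|v|^s$ defining $L^1_{s,v}$. This gives
\begin{equation*}
\frac{d}{dt}\int f\Psi\,dv = \frac{C_\Phi}{2}\int |v-v_*|^\gamma b(\cos\theta)ff_*(1+f'+f'_*)[\Psi'+\Psi'_*-\Psi-\Psi_*]\,d\sigma\,dvdv_*.
\end{equation*}

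The first step is to absorb the quantum correction into a uniform constant. Since $f$ is assumed bounded on $[0,T] \times \R^d$, we have $(1+f'+f'_*) \leq 1+2\sup_{t\in[0,T]}\|f(t)\|_{L^\infty_v}$. Because the kernel $\Psi'+\Psi'_*-\Psi-\Psi_*$ has indefinite sign, I apply this upper bound on the positive part of the kernel and simply discard the favourable (non-positive) contribution from its negative part. This linearization reduces the problem to estimating an integral of the same form as in the classical Boltzmann moment-propagation argument, with an extra multiplicative prefactor $(1+2\sup\|f\|_{L^\infty_v})$.

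The second and central step is to invoke a Povzner-type inequality. The naive bound $\Psi'+\Psi'_* \leq (1+|v|^2+|v_*|^2)^{s/2}$ is too crude to close a Gr\"onwall-type inequality for $\|f\|_{L^1_{s,v}}$: one must instead use a refined angular-averaged estimate of the form
\begin{equation*}
\int_{\mathbb{S}^{d-1}} b(\cos\theta)[\Psi'+\Psi'_*-\Psi-\Psi_*]_+\,d\sigma \leq C_s\, l_b \bigl(|v|^{s-2}\langle v_*\rangle^2 + \langle v\rangle^2 |v_*|^{s-2}\bigr),
\end{equation*}
which exploits the energy conservation $|v'|^2+|v'_*|^2 = |v|^2+|v_*|^2$ and the angular averaging against $b$ to prevent the post-collisional concentration of mass at one velocity. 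After multiplying by $|v-v_*|^\gamma$ (controlled by subadditivity for $\gamma\in[0,1]$) and integrating against $ff_*$, elementary moment interpolation between $M_0$, $M_2$ and $M_s$, combined with the conservation laws $M_0(f(t)) = M_0(f_0)$ and $M_2(f(t)) = M_2(f_0)$, produces a closed bound of the form $C\,\|f_0\|_{L^1_{2,v}}\|f(t)\|_{L^1_{s,v}}$.

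Combining the two steps yields the linear differential inequality
\begin{equation*}
\frac{d}{dt}\|f(t)\|_{L^1_{s,v}} \leq 2C_\Phi C_s b_\infty \bigl(1+2\sup_{t\in[0,T]}\|f(t)\|_{L^\infty_v}\bigr)\|f_0\|_{L^1_{2,v}}\|f(t)\|_{L^1_{s,v}},
\end{equation*}
and integration via Gr\"onwall's lemma gives the claimed exponential bound. As is usual in moment-propagation results for Boltzmann-type equations, the main obstacle is the sharp Povzner-type inequality in the second step: its precise form, adapted to dimension $d \geq 3$, hard potentials $\gamma \in [0,1]$ and general (non-isotropic) distributions, is classical in spirit but technically delicate, and will be established in the dedicated Section~\ref{sec:moments}. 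Once it is in hand, the rest of the argument is a standard closure by Gr\"onwall's lemma, with the only Boltzmann-Nordheim-specific feature being the uniform control of the trilinearity via the assumed boundedness of $f$.
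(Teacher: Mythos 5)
Your overall strategy is the same as the paper's: test against the weight, absorb the trilinear factor $(1+f'+f'_*)$ by $1+2\sup\|f\|_{L^\infty_v}$, invoke a Povzner-type estimate, and close by Gr\"onwall using conservation of mass and energy. The place where you diverge is in the Povzner step, and two points there deserve comment.

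First, the paper does \emph{not} use an angular-averaged Povzner bound for this proposition; it uses the classical \emph{pointwise} inequality
$\abs{v'}^s+\abs{v'_*}^s-\abs{v}^s-\abs{v_*}^s \leq C_s \abs{v}^{s-1}\abs{v_*}$ for $s>2$
(stated as ``known,'' understood after symmetrizing in $(v,v_*)$ so one may assume $\abs{v}\ge\abs{v_*}$), together with the bounds $b\leq b_\infty$ and $\abs{v-v_*}^\gamma\leq\abs{v}^\gamma+\abs{v_*}^\gamma$. Your assertion that the crude bound $\Psi'+\Psi'_*\le(1+\abs{v}^2+\abs{v_*}^2)^{s/2}$ is insufficient is correct, but your conclusion that one must therefore exploit angular averaging over $\mathbb{S}^{d-1}$ is not: a pointwise Povzner bound already carries the necessary cancellation for this particular proposition. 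Angular averaging (as in Lemma~\ref{lem:povzner}) becomes genuinely necessary later, for the creation of moments, where one needs a negative term $-H_\psi$, not merely an $O(\abs{v}^{s-1}\abs{v_*})$ cross term.

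Second, deferring to the Povzner lemma of Section~\ref{sec:moments} introduces a forward reference (this proposition lives in Section~\ref{sec:apriori}), and moreover the form you quote, $C_s\,l_b\bigl(\abs{v}^{s-2}\langle v_*\rangle^2+\langle v\rangle^2\abs{v_*}^{s-2}\bigr)$, is not literally what Lemma~\ref{lem:povzner}(i) produces: with $\psi(x)=x^{1+\alpha}$ and $\alpha=s/2-1$ it gives $\abs{G}\lesssim(\abs{v}\abs{v_*})^{s/2}$. That bound would also close the Gr\"onwall loop after interpolation, but it requires a different algebraic manipulation than you sketch, and you never verify that it yields the precise constant $2C_\Phi C_s b_\infty(1+2\sup\|f\|_{L^\infty_v})\|f_0\|_{L^1_{2,v}}$ appearing in the statement. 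In short: the architecture of your proof is right, but the Povzner step as you state it is both heavier than necessary and not quite what the cited lemma delivers; replacing it with the elementary pointwise inequality (or carrying out the interpolation with the $(\abs{v}\abs{v_*})^{s/2}$ bound) would make the argument self-contained.
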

\bigskip

\begin{proof}[Proof of Proposition $\ref{prop:prop of moments}$]
We have that
\begin{equation}\nonumber
\begin{split}
&\frac{d}{dt}\int_{\R^d}\left(1+\abs{v}^s\right) f(v,t)\:dv 
\\&\quad= \frac{C_\Phi}{2}\int_{\R^d\times \R^d \times \mathbb{S}^{d-1}}q(f)(v,v_*)\left(\abs{v'}^s+\abs{v'_*}^s-\abs{v}^s-\abs{v_*}^s\right)\:dvdv_* d\sigma 
\\&\quad\leq C_\Phi C_s b_\infty\left(1+2\sup\limits_{t\in(0,T]}\norm{f}_{L^\infty_v}\right)\int_{\R^d\times \R^d}\abs{v}^{s-1}\abs{v_*}\left(\abs{v}^\gamma+\abs{v_*}^\gamma\right)
f(v)f(v_*)\:dvdv_*
\\&\quad\leq C_\Phi C_s b_\infty\left(1+2\sup\limits_{t\in(0,T]}\norm{f}_{L^\infty_v}\right)\int_{\R^d\times \R^d}\left(\abs{v}^{s}\abs{v_*}+\abs{v}^{s-1}\abs{v_*}^2\right)f(v)f(v_*)\:dvdv_*
\\&\quad\leq 2C_\Phi C_s b_\infty\left(1+2\sup\limits_{t\in(0,T]}\norm{f}_{L^\infty_v}\right) \norm{f_0}_{L^1_{2,v}}\int_{\R^d}\left(1+\abs{v}^s\right)f(v)\:dv
\end{split}
\end{equation}
where we have used the known inequality
\begin{equation}\nonumber
\abs{v'}^s+\abs{v'_*}^s-\abs{v}^s-\abs{v_*}^s \leq C_s \abs{v}^{s-1}\abs{v_*} 
\end{equation}
for $s>2$ and some $C_s$ depending only on $s$, the fact that $\gamma \leq 1$ and the inequality
\begin{equation}\nonumber
\abs{v}^\alpha \leq 1+\abs{v}^{\alpha+1}
\end{equation}
for any $\alpha \geq 0$. The result follows.
\end{proof}
\bigskip

The next stage in our investigation is to show that under the conditions of Theorem \ref{theo:apriori} one can actually bound the integral of $f$ over $E_{vv'}$ \emph{uniformly in time}, which will play an important role in the proof of the mentioned theorem, and more.

\bigskip
\begin{prop}\label{prop:integration on hyperplane of f}
Let $f$ be a solution to the Boltzmann-Nordheim equation that satisfies the conditions of Theorem $\ref{theo:apriori}$ and let $0\leq T < T_0$. Then there exists $C_E >0$ and $C_0 \in \R^*$ such that for any given $v\in \R^d$ we have that almost everywhere in the direction of $v-v'$ and for all $t\in [0,T]$
\begin{equation*}
\begin{split}
&\int_{E_{vv'}}f'_*(t)\:dE(v_*') \leq  C_E \: e^{-C_0t}\norm{f_0}_{L^\infty_{s,v}}
\\ &\quad+ C_E \frac{1-e^{-C_0T}}{C_0}\norm{f_0}_{L^1_v}\left(1+2\sup\limits_{\tau\in[0,T]}\norm{f(\tau,\cdot)}_{L^\infty_v}\right)\left(\norm{f_0}_{L^1_v}+\sup\limits_{\tau \in [0,T]}\norm{f(\tau,\cdot)}_{L^\infty_v}\right)
\end{split}
\end{equation*}
where the constant $C_E$ only depends on $d$, $s$ and the collision kernel, and $C_0$ depends also on $f_0$ and satisfies
$$Q^-(f)(v) \geq C_0.$$
\end{prop}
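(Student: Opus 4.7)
The plan is to view \eqref{BN} pointwise in $v$ as a linear first-order ODE in $t$ with absorption $Q^{-}(f)$ and source $Q^{+}(f)$, and then apply Duhamel's formula to obtain
\begin{equation*}
f(t,v) = f_0(v)\,e^{-\int_0^t Q^{-}(f)(\tau,v)\,d\tau} + \int_0^t e^{-\int_s^t Q^{-}(f)(\tau,v)\,d\tau}\,Q^{+}(f)(s,v)\,ds.
\end{equation*}
The key observation is that Lemma \ref{lem:Q^- control}, combined with the conservation of mass and energy of $f$, produces the uniform in $(\tau,v)$ bound $Q^{-}(f)(\tau,v) \geq C_0$ with
\begin{equation*}
C_0 := C_\Phi l_b M_0(f_0) - C_\Phi C_\gamma l_b \norm{f_0}_{L^1_{2,v}},
\end{equation*}
so that the exponentials in Duhamel are majorized by $e^{-C_0 t}$ and $e^{-C_0(t-s)}$ respectively (the possible negativity of $C_0$ is harmless since we only want a local-in-time estimate). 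I then integrate the resulting pointwise inequality over the Carleman hyperplane $E_{vv'}$ for almost every direction $v - v'$ (Tonelli takes care of the source term) and treat the two summands separately.

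For the initial contribution I factor out $\norm{f_0}_{L^\infty_{s,v}}$ and am left with $\int_{E_{vv'}}(1+\abs{v'_*}^s)^{-1}\,dE(v'_*)$. Parametrizing $E_{vv'}$ through the foot of the perpendicular from the origin, a direct computation shows this integral is bounded, uniformly over all affine $(d-1)$-hyperplanes in $\R^d$, by a finite constant depending only on $d$ and $s$, precisely because $s > d-1$. For the source contribution, after pulling out $e^{-C_0(t-s)}$, I apply Lemma \ref{lem:integration on hyperplanes of Q^+} to obtain
\begin{equation*}
\int_{E_{vv'}} Q^{+}(f)(s,v'_*)\,dE(v'_*) \leq C_{+E}\bigl(1 + 2\norm{f(s,\cdot)}_{L^\infty_v}\bigr)\norm{f(s,\cdot)}_{L^1_v}\sup_{v_1\in\R^d}\int_{\R^d}\frac{f(s,v)}{\abs{v-v_1}^{1-\gamma}}\,dv,
\end{equation*}
and then invoke Lemma \ref{lem:integration lemma part II} with $s_1=s_2=0$ and $\alpha = 1-\gamma \in [0,1)\subset [0,d)$ to control the last supremum by $C_{d,1-\gamma}\bigl(\norm{f(s,\cdot)}_{L^1_v} + \norm{f(s,\cdot)}_{L^\infty_v}\bigr)$. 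Conservation of mass replaces each $L^1_v$-norm by $\norm{f_0}_{L^1_v}$; majorizing the remaining $L^\infty_v$-norms by their supremum over $[0,T]$ and computing the time integral $\int_0^t e^{-C_0(t-s)}\,ds \leq (1-e^{-C_0 T})/C_0$ (interpreted by continuous extension when $C_0=0$) yields the announced estimate with $C_E$ depending only on $d$, $s$ and the collision kernel.

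The one conceptual subtlety, and the reason for the unusual appearance of a possibly non-positive $C_0$ in the statement, is that unlike in the classical Boltzmann setting we cannot use the entropy production to secure a strictly positive lower bound on $Q^{-}(f)$: the compensating term $-C_\Phi C_\gamma l_b \norm{f_0}_{L^1_{2,v}}$ in Lemma \ref{lem:Q^- control} can dominate, so the exponential in Duhamel may grow rather than decay. This is nevertheless acceptable for the purposes of this proposition since the estimate is only used on a bounded interval $[0,T]$. Beyond that, no new technical obstacle appears: the argument is a clean assembly of the lemmas of Subsection \ref{subsec:keypropertiesQ+Q-} with Duhamel's representation.
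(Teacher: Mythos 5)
Your proposal is correct and follows essentially the same route as the paper: both proofs exploit the ODE structure $\partial_t f = Q^+(f) - f Q^-(f)$ with the absorption bound $Q^-(f)\geq C_0$ from Lemma~\ref{lem:Q^- control}, and both control the $Q^+$ contribution on the Carleman hyperplane via the $\abs{v-v_*}^{\gamma-1}$-kernel estimate underlying Lemma~\ref{lem:integration on hyperplanes of Q^+}. The only organizational difference is that the paper mollifies the hyperplane integral with the Gaussian $\varphi_n$ \emph{before} writing the Gr\"onwall inequality and passes to the limit at the end (which is what makes the trace of $f(t,\cdot)$ on the measure-zero set $E_{vv'}$ rigorous), whereas you apply Duhamel pointwise and integrate the resulting inequality over $E_{vv'}$ directly, implicitly relying on the mollified interpretation already built into Lemma~\ref{lem:integration on hyperplanes of Q^+} and the ``almost everywhere in the direction of $v-v'$'' qualifier; this is a cosmetic reordering of the same argument rather than a different proof.
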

\bigskip

\begin{remark}
From Lemma $\ref{lem:Q^- control}$ we know that we can choose $C_0=C_{\Phi}l_b (C_\gamma\norm{f_0}_{L^1_{2,v}}-\norm{f_0}_{L^1_v})$ but the theorem can be stated more generally, as presented. Notice that the choice above can satisfy $C_0<0$, which will imply an exponential growth in the bound.
\end{remark}
\bigskip

\begin{proof}[Proof of Proposition \ref{prop:integration on hyperplane of f}]
Define $\varphi_n$ as in Lemma \ref{lem:integration on hyperplanes of Q^+}. Since $f$ is a solution to the Boltzmann-Nordheim equation and that $Q^-(f)(v) \geq C_0$ we find that
\begin{equation}\label{diffeqintegralEvv'f}
\frac{d}{dt}\int_{\R^d}\varphi_n(v)f(t,v)\:dv \leq -C_0\int_{\R^d}f(t,v)\varphi_n(v)\:dv +\int_{\R^d}\varphi_n(v)Q^{+}(f)(v)\:dv.
\end{equation}

\bigskip
Using $\eqref{eq: varphi_n vs Q^+}$ we conclude that
\begin{equation}\label{importantineqEvv*}
\begin{split}
&\int_{\R^d}\varphi_n(v)Q^{+}(f(t,\cdot))(v)\:dv 
\\&\leq \abs{\mathbb{S}^{d-2}} C_{\Phi}b_\infty\left(1+2\sup\limits_{\tau\in [0,T]}\norm{f(\tau,\cdot)}_{L^\infty_v}\right) \int_{\R^d\times \R^d }\abs{v-v_*}^{\gamma-1} f(t,v)f(t,v_*)\:dvdv_*
\\&\leq \abs{\mathbb{S}^{d-2}} C_{\Phi}b_\infty\left(1+2\sup\limits_{\tau\in [0,T]}\norm{f(\tau,\cdot)}_{L^\infty_v}\right)\norm{f_0}_{L^1_v} \sup\limits_{\tau \in [0,T],\:v_1 \in \R^d}\int_{\R^d}\frac{f(\tau,v_*)}{\abs{v_1-v_*}^{1-\gamma}} \:dv_*,
\end{split}
\end{equation}
where we used that $f$ is mass preserving. We notice that for $\gamma  > 1-d$ and $v\in\R^d$
$$\int_{\R^d}\frac{f(v_*)}{\abs{v-v_*}^{1-\gamma}} \:dv_*  \leq \norm{f}_{L^\infty_v}\int_{\abs{x}<1}\frac{dx}{\abs{x}^{1-\gamma}}+\norm{f}_{L^1_v}$$
implying
\begin{equation*}
\begin{split}
&\int_{\R^d}\varphi_n(v)Q^{+}(f(t,\cdot))(v)\:dv 
\\&\leq C_{d,\gamma} C_{\Phi}b_\infty\left(1+2\sup\limits_{\tau\in [0,T]}\norm{f(\tau,\cdot)}_{L^\infty_v}\right)\norm{f_0}_{L^1_v} \left(\norm{f_0}_{L^1_v}+\sup\limits_{\tau \in [0,T]}\norm{f(\tau,\cdot)}_{L^\infty_v}\right),
\end{split}
\end{equation*}
for an appropriate $C_{d,\gamma}$.
\bigskip
The resulting differential inequality from $\eqref{diffeqintegralEvv'f}$ is
$$\frac{d}{dt}\int_{\R^d}\varphi_n(v)f(t,v)\:dv \leq - C_0 \int_{\R^d}\varphi_n(v)f(t,v)\:dv +C_T$$
with an appropriate $C_T$, which implies by a Gr\"onwall lemma that
\begin{equation}\label{diffeqfinal}
\int_{\R^d}\varphi_n(v)f(t,v)\:dv \leq \left(\int_{\R^d}\varphi_n(v)f_0\:dv\right)e^{-C_0 t}+ \frac{C_T}{C_0}\left[1-e^{-C_0 t}\right].
\end{equation}
Since
$$\lim_{n\rightarrow\infty}\int_{\R^d}\varphi_n(v)f_0 \:dv = \int_{E_{vv'}}f_0(v'_*)\:dE(v'_*) \leq \norm{f_0}_{L^\infty_{s,v}}\int_{E_{vv'}}\frac{dE(v'_*)}{1+\abs{v_*'}^s} = C_{d,s}\norm{f_0}_{L^\infty_{s,v}}$$
as $s>d-1$, we take the limit as $n$ goes to infinity in $\eqref{diffeqfinal}$ which yields
\begin{equation}\label{eq:exact bound on Q+} 
\int_{E_{vv'}}f(t,v'_*)\:dE(v_*') \leq C_{d,s}\norm{f_0}_{L^\infty_{s,v}}e^{-C_0t}+\frac{C_T}{C_0}\left[1-e^{-C_0 t}\right]
\end{equation}
which is the desired result.
\end{proof}
\bigskip

Lastly, before proving Theorem \ref{theo:apriori}, we give one more \textit{a priori} type of estimates on the family of hyperplanes $E_{vv'}$.

\bigskip
\begin{prop}\label{prop:weight propagation preperation II}
Let $f$ be a solution to the Boltzmann-Nordheim equation that satisfies the conditions of Theorem $\ref{theo:apriori}$ and let $0\leq T < T_0$. For any $a\in \R^d$ define 
\begin{equation}\nonumber
\psi_a(v)=\begin{cases} 0 & \abs{v}<\abs{a} \\
1 & \abs{v}\geq \abs{a}.
\end{cases}
\end{equation}
Then for almost every hyperplane $E_{vv'}$ and $t\in[0,T]$
\begin{equation*}
\int_{E_{vv'}}\psi_v(v_*')f(t,v_*')\:dE(v_*') \leq \int_{E_{vv'}}\psi_v(v_*')f_0(v_*')\:dE(v'_*) 
+ C_{T,\alpha}\left(1+\abs{v}\right)^{-\alpha}
\end{equation*}
with $\alpha=3$ if $s\leq d+2$ and $\alpha = s'+1$ for any $s'<s-d$ if $s > d+2$. The constant $C_{T,\alpha} >0$ depends only on $T$, $d$, the collision kernel, $\sup\limits_{t\in(0,T]}\norm{f(t,\cdot)}_{L^\infty_v}$, $\norm{f_0}_{L^1_{2,v}\cap L^\infty_{s,v}}$, $s$ and $s'$.
\end{prop}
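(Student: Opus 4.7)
My plan is to follow the Gr\"onwall scheme used in the proof of Proposition $\ref{prop:integration on hyperplane of f}$, but this time to \emph{retain} the dissipation provided by the loss term $fQ^-$, which was dropped there. Exploited carefully on the support of $\psi_v$, this dissipation is exactly what produces the sharp decay in $|v|$.

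I would start by letting $\varphi_n$ be the Gaussian approximation of the $(d-1)$-dimensional Lebesgue measure on $E_{vv'}$ as in the proof of Proposition $\ref{prop:integration on hyperplane of f}$, and define $h_n(t)=\int_{\R^d}\varphi_n(w)\psi_v(w)f(t,w)\,dw$. Using $\eqref{BN}$ one gets
\begin{equation*}
h_n'(t)=\int_{\R^d}\varphi_n\psi_v Q^+(f(t))\,dw-\int_{\R^d}\varphi_n\psi_v f(t)Q^-(f(t))\,dw.
\end{equation*}
Combining Lemma $\ref{lem:Q^- control}$ with the conservation of mass, one can choose $R_0$ large enough (depending only on $\norm{f_0}_{L^1_{2,v}}$ and the collision kernel) so that for $|v|\geq R_0$ we have, on the support of $\psi_v$ (i.e.\ $|w|\geq |v|$),
$Q^-(f(t))(w) \geq K_v := \tfrac{1}{2}C_\Phi l_b\norm{f_0}_{L^1_v}(1+|v|^\gamma).$
This yields the differential inequality $h_n'(t)\leq G_n(t)-K_v h_n(t)$, with $G_n(t)=\int \varphi_n\psi_v Q^+(f(t))\,dw$, and Gr\"onwall's lemma gives
\begin{equation*}
h_n(t)\leq e^{-K_v t}h_n(0)+\int_0^t e^{-K_v(t-\tau)}G_n(\tau)\,d\tau.
\end{equation*}

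Letting $n\to\infty$, with dominated convergence justified by the $n$-uniform bound arising in the proof of Lemma $\ref{lem:weight propagation preperation I}$, and applying Lemma $\ref{lem:weight propagation preperation I}$ itself with $a=v$, I would arrive at
\begin{equation*}
H(t)\leq H(0)+\frac{C_T(1+|v|)^{-r+\gamma-1}}{K_v},
\end{equation*}
where $H(t)=\int_{E_{vv'}}\psi_v f(t,v'_*)\,dE(v'_*)$, and $r\geq d/(d-1)$ is any moment exponent for which $\norm{f(\tau)}_{L^1_{r,v}}$ is uniformly bounded on $[0,T]$. The conservation of energy always gives $r=2$; when $s>d+2$, a straightforward interpolation of $f_0\in L^1_{2,v}\cap L^\infty_{s,v}$ yields $f_0\in L^1_{s',v}$ for every $s'<s-d$, and Proposition $\ref{prop:prop of moments}$ propagates this bound in time, so one may take $r=s'$. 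Since $K_v\gtrsim 1+|v|^\gamma$, the fraction collapses to $C'_T(1+|v|)^{-r-1}$, giving the announced decay with $\alpha=r+1\in\{3,\,s'+1\}$. For $|v|<R_0$, Proposition $\ref{prop:integration on hyperplane of f}$ furnishes a uniform bound on $H(t)$, which is absorbed into $C_{T,\alpha}(1+|v|)^{-\alpha}$ using $(1+|v|)^{\alpha}\leq (1+R_0)^{\alpha}$ on that range.

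The main obstacle lies in the regime $\gamma>0$: merely discarding the loss term, as in Proposition $\ref{prop:integration on hyperplane of f}$, only leads to decay $(1+|v|)^{-r+\gamma-1}$, which falls short of the target. The crucial point is to \emph{keep} the $(1+|v|^\gamma)$ factor coming from Lemma $\ref{lem:Q^- control}$, because it exactly cancels the $\gamma$-loss in the Carleman gain estimate of Lemma $\ref{lem:weight propagation preperation I}$, upgrading the effective decay from $(1+|v|)^{-r+\gamma-1}$ to $(1+|v|)^{-r-1}$.
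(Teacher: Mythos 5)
Your proposal follows essentially the same strategy as the paper's own proof: approximate the hyperplane integral with $\varphi_n$, exploit the $(1+|v|^\gamma)$ lower bound on $Q^-$ from Lemma \ref{lem:Q^- control} on the support of $\psi_v$, integrate the resulting differential inequality (the paper packages this Gr\"onwall step in Lemma \ref{lem:decay differential inequality}, you do it by hand with a fixed rate $K_v$), and bound the gain term via Lemma \ref{lem:weight propagation preperation I}. You have also correctly identified the key point — that retaining the $\gamma$-dependence in the loss rate is precisely what cancels the $\gamma$-loss in the Carleman gain estimate — so the argument matches the paper's.
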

\bigskip

\begin{proof}[Proof of Proposition $\ref{prop:weight propagation preperation II}$]
We start by noticing that if $s-s'>d$ then
\begin{equation}\nonumber
\int_{\R^d}\left(1+\abs{v}^{s'}\right)f_0(v)\:dv \leq C_{s,s'}\norm{f_0}_{L^\infty_{s,v}}\int_{\R^d}\frac{dv}{1+\abs{v}^{s-s'}} = C_{s,s'}\norm{f_0}_{L^\infty_{s,v}}
\end{equation}
Thus, if $s>d+2$ we can conclude that $f_0\in L^{1}_{s',v}$ for any $2<s' < s-d$, improving the initial assumption on $f_0$.

\bigskip
We continue as in Lemma $\ref{lem:weight propagation preperation I}$ and define $\varphi_n$ to be the approximation of the delta function on $E_{vv'}$. Denoting by 
\begin{equation}\nonumber
I_n(t)=\int_{\R^d}\varphi_n(v_*)\psi_v(v_*)f(t,v_*)\:dv_*
\end{equation}
we find that, using Lemma $\ref{lem:weight propagation preperation I}$, Proposition $\ref{prop:prop of moments}$ and denoting by $C_T$ the appropriate constant from the mentioned lemma and proposition,
\begin{equation}\nonumber
\begin{split}
\frac{d}{dt}I_n(t)\leq & \left(-C_{\Phi}l_b\left(1+\abs{v}^\gamma\right)\norm{f_0}_{L^1_v}+C_{\Phi}C_\gamma l_b\norm{f_0}_{L^1_{2,v}}\right) I_n(t) 
\\ & + C_T\left\{\begin{array}{l} \displaystyle{\left(1+\abs{v}\right)^{-s'+\gamma-1}\quad \mbox{if} \quad s>d+2 \quad\mbox{and}\quad s'<s-d}\vspace{2mm}\\\vspace{2mm} \displaystyle{\left(1+\abs{v}\right)^{\gamma-3} \quad\mbox{if} \quad s\leq d+2.} \end{array} \right.
\end{split}
\end{equation}
The above differential inequality implies (see Lemma $\ref{lem:decay differential inequality}$ in Appendix) that for any 
$$\abs{v}\geq \left(\frac{2C_\gamma \norm{f_0}_{L^1_{2,v}}}{\norm{f_0}_{L^1_v}}\right)^{\frac{1}{\gamma}}-1,$$
the following holds:
\begin{equation}\nonumber
I_n(t)\leq I_n(0) + C_T \left\{\begin{array}{l} \displaystyle{\left(1+\abs{v}\right)^{-s'-1}\quad \mbox{if} \quad s>d+2 \quad\mbox{and}\quad s'<s-d}\vspace{2mm}\\\vspace{2mm} \displaystyle{\left(1+\abs{v}\right)^{-3} \quad\mbox{if} \quad s\leq d+2.} \end{array} \right.
\end{equation}
Taking $n$ to infinity along with Proposition \ref{prop:integration on hyperplane of f} yields the desired result as when 
$$\abs{v}<\left(\frac{2C_\gamma \norm{f_0}_{L^1_{2,v}}}{\norm{f_0}_{L^1_v}}\right)^{\frac{1}{\gamma}}-1$$
the following holds:
\begin{equation}\nonumber
\int_{E_{vv'}}\psi_v(v_*')f(t,v_*')\:dE(v_*') \leq \left(\frac{2C_\gamma \norm{f_0}_{L^1_{2,v}}}{ \norm{f_0}_{L^1_{v}}}\right)^{\frac{\beta}{\gamma}}\frac{1}{\left(1+\abs{v}\right)^\beta}\int_{E_{vv'}}f(t,v_*')\:dE(v_*'). 
\end{equation}
\end{proof}
\bigskip

\begin{remark}\label{rem:improved formula}
We notice that since $f_0\in L^\infty_{s,v}$
\begin{equation}\nonumber
\begin{split}
\int_{E_{vv'}}\psi_v(v_*')f_0(v_*')\:dE(v_*') &\leq \norm{f_0}_{L^\infty_{s,v}}\int_{E_{vv'}}\frac{\psi_v(v_*')}{1+\abs{v'_*}^s}\:dE(v_*') 
\\ &\leq C_{s,s^{''},d}\left(1+\abs{v}\right)^{-(s-s^{''})}
\end{split}
\end{equation}
for any $d-1<s^{''}<s$. This implies that Proposition $\ref{prop:weight propagation preperation II}$ can be rewritten as
\begin{equation}\nonumber
\begin{gathered}
\int_{E_{vv'}}\psi_v(v_*')f(t,v_*')\:dE(v_*') \leq C_T\begin{cases} \left(1+\abs{v}\right)^{-\left(s-d+1-\epsilon\right)} & s>d+2
\\ \left(1+\abs{v}\right)^{-\min\left(3,s-d+1-\epsilon\right)} & s\leq d+2
\end{cases}
\end{gathered}
\end{equation}
where we have picked $s^{''}=d-1+\epsilon$ and $s'=s-d-\epsilon$ for an arbitrary $\epsilon$ small enough. As $s-d+1-\epsilon\leq 3-\epsilon$ when $s\leq d+2$ for any $\epsilon$ we conclude that
\begin{equation}\label{eq:weight propagation preperation II improved}
\begin{gathered}
\int_{E_{vv'}}\psi_v(v_*')f(t,v_*')\:dE(v_*') \leq C_T\left(1+\abs{v}\right)^{-\left(s-d+1-\epsilon\right)} 
\end{gathered}
\end{equation}
\end{remark}
\bigskip


\subsection{Gain of regularity at infinity}\label{subsec:proofapriori}
This subsection is entirely devoted to the proof of Theorem $\ref{theo:apriori}$. 
\begin{proof}[Proof of Theorem $\ref{theo:apriori}$]
We start by noticing that the function 
\begin{equation}\nonumber
f_{l,v}(v_*)=\left(1-\psi_{\frac{v}{\sqrt{2}}}\left(v_*\right)\right)f(v_*),
\end{equation} 
where $\psi_a$ was defined in Lemma $\ref{lem:weight propagation preperation I}$, satisfies 
\begin{equation}\nonumber
f_{l,v}\left(v'\right)f_{l,v}\left(v_*'\right)=0.
\end{equation}
Indeed, as 
\begin{equation}\nonumber
\abs{v'}^2+\abs{v_*'}^2=\abs{v}^2+\abs{v_*}^2 \geq \abs{v}^2  
\end{equation}
we find that $\abs{v'}\geq \abs{v}/\sqrt{2}$ or $\abs{v_*'}\geq \abs{v}/\sqrt{2}$. This implies that 
\begin{equation}\nonumber
Q^+(f_{l,v})(v)=0
\end{equation}
and thus, by setting $f_{h,v}=f-f_{l,v}$ we have that
\begin{equation}\nonumber
\begin{split}
Q^{+}(f)(v)&\leq C_{\Phi}b_\infty\left(1+2\sup\limits_{(0,T]}\norm{f}_{L^\infty_{v}}\right)\int_{\R^d\times\mathbb{S}^{d-1}}\abs{v-v_*}^\gamma f\left(v'\right)f\left(v'_*\right)\:dv_* d\sigma
\\ & =C_{\Phi}b_\infty\left(1+2\sup\limits_{(0,T]}\norm{f}_{L^\infty_{v}}\right)\left(Q^+_{B,\gamma}\left(f_{h,v},f_{h,v}\right)+2Q^+_{B,\gamma}\left(f_{l,v},f_{h,v}\right)\right)
\\ & \leq 3C_{\Phi}b_\infty\left(1+2\sup\limits_{(0,T]}\norm{f}_{L^\infty_{v}}\right)Q^+_{B,\gamma}\left(f,f_{h,v}\right)
\end{split}
\end{equation}
where 
\begin{equation}\nonumber
Q^{+}_{B,\gamma}(f,g)=\int_{\R^d\times\mathbb{S}^{d-1}}\abs{v-v_*}^\gamma f\left(v'\right)g\left(v'_*\right)\:dv_* d\sigma.
\end{equation}
and we have used the fact that $Q^{+}(f,g)$ is symmetric under exchanging $f$ and $g$.

\bigskip
Using Carleman's representation $\eqref{eq: Carleman rep}$ for $Q^{+}_{B,\gamma}$ along with Lemma $\ref{lem:integration lemma part II}$ and Remark $\ref{rem:improved formula}$ we find that
\begin{equation}\label{eq:Q^+_B estimation}
\begin{split}
Q^{+}_{B,\gamma}(f,f_{h,v})(v) &\leq \int_{\R^d}\frac{f(v')\:dv'}{\abs{v-v'}^{d-1-\gamma}}\int_{E_{vv'}}f_{h,v}\left(v_*'\right)dE(v_*') 
\\& \leq C_T\frac{\norm{f_0}_{L^1_{2,v}}+\sup\limits_{(0,T]}\norm{f}_{L^\infty_v}}{\norm{f_0}_{L^1_{2,v}}}\left(1+\abs{v}\right)^{-\delta},
\end{split}
\end{equation} 

where $C_T >0$ is a constructive constant depending only on $d$, $s$, $f_0$, the collision kernel and $T$ and where
\begin{equation}\nonumber
\delta=\min\left(s-\gamma-\epsilon_1,\xi\right)
\end{equation}
with $\xi=s-d+1-\epsilon_1+\frac{2(1+\gamma)}{d}$ and $\epsilon_1$ to be chosen later.

\bigskip
As $f$ solves the Boltzmann-Nordheim equation, we find that it must satisfy the following inequality:
\begin{equation}\label{eq:diff inequality for f}
\begin{gathered}
\partial_t f \leq 3C_\Phi b_\infty C_T\left(1+2\sup\limits_{(0,T]}\norm{f}_{L^\infty_{v}}\right)\frac{\norm{f_0}_{L^1_{2,v}}+\sup\limits_{(0,T]}\norm{f}_{L^\infty_v}}{\norm{f_0}_{L^1_{2,v}}}\left(1+\abs{v}\right)^{-\delta}
\\ -\left(C_{\Phi}l_b\left(1+\abs{v}^\gamma\right)\norm{f}_{L^1_v}-C_{\Phi}C_\gamma l_b\norm{f}_{L^1_{2,v}}\right)f
\end{gathered}
\end{equation} 
where we have used Lemma $\ref{lem:Q^- control}$ and $\eqref{eq:Q^+_B estimation}$.
\par Solving $\eqref{eq:diff inequality for f}$ (see Lemma $\ref{lem:decay differential inequality}$ in the Appendix) with abusive notation for $C_T$, we find that for any $\tilde{\delta}\leq \delta$
\begin{equation}\label{eq:bootstrap weight equation}
\begin{split}
\norm{f(t,\cdot)}_{L^{\infty}_{\gamma+\tilde{\delta},v}} \leq & \norm{f_0}_{L^{\infty}_{\gamma+\tilde{\delta},v}} +C_T
\end{split}
\end{equation}

\bigskip
Let $s' < \bar{s}$ be given and denote by $\epsilon=\bar{s}-s'$. We shall show that the $L^\infty_{s',v}-$norm of $f$ can be bounded uniformly in time by a constant depending only on the initial data, dimension and collision kernel.
\par If $\delta \geq s-\gamma-\epsilon$ the result follows from $\eqref{eq:bootstrap weight equation}$. Else, the same equation implies that $f(t,\cdot)\in L^{\infty}_{\xi+\gamma}$ uniformly in $t\in(0,T]$. Repeating the same arguments leading to $\eqref{eq:bootstrap weight equation}$ but using Lemma $\ref{lem:integration lemma part II}$ with an $L^\infty$ weight of $s_2=\xi+\gamma$ instead of $s_2=0$ yields an improved version of $\eqref{eq:bootstrap weight equation}$ where $\sup\limits_{\tau\in(0,T]}\norm{f(\tau,\cdot)}_{L^{\infty}_{v}}$ is replaced with $\sup\limits_{\tau\in(0,T]}\norm{f(\tau,\cdot)}_{L^{\infty}_{\xi+\gamma,v}}$, and $\delta$ is replaced with
\begin{equation}\nonumber
\delta_1=\min\left(s-\gamma-\epsilon_1,\xi+\frac{d-1-\gamma}{d}(\xi+\gamma)\right).
\end{equation}
We continue by induction. Defining
\begin{equation}\nonumber
\delta_n=\min\left(s-\gamma-\epsilon_1,\xi+(\xi+\gamma)\sum_{j=1}^n\left(\frac{d-1-\gamma}{d}\right)^{j}\right).
\end{equation}
we assume that for any $\tilde{\delta}\leq \delta_n$
\begin{equation}\label{eq:bootstrap weight equation I}
\begin{gathered}
\norm{f(t,\cdot)}_{L^{\infty}_{\gamma+\tilde{\delta},v}} \leq C_T
\end{gathered}
\end{equation}
where $C_{T}$ depends only on $C_\Phi$, $b_\infty$, $l_b$, $T$ ,$\sup\limits_{t\in(0,T]}\norm{f(t,\cdot)}_{L^\infty_v}$, $\norm{f_0}_{L^\infty_{s,v}}$ ,$\norm{f_0}_{L^1_{2,v}}$, $\gamma$, $s$, $d$ and $\epsilon_1$. 
\par If $\delta_n=s-\gamma-\epsilon$ the proof is complete, else we can reiterate the proof to find that (\ref{eq:bootstrap weight equation I}) is valid for $\delta\leq \delta_{n+1}$.

\bigskip
Since
\begin{equation}\nonumber
\begin{split}
\xi+\left(\xi+\gamma\right)\sum_{j=1}^{\infty}\left(\frac{d-1-\gamma}{d}\right)^j &=\frac{d}{1+\gamma}(\xi+\gamma)-\gamma
\\&=\frac{d}{1+\gamma}\left(s-d+1+\gamma+\frac{2(1+\gamma)}{d}-\epsilon_1\right)-\gamma
\end{split}
\end{equation}
we conclude that we can bootstrap our $L^\infty$ weight up to 
$$\frac{d}{1+\gamma}\left(s-d+1+\gamma+\frac{2(1+\gamma)}{d}\right)-\epsilon \geq \bar{s}-\epsilon$$
in finitely many steps with an appropriate choice of $\epsilon_1$. This completes the proof.
\end{proof}
\bigskip

\section{Creation of moments of all order}\label{sec:moments}

This section is dedicated to proving the immediate creation of moments of all order to the Boltzmann-Nordheim equation, as long as they are in $L^{\infty}_{\mbox{\scriptsize{loc}}}\left([0,T_0),L^1_{2,v}\cap L^\infty_v\right)$. This will play an important role in the proof of the uniqueness of the solutions, as when one deals with the difference of two solutions one cannot assume any fixed sign and usual control on the gain and loss terms fails. Higher moments of the solutions will be required to give a satisfactory result, due to the kinetic kernel $\abs{v-v_*}^\gamma$.
\par The instantaneous generation of moments of all order is a well known and important result for the Boltzmann equation (see \cite{MisWen}). As for finite times, assuming no blow ups in the solution, the Boltzmann-Nordheim's gain and loss terms control, and are controlled, by the appropriate gain and loss terms of the Boltzmann equation, one can expect that a similar result would be valid for the bosonic gas evolution.
\par We would like to emphasize at this point that our proofs follow the arguments used in \cite{MisWen} with the key difference of a newly extended Povzner-type inequality, from which the rest follows. The reader familiar with the work of Mischler and Wennberg may just skim through the statements and skip to the next section of the paper.

\bigskip
The study of the generation of higher moments will be done in three steps:
\par The first subsection is dedicated to a refinement of a Povzner-type inequality \cite{MisWen,Pov} which captures the geometry of the collisions in the Boltzmann kernel. Such inequalities control the evolution of convex and concave functions under the effect of a collision, which is what we are looking for in the case of moments.
\par In the second subsection we will prove the appearance of moments for solutions to Boltzmann-Nordheim equation for bosons in $L^1_{2,v}\cap L^\infty_v$.
\par We conclude by quantifying the rate of explosion of the $(2+\gamma)^{th}$ moment as the time goes to $0$. This estimate will be of great importance in the proof of the uniqueness.
\bigskip


\subsection{An extended version of a Povzner-type inequality}\label{subsec:povzner}

The main result of this subsection is the following Povzner-type inequality for the Boltzmann-Nordheim equation.

\begin{lemma}\label{lem:povzner}
Let $b(\theta)$ be a positive bounded function and let $F\in L^\infty(\R^d\times\R^d\times\mathbb{S}^{d-1})$ be such that $F \geq a >0$.
\\Given a function $\psi$ we define.
$$K_\psi(v,v_*) = \int_{\mathbb{S}^{d-1}}F(v,v_*,\sigma)b(\theta)\left(\psi(\abs{v'_*}^2)+\psi(\abs{v'}^2)-\psi(\abs{v_*}^2)-\psi(\abs{v}^2)\right)\:d\sigma.$$
Then, denoting by $\chi(v,v_*) = 1 - \mathbf{1}_{\{\abs{v}/2<\abs{v_*}<2\abs{v}\}}$, we find the following decomposition for $K$:
$$K_\psi(v,v_*) = G_\psi(v,v_*)-H_\psi(v,v_*),$$
where $G$ and $K$ satisfy the following properties:
\begin{enumerate}
\item[(i)] If $\psi(x) = x^{1+\alpha}$ with $\alpha >0$ then
$$\abs{G(v,v_*)} \leq C_G \alpha \left(\abs{v}\abs{v_*}\right)^{1+\alpha}$$
and
$$H(v,v_*) \geq C_H\alpha\left(\abs{v}^{2+2\alpha}+\abs{v_*}^{2+2\alpha}\right)\chi(v,v_*).$$
\item[(ii)]  If $\psi(x) = x^{1+\alpha}$ with $-1<\alpha<0$ then
$$\abs{G(v,v_*)} \leq C_G \abs{\alpha} \left(\abs{v}\abs{v_*}\right)^{1+\alpha}$$
and
$$-H(v,v_*) \geq C_H \abs{\alpha}\left(\abs{v}^{2+2\alpha}+\abs{v_*}^{2+2\alpha}\right)\chi(v,v_*).$$

\item[(iii)] If $\psi$ is a positive convex function that can be written as $\psi(x) = x\phi(x)$ for a concave function $\phi$ that increases to infinity and satisfies that for any $\eps>0$ and $\alpha\in (0,1)$
$$\left(\phi(x)-\phi(\alpha x)\right)x^\eps \underset{x\to\infty}{\longrightarrow} \infty$$
 Then, for any $\eps >0$,
$$\abs{G(v,v_*)} \leq C_G \abs{v}\abs{v_*}\left(1+\phi\left(\abs{v}^2\right)\right)\left(1+ \phi\left(\abs{v_*}^2\right)\right)$$
and
$$H(v,v_*) \geq C_H \left(\abs{v}^{2-\eps}+\abs{v_*}^{2-\eps}\right)\chi(v,v_*).$$
In addition, there is a constant $C>0$ such that $\phi'(x) \leq C/(1+x)$ implies $G(v,v_*) \leq C_G\abs{v}\abs{v_*}$.
\end{enumerate}
The constants $C_G$ and $C_H$ are positive and depend only on $\alpha$, $\psi$, $\eps$, $b$, $a$ and $\norm{F}_{L^\infty_{v,v_*,\sigma}}$.
\end{lemma}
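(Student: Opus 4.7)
My plan is to follow the classical Povzner-type decomposition, applied uniformly to the three regimes of $\psi$. I begin by parametrizing: setting $u=(v+v_*)/2$, $r=\abs{v-v_*}/2$ and $E=(\abs{v}^2+\abs{v_*}^2)/2=\abs{u}^2+r^2$, the collision formulas give $\abs{v'}^2=E+x(\sigma)$ and $\abs{v'_*}^2=E-x(\sigma)$ with $x(\sigma)=\abs{v-v_*}\,u\cdot\sigma$, whereas $\abs{v}^2=E+\tilde{x}$ and $\abs{v_*}^2=E-\tilde{x}$ with the fixed parameter $\tilde{x}=(\abs{v}^2-\abs{v_*}^2)/2$; both $x$ and $\tilde{x}$ lie in $[-2r\abs{u},2r\abs{u}]$. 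Introducing $\Phi(y):=\psi(E+y)+\psi(E-y)$, the integrand of $K_\psi$ collapses to $\Phi(x)-\Phi(\tilde{x})$, reducing everything to a one-parameter question along $\sigma$.

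I then split $\mathbb{S}^{d-1}=A_\rho\cup A_\rho^c$ with $A_\rho:=\{\abs{u\cdot\sigma}\geq\rho\abs{u}\}$ for a small parameter $\rho\in(0,1)$ to be tuned, and set
\begin{equation*}
G_\psi:=\int_{A_\rho}Fb(\theta)[\Phi(x)-\Phi(\tilde{x})]\,d\sigma,\qquad -H_\psi:=\int_{A_\rho^c}Fb(\theta)[\Phi(x)-\Phi(\tilde{x})]\,d\sigma.
\end{equation*}
On the grazing cap $A_\rho$, $\abs{x}$ is close to its maximum $2r\abs{u}$, and $\abs{G}$ is controlled by cancellations between $\Phi(x)$ and $\Phi(\tilde{x})$. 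On the balanced band $A_\rho^c$, $\abs{x}\leq 2\rho r\abs{u}$ is small, so $\Phi(x)$ stays near its extremum $\Phi(0)=2\psi(E)$; when $\chi(v,v_*)=1$ one of $\abs{v},\abs{v_*}$ dominates the other and $\Phi(\tilde{x})$ differs from $2\psi(E)$ at a quantitatively controlled rate, which yields the bound on $H$.

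For case (i), $\Phi$ is convex, so Taylor on $A_\rho^c$ gives $\Phi(x)\leq 2\psi(E)(1+C\alpha\rho^2)$, while for example $\abs{v_*}\leq\abs{v}/2$ forces $E\leq\tfrac{5}{8}\abs{v}^2$ and $\Phi(\tilde{x})\geq(1+c_\alpha)\cdot 2\psi(E)$ for an explicit $c_\alpha>0$. Choosing $\rho$ small produces $\Phi(x)-\Phi(\tilde{x})\leq -c'_\alpha(\abs{v}^{2+2\alpha}+\abs{v_*}^{2+2\alpha})$ on $A_\rho^c$, and multiplying by the uniform lower bound $a\inf_{A_\rho^c}b>0$ together with the positive solid angle of $A_\rho^c$ yields the bound on $H$. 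For $G$, the elementary power inequality $(a+b)^{1+\alpha}\leq a^{1+\alpha}+b^{1+\alpha}+(1+\alpha)(a^\alpha b+ab^\alpha)$ applied with $a=\abs{v}^2/2,b=\abs{v_*}^2/2$ lets the dominant $\abs{v}^{2(1+\alpha)}+\abs{v_*}^{2(1+\alpha)}$ contributions in $\Phi(x)-\Phi(\tilde{x})$ cancel, leaving only crossed terms controlled by $C\alpha(\abs{v}\abs{v_*})^{1+\alpha}$. Case (ii) is dual: $\psi,\Phi$ are concave, the signs of $\Phi(x)-\Phi(\tilde{x})$ on the two regions flip so that the large contribution on $A_\rho^c$ is now positive, and the same cancellations control $G$.

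Case (iii) follows the same template, replacing power identities by monotonicity estimates on $\phi$: the bound on $G$ uses $\Phi(x)\leq 2(E+\abs{x})(1+\phi(E+\abs{x}))$ together with $\phi(E+\abs{x})\leq\phi(\abs{v}^2)+\phi(\abs{v_*}^2)+C$ from the concavity of $\phi$ and a crossed-term cancellation, while the bound on $H$ uses the hypothesis $(\phi(y)-\phi(\alpha y))y^\eps\to\infty$ to convert the gap $\phi(2E)-\phi(E(1+O(\rho)))$ into a lower bound of order $E^{-\eps/2}$, giving $E^{1-\eps/2}\gtrsim \abs{v}^{2-\eps}+\abs{v_*}^{2-\eps}$. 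The refined assertion $G\leq C_G\abs{v}\abs{v_*}$ under $\phi'(y)\leq C/(1+y)$ rests on the observation that this bound forces $\psi'(y)\leq\psi'(0)+2C\log(1+y)$, so that a direct application of $\Phi(x)-\Phi(\tilde{x})=\int_{\tilde{x}}^x\Phi'(s)\,ds$ absorbs the logarithmic factor into the product $\abs{v}\abs{v_*}$. \emph{The main obstacle} is tuning $\rho$ uniformly in $(v,v_*)$: this succeeds because $A_\rho^c$ is a symmetric equatorial band of $\sigma$-space around $u/\abs{u}$ with solid angle depending only on $\rho$ and $d$, and because $b$ is positive continuous on $(0,\pi)$. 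Case (iii) is additionally delicate since converting the qualitative growth hypothesis on $\phi$ into a quantitative $\eps$-loss requires coupling the choice of $\rho$ to $\eps$.
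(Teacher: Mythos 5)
Your decomposition is genuinely different from the paper's, and unfortunately it does not yield the stated bounds. The paper splits the integrand \emph{additively}, adding and subtracting $\psi(Y(\sigma))$ where $Y(\sigma)=\beta(\sigma)\abs{v}^2+(1-\beta(\sigma))\abs{v_*}^2$ is the convex combination, so that the increment $Z(\sigma)=\abs{v'}^2-Y(\sigma)$ satisfies the sharp bound $\abs{Z}\leq 4\abs{v}\abs{v_*}$; this is what produces the $(\abs{v}\abs{v_*})^{1+\alpha}$ scale for $G$, and the Jensen-gap integral $\tilde{H}_\psi=\int [b F+\ldots][\beta\psi(\abs{v}^2)+(1-\beta)\psi(\abs{v_*}^2)-\psi(Y)]d\sigma$ is isolated as a separate, sign-definite term that goes into $H$. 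Your domain split of $\mathbb{S}^{d-1}$ into $A_\rho$ and $A_\rho^c$ keeps this Jensen gap glued to the raw integrand $\Phi(x)-\Phi(\tilde{x})$ on \emph{both} pieces, and that is fatal.

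Concretely, the claimed cancellation of the dominant $\abs{v}^{2(1+\alpha)}+\abs{v_*}^{2(1+\alpha)}$ terms does not happen: $\Phi(\tilde{x})=\psi(\abs{v}^2)+\psi(\abs{v_*}^2)$ exactly, but $\Phi(x)=\psi(E+x)+\psi(E-x)$ does not share that leading order. Take $v=Ne_1$ and $v_*=e_1$ ($N$ large, $\chi=1$). Then $\tilde{x}=2r\abs{u}$ is the maximum, so $\Phi(x)-\Phi(\tilde{x})\leq 0$ everywhere, and for $\sigma$ just inside $A_\rho$ near the equator one finds $\Phi(x)-\Phi(\tilde{x})\approx(2^{-\alpha}-1)N^{2(1+\alpha)}$, with $\theta\approx\pi/2$ so $Fb$ is bounded below. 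Hence $\abs{G_\psi}\gtrsim N^{2(1+\alpha)}$, whereas the lemma demands $\abs{G_\psi}\lesssim\alpha(\abs{v}\abs{v_*})^{1+\alpha}=\alpha N^{1+\alpha}$: the ratio blows up like $N^{1+\alpha}$. Separately, your $H_\psi$ is not even non-negative: if $\abs{v}=\abs{v_*}$ with $v\neq v_*$ then $\tilde{x}=0$ and, for convex $\psi$, $\Phi(\tilde{x})=\Phi(0)$ is the \emph{minimum} of $\Phi$, so $H_\psi=\int_{A_\rho^c}Fb[\Phi(\tilde{x})-\Phi(x)]d\sigma<0$ while the lemma (with $\chi=0$) requires $H_\psi\geq 0$; this also wrecks the monotonicity in $\psi$ recorded in the remark after the lemma, which the moment-creation argument relies on. A third, smaller, problem: even the one-sided bound $G_\psi\leq C_G\alpha(\abs{v}\abs{v_*})^{1+\alpha}$ fails for $\alpha>1$ (take $v\perp v_*$, $\abs{v_*}\to 0$: the positive part near the poles of $A_\rho$ is $\sim\abs{v}^{2\alpha}\abs{v_*}^2\gg(\abs{v}\abs{v_*})^{1+\alpha}$); the paper handles this by a Young-type inequality that pushes the offending $(\abs{v}^2+\abs{v_*}^2)^{1+\alpha}$ term into $H$ after checking it is dominated by $\tilde{C}_H/2$, a step your scheme has no analogue of. To repair your approach you would essentially have to reintroduce $\psi(Y(\sigma))$ as the expansion point and fold the Jensen gap into $H$, which brings you back to the paper's decomposition.
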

\bigskip

\begin{remark}\label{remark:povzner}
The operator $H_\psi$ in the above lemma can be chosen to be monotonous in $\psi$ in the following sense: if $\psi=\psi_1-\psi_2 \geq 0$ is convex then $H_{\psi_1}-H_{\psi_2} \geq 0$. This property will prove itself extremely useful later on in the paper.
\end{remark}
\bigskip

\begin{proof}[Proof of Lemma $\ref{lem:povzner}$]
The proof follows similar arguments to the one presented in \cite{MisWen} where $F=1$ and $d=3$. Much like in the work of Mischler and Wennberg, we decompose $\abs{v'}^2$ and $\abs{v'_*}^2$ to a convex combination of $\abs{v}^2$ and $\abs{v_*}^2$ and a remainder term, and use convexity/concavity properties of $\psi$ and $\phi$.

\bigskip
We start by recalling the definition of $v',v'_*$ and $\cos \theta$:
$$\left\{ \begin{array}{rl}&\displaystyle{v' = \frac{v+v_*}{2} +  \frac{|v-v_*|}{2}\sigma} \vspace{2mm} \\ \vspace{2mm} &\displaystyle{v' _*= \frac{v+v_*}{2}  -  \frac{|v-v_*|}{2}\sigma} \end{array}\right., \: \quad \quad \mbox{cos}\:\theta = \left\langle \frac{v-v_*}{\abs{v-v_*}},\sigma\right\rangle .$$
One can see that
\begin{equation*}
\begin{split}
\abs{v'}^2 =& \abs{v}^2\left[\frac{1}{2} + \frac{1}{2}\left\langle \frac{v-v_*}{\abs{v-v_*}} , \sigma\right\rangle\right] +\abs{v_*}^2\left[\frac{1}{2} - \frac{1}{2}\left\langle \frac{v-v_*}{\abs{v-v_*}} , \sigma\right\rangle\right]
\\& +\left[\frac{\abs{v-v_*}}{2}\langle v+v_*,\sigma\rangle - \frac{1}{2}\left\langle \frac{v-v_*}{\abs{v-v_*}},\sigma\right\rangle \left(\abs{v}^2-\abs{v_*}^2\right)\right].
\end{split}
\end{equation*}
$$=\beta(\sigma)\abs{v}^2+(1-\beta(\sigma))\abs{v_*}^2+Z(\sigma)=Y(\sigma)+Z(\sigma),$$
where 
\begin{eqnarray}
\beta(\sigma) &=& \frac{1}{2} + \frac{1}{2}\left\langle \frac{v-v_*}{\abs{v-v_*}} , \sigma\right\rangle \quad \in [0,1], \label{betasigma}
\\ Y(\sigma) &=& \beta(\sigma) \abs{v}^2+ (1-\beta(\sigma))\abs{v_*}^2, \label{Ysigma}
\\ Z(\sigma) &=& \frac{\abs{v-v_*}}{2}\langle v+v_*,\sigma\rangle - \frac{1}{2}\left\langle \frac{v-v_*}{\abs{v-v_*}},\sigma\right\rangle \left(\abs{v}^2-\abs{v_*}^2\right) \label{Zsigma} 
\end{eqnarray}
Similarly, one has
$$\abs{v'_*}^2 = Y(-\sigma) + Z(-\sigma).$$
As $Z$ is an odd function in $\sigma$, we can split the integration over $\mathbb{S}^{d-1}$ to the domains where $Z$ is positive and negative. By changing $\sigma$ to $-\sigma$ and adding and subtracting the term $\psi(Y(\sigma))+\psi(Y(-\sigma))$, as well as using the fact that $\beta(-\sigma)+\beta(\sigma)=1$ we conclude that 
\begin{equation}\label{decompositionKpsi}
\begin{split}
K_\psi =&\quad \int_{\sigma:\:Z(\sigma)\geq 0}\left[b(\theta)F(\sigma)+b(\pi-\theta)F(-\sigma)\right] \left[\psi(Y+Z) -\psi(Y)\right]\:d\sigma
\\&+\int_{\sigma:\:Z(\sigma)\geq 0}\left[b(\theta)F(\sigma)+b(\pi-\theta)F(-\sigma)\right] \left[\psi(Y(-\sigma)-Z(\sigma)) -\psi(Y(-\sigma))\right]\:d\sigma
\\&-\int_{\mathbb{S}^{d-1}}\left[b(\theta)F(\sigma)+b(\pi-\theta)F(-\sigma)\right] \left[\beta\psi(\abs{v}^2)+(1-\beta)\psi(\abs{v_*}^2)-\psi(Y)\right]\:d\sigma,
\end{split}
\end{equation}
We define 
\begin{equation}\label{tildeHpsi}
\tilde{H}_\psi=\int_{\mathbb{S}^{d-1}}\left[b(\theta)F(\sigma)+b(\pi-\theta)F(-\sigma)\right]  \left[\beta\psi(\abs{v}^2)+(1-\beta)\psi(\abs{v_*}^2)-\psi(Y)\right]\:d\sigma
\end{equation}
and notice that due to the definition of $Y(\sigma)$ and the convexity or concavity of $\psi$, $\tilde{H}_{\psi}$ always has a definite sign. As such
\begin{equation}
\tilde{H}_{\psi} \geq a\int_{\mathbb{S}^{d-1}}\left[b(\theta)+b(\pi-\theta)\right] \left[\beta\psi(\abs{v}^2)+(1-\beta)\psi(\abs{v_*}^2)-\psi(Y)\right]\:d\sigma,
\end{equation}
when $\psi$ is convex and 
\begin{equation}
-\tilde{H}_{\psi} \geq \norm{F}_{L^\infty_{v,v_*,\sigma}}\int_{\mathbb{S}^{d-1}}\left[b(\theta)+b(\pi-\theta)\right] \left[\beta\psi(\abs{v}^2)+(1-\beta)\psi(\abs{v_*}^2)-\psi(Y)\right]\:d\sigma,
\end{equation}
when $\psi$ is concave. At this point the proof of $(i)$ and $(ii)$ for $H_{\psi}$ follows the arguments presented in \cite{MisWen}.

\bigskip
We now turn our attention to the remaining two integrals in \eqref{decompositionKpsi}. Due to the positivity of $b$ and $F$, and the monotonicity of $\psi$ both integrals will be dealt similarly and we restrict our attention to the first. One sees that
\begin{equation}\label{ineqGpsi}
\begin{split}
&\abs{\int_{\sigma:\:Z(\sigma)\geq 0}\left[b(\theta)F(\sigma)+b(\pi-\theta)F(-\sigma)\right] \left[\psi(Y+Z) -\psi(Y)\right]\:d\sigma} 
\\&\quad\quad\quad\leq 2b_\infty \norm{F}_{L^\infty_{v,v_*,\sigma}}\int_{\sigma:\:Z(\sigma)\geq 0}\left[\psi(Y+Z) -\psi(Y)\right]\:d\sigma.
\end{split}
\end{equation}
The rest of the proof will rely on a careful investigation of the integrand. To do so, we start by noticing that
\begin{equation}\label{controlYZsigma}
\begin{split}
2\sqrt{\beta}\sqrt{1-\beta}\abs{v}\abs{v_*} & \leq Y(\sigma) \leq \abs{v}^2+\abs{v_*}^2 \\
\abs{Z(\sigma)} & \leq 4 \abs{v}\abs{v_*}.
\end{split}
\end{equation}
\textbf{Case $(i)$:} In that case we have for all $\sigma$ on $\mathbb{S}^{d-1}$ such that $Z(\sigma)\geq 0$
$$\psi(Y+Z) -\psi(Y)=(1+\alpha)Z\left(C(\sigma)\right)^\alpha\leq (1+\alpha)Z\left(Y+Z\right)^\alpha .$$
As $Y+Z = \abs{v'}^2 \leq \abs{v}^2+\abs{v_*}^2$ we find that
\begin{eqnarray*}
\psi(Y+Z) -\psi(Y) &\leq& 4 \abs{v}\abs{v_*}\left(\abs{v}^2+\abs{v_*}^2\right)^\alpha
\\&\leq& (1+\alpha)\left\{\begin{array}{l}\displaystyle{4^{1+2\alpha}\left(\abs{v}\abs{v_*}\right)^{1+\alpha}\quad\mbox{if}\quad \frac{\abs{v}}{2} \leq \abs{v_*}\leq 2\abs{v}}\vspace{2mm}\\\vspace{2mm}\displaystyle{\frac{4}{\eps^{\frac{1}{\alpha}}}\left(\abs{v}^2+\abs{v_*}^2\right)^{1+\alpha} + 4\eps\left(\abs{v}\abs{v_*}\right)^{1+\alpha} \quad\mbox{otherwise.}}\end{array}\right.
\end{eqnarray*}
where we have used H\"older inequality in the second term. As $\epsilon$ is arbitrary, one can choose it such that
$$\frac{8(1+\alpha) b_\infty \norm{F}_{L^\infty_{v ,v_*. \sigma}}}{\eps^{\frac{1}{\alpha}}} \leq \frac{\tilde{C}_H}{2},$$
where $\tilde{C}_H$ is the constant associated to $\tilde{H}_\psi$. Defining
$$H_\psi=\tilde{H}_\psi+\frac{8(1+\alpha ) b_\infty \norm{F}_{L^\infty_{v ,v_*. \sigma}}}{\eps^{\frac{1}{\alpha}}}\left(\abs{v}^2+\abs{v_*}^2\right)^{1+\alpha} \chi(v,v_*)$$
and $G_\psi$ to be what remains, the proof is completed for this case.

\bigskip
\textbf{Case $(ii)$:} In that case we have for all $\sigma$ on $\mathbb{S}^{d-1}$ such that $Z(\sigma)\geq 0$
$$\psi(Y+Z) -\psi(Y)=(1+\alpha)Z\left(C(\sigma)\right)^\alpha\leq (1+\alpha)ZY^\alpha .$$
Using $\eqref{controlYZsigma}$ we find that
\begin{equation}\label{case2}
\psi(Y+Z) -\psi(Y) \leq C \left(\abs{v}\abs{v_*}\right)^{1+\alpha}\frac{1}{\left[\beta(\sigma)(1-\beta(\sigma))\right]^{\alpha/2}}.
\end{equation}
Since
$$\int_{\mathbb{S}^{d-1}}\frac{d\sigma}{\left[\beta(\sigma)(1-\beta(\sigma))\right]^{\alpha/2}}=C_d \int_{0}^{\frac{\pi}{2}} \frac{\sin^{d-2}\theta\cos^{d-2}\theta}{\left(\cos \theta \sin \theta \right)^{\alpha}}d\theta<\infty$$
This yields the desired result with the choice $H_\psi=\tilde{H}_\psi$ and $G_\psi$ the remaining terms.

\bigskip
\textbf{Case $(iii)$:} This case will be slightly more complicated and we will deal with the first two integrations in \eqref{decompositionKpsi} separately.\\
We start with the second integral. As $Z\geq 0$ in the domain of integration and $Y\geq 0$ always, we find that
$$\abs{\int_{\sigma:\:Z(\sigma)\geq 0}\left[b(\theta)F(\sigma)+b(\pi-\theta)F(-\sigma)\right] \left[\psi(Y(-\sigma)-Z(\sigma)) -\psi(Y(-\sigma))\right]\:d\sigma}$$
$$\leq 2b_\infty \norm{F}_{L^\infty_{v,v_*,\sigma}}\int_{\sigma:\:Z(\sigma)\geq 0} Z(\sigma)\psi'(Y(-\sigma))\:d\sigma$$
where we have used the fact that $\psi$ is convex. As $\psi'(x)=\phi(x)+x\phi'(x)$ and 
$$\phi(x)-\phi(0) \geq x \phi'(x)$$
when $x>0$, due to the concavity of $\phi$, we have that
\begin{equation}\label{eq:pov Y-Z}
\begin{gathered}
\abs{\int_{\sigma:\:Z(\sigma)\geq 0}\left[b(\theta)F(\sigma)+b(\pi-\theta)F(-\sigma)\right] \left[\psi(Y(-\sigma)-Z(\sigma)) -\psi(Y(-\sigma))\right]\:d\sigma}\\
\leq 16b_\infty \norm{F}_{L^\infty_{v,v_*,\sigma}} \abs{v}\abs{v_*}\left(\int_{\sigma:\:Z(\sigma)\geq 0}\phi(Y(-\sigma))\:d\sigma\right)
\end{gathered}
\end{equation}
where we have used $\eqref{controlYZsigma}$ and the positivity of $\phi$.\\
to deal with the first integral in $\eqref{decompositionKpsi}$ we notice that for $Z\geq 0$
$$\abs{\psi(Y(\sigma)+Z(\sigma))-\psi(Y(\sigma))}\leq Y(\sigma)Z(\sigma)\phi'(Y(\sigma))+Z(\sigma)\phi(Y(\sigma)+Z(\sigma))$$
where we have used to concavity of $\phi$. Like before we can conclude that
\begin{equation}\label{eq:pov Y+Z}
\begin{gathered}
\abs{\int_{\sigma:\:Z(\sigma)\geq 0}\left[b(\theta)F(\sigma)+b(\pi-\theta)F(-\sigma)\right] \left[\psi(Y(\sigma)+Z(\sigma)) -\psi(Y(\sigma))\right]\:d\sigma}\\
\leq 8b_\infty \norm{F}_{L^\infty_{v,v_*,\sigma}} \abs{v}\abs{v_*}\left(\int_{\sigma:\:Z(\sigma)\geq 0}\left(\phi(Y(\sigma))+\phi(Y(\sigma)+Z(\sigma))\right)\:d\sigma\right).
\end{gathered}
\end{equation}
Adding \eqref{eq:pov Y-Z} and \eqref{eq:pov Y+Z} and using the positivity and concavity of $\phi$ we find that by choosing $H_\psi=\tilde{H}_{\psi}$ we have that
\begin{equation*}
\begin{split}
&G_\psi(v,v_*)  \\
&\quad \leq 16b_\infty \norm{F}_{L^\infty_{v,v_*,\sigma}}  \abs{v}\abs{v_*}\left(\phi\left(\int_{\mathbb{S}^{d-1}}Y(\sigma)d\sigma\right)+\phi\left(\int_{\mathbb{S}^{d-1}}\left(Y(\sigma)+Z(\sigma)\right)d\sigma\right)\right)\\
&\quad =32 b_\infty \norm{F}_{L^\infty_{v,v_*,\sigma}}   \abs{v}\abs{v_*}\phi\left(\frac{\abs{v}^2+\abs{v_*}^2}{2}\right)\\
&\quad \leq 32 b_\infty \norm{F}_{L^\infty_{v,v_*,\sigma}}   \abs{v}\abs{v_*}\max\left(\phi(\abs{v}^2),\phi(\abs{v_*}^2)\right)
\end{split}
\end{equation*}
which completes the estimation for $G_\psi$ in the general case. 
\par Property $(iii)$ for $H_{\psi}$ is proved along the same lines of the proof of Mischler and Wennberg in \cite{MisWen}, as well as the second part of the case.

\end{proof}

\bigskip


\subsection{\textit{A priori} estimate on the moments of a solution}\label{subsec:moments}

The immediate appearance of moments of any order is characterized by the following proposition.

\bigskip
\begin{prop}\label{prop:moments}
Let $f$ be a non-negative solution of $\eqref{BN}$ in $L^{\infty}_{\mbox{\scriptsize{loc}}}\left([0,T_0),L^1_{2,v}\cap L^\infty_v\right)$, with initial data $f_0$, satisfying the conservation of mass and energy.
\\ If $\gamma>0$ then for all for all $\alpha>0$ and for all $0<T<T_0$, $$\int_{\R^d} \abs{v}^{\alpha}f(t,v)\:dv \in L^{\infty}_{\mbox{\scriptsize{loc}}}\left([T,T_0)\right).$$
\end{prop}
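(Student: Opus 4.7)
The plan is to adapt the Mischler--Wennberg strategy for moment creation in the classical Boltzmann equation, leveraging the $F$-weighted Povzner-type inequality of Lemma \ref{lem:povzner} to absorb the trilinear correction of $\eqref{BN}$.

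Fix $\alpha>0$ and $T\in(0,T_0)$. Since $f\in L^\infty([0,T]\times\R^d)$, the weight $F(v,v_*,\sigma)=1+f'+f'_*$ satisfies $1\leq F\leq 1+2\sup_{[0,T]}\norm{f}_{L^\infty_v}$, so Lemma \ref{lem:povzner} applies on $[0,T]$ with $a=1$ and uniform-in-time constants. First I would test the weak formulation of Lemma \ref{lem:integralQ} against the convex-then-constant truncation $\psi_{N,\alpha}(v)=\min(|v|^{2+2\alpha},N)$ (which retains a Povzner decomposition, cf. Remark \ref{remark:povzner}); writing $M_{2+2\alpha}^N(t)=\int\psi_{N,\alpha}f\,dv$, this yields
\begin{equation*}
\frac{d}{dt}M^N_{2+2\alpha}(t)=\frac{C_\Phi}{2}\int_{\R^d\times\R^d}|v-v_*|^\gamma f f_* K_{\psi_{N,\alpha}}(v,v_*)\,dv\,dv_*.
\end{equation*}
By Lemma \ref{lem:povzner}$(i)$, decompose $K_{\psi_{N,\alpha}}=G-H$ with $|G|\leq C_G\alpha(|v||v_*|)^{1+\alpha}$ and $H\geq C_H\alpha(|v|^{2+2\alpha}+|v_*|^{2+2\alpha})\chi(v,v_*)$. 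By symmetry and the inequality $|v-v_*|\geq|v|/2$ on $\{|v_*|<|v|/2\}$, combined with the Chebyshev bound $\int_{|v_*|\geq|v|/2}f\,dv_*\leq 4M_2/|v|^2$ (so that the complementary $v_*$-mass is at least $M_0/2$ whenever $|v|\geq R_0(M_0,M_2)$), the loss is bounded below by $c\alpha M_0 M_{2+2\alpha+\gamma}(t)-c'\alpha R_0^{2+2\alpha+\gamma}$. For the gain, AM--GM on $(|v||v_*|)^{1+\alpha}\leq\tfrac12(|v|^{2+2\alpha}+|v_*|^{2+2\alpha})$ together with $|v-v_*|^\gamma\leq(|v|+|v_*|)^\gamma$ produce a bound by $C(\norm{f}_{L^\infty_v},M_0,M_2)M_{2+2\alpha+\gamma}(t)$ plus lower-order moments; choosing the Povzner cut-off radii large enough ensures the resulting gain constant is strictly smaller than the loss constant.

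Combining the two estimates produces, uniformly in $N$, a Bernoulli-type differential inequality
\begin{equation*}
\frac{d}{dt}M^N_{2+2\alpha}(t)\leq A_\alpha(T)-B_\alpha(T)M_{2+2\alpha+\gamma}(t)\leq A_\alpha(T)-B'_\alpha(T)M^N_{2+2\alpha}(t)^{1+\eta},\qquad \eta:=\frac{\gamma}{2+2\alpha}>0,
\end{equation*}
where the second step uses the Jensen-type interpolation $M_{2+2\alpha+\gamma}\geq M_0^{-\gamma/(2+2\alpha)}M_{2+2\alpha}^{1+\gamma/(2+2\alpha)}$ and the pointwise bound $M^N_{2+2\alpha}\leq M_{2+2\alpha}$. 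Such an ODI admits a \emph{universal}, initial-data-free upper envelope $Y_\infty(t)=(B'_\alpha\eta t)^{-1/\eta}+(A_\alpha/B'_\alpha)^{1/(1+\eta)}$, so that $M^N_{2+2\alpha}(t)\leq Y_\infty(t)$ on $(0,T]$. Letting $N\to\infty$ by monotone convergence gives $M_{2+2\alpha}(t)\leq Y_\infty(t)$ on $(0,T]$ for every $\alpha>0$, which is the claimed conclusion.

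The hardest part will be the quantitative comparison between the Povzner constants $C_G$ and $C_H$: one must check that, after the AM--GM splitting, the effective gain constant is strictly smaller than $C_HM_0/2$ on $\{|v|\geq R_0\}$, and that the comparable-speeds region $\chi=0$ (where the loss contributes nothing) can still be absorbed via $|v-v_*|\lesssim|v|$ and the bounded mass constraint. The trilinear factor $1+f'+f'_*$ enters only multiplicatively, through the constants $C_G, C_H$ of Lemma \ref{lem:povzner} via $\norm{F}_\infty$, and is therefore harmless as long as $\norm{f}_{L^\infty_v}$ stays locally bounded, which is precisely the standing hypothesis. The secondary technical point --- differentiating $M_{2+2\alpha}(t)$ when $M_{2+2\alpha}(0)$ may be infinite --- is resolved by the truncation $\psi_{N,\alpha}$ and monotone convergence.
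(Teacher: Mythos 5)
Your overall plan---derive a Bernoulli-type differential inequality for $M_{2+2\alpha}$ directly and use its initial-data-free upper envelope---is a legitimate alternative to the paper's route (the paper instead builds a chain of times $T_n\downarrow 0$ and climbs moments in increments of $\gamma/2$ via Lemmas \ref{lem:moments1}--\ref{lem:moments2}; there is no Bernoulli ODI and no universal envelope in the paper). If it worked, your approach would even be slightly stronger because it gives an explicit rate $\sim t^{-(2+2\alpha)/\gamma}$. But as written the argument has a genuine gap at the crucial step where you compare the gain and loss constants.

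The problem is the AM--GM bound $(|v||v_*|)^{1+\alpha}\le\frac12(|v|^{2+2\alpha}+|v_*|^{2+2\alpha})$. This destroys exactly the structure that makes the Povzner inequality useful. Lemma \ref{lem:povzner}$(i)$ gives $|G|\leq C_G\alpha(|v||v_*|)^{1+\alpha}$ with a \emph{product} weight, not a sum, and this is not an accident: the whole point of the proof of case (i) (the H\"older splitting with the free parameter $\eps$) is to push the off-diagonal $\left(|v|^2+|v_*|^2\right)^{1+\alpha}$ contribution into $H$ so that only the product survives in $G$. Once you AM--GM the product back into a sum, your gain contains a term of size $C_G\,M_0\,M_{2+2\alpha+\gamma}$, which is of exactly the same order as the loss $C_H\,M_0\,M_{2+2\alpha+\gamma}$. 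There is then no mechanism to make $C_G<C_H$: both come from Lemma \ref{lem:povzner}, and in fact for the Boltzmann--Nordheim case $C_G$ grows like $\norm{F}_{L^\infty_{v,v_*,\sigma}}=1+2\sup\norm{f}_{L^\infty_v}$ while $C_H$ depends on the lower bound $a=1$, so the unfavourable case $C_G\gg C_H$ is generic. Your remark about ``choosing the Povzner cut-off radii large enough'' does not cure this: the cut-off $\chi=1-\mathbf{1}_{\{|v|/2<|v_*|<2|v|\}}$ in Lemma \ref{lem:povzner} is fixed, and shrinking the $\eps$ inside the proof of the Lemma only \emph{increases} the constant that $\tilde C_H$ must absorb, so $C_G$ cannot be driven below $C_H$ for free. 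The correct move, which is what the paper (and Mischler--Wennberg) uses, is to keep the product structure: integrating $(|v||v_*|)^{1+\alpha}$ against $|v-v_*|^\gamma ff_*\leq(|v|^\gamma+|v_*|^\gamma)ff_*$ gives something like $M_{1+\alpha}M_{1+\alpha+\gamma}$, which by Jensen/H\"older interpolation is controlled by $\eps\,M_0 M_{2+2\alpha+\gamma}+C_\eps(\text{lower moments})$ with $\eps$ as small as you please. This is the smallness that your Bernoulli coefficient $B'_\alpha(T)>0$ actually needs, and it cannot be recovered after AM--GM.

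A secondary, more technical, issue: the truncation $\psi_{N,\alpha}(v)=\min(|v|^{2+2\alpha},N)$ is \emph{not} convex in $y=|v|^2$ (it is concave past the threshold), so Lemma \ref{lem:povzner} cannot be applied to it, and Remark \ref{remark:povzner} does not rescue you---with $\psi_1=y^{1+\alpha}$ and $\psi_2=\min(y^{1+\alpha},N)$, the remark gives $H_{\psi_1}\geq H_{\psi_2}$, which is the wrong direction (you want a positive lower bound on $H_{\psi_2}$, not an upper one). The standard and correct truncation, which is what the paper uses in Lemma \ref{lem:moments1}, is the convex linear-continuation: $\tilde\phi_N(y)=y^{1+\alpha}$ for $y\leq N$ and $\tilde\phi_N(y)=(1+\alpha)N^\alpha(y-N)+N^{1+\alpha}$ for $y>N$, which is genuinely convex, increases monotonically to $y^{1+\alpha}$, differs from the affine $p_N$ by a compactly supported function so that the conservation laws kill the linear tail, and yields $\int\tilde\phi_N(|v|^2)f_0\,dv<\infty$ since $f_0\in L^1_{2,v}$. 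With this truncation and the product-preserving gain estimate your Bernoulli ODI plan can be carried through; with the ones you wrote it cannot.
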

\bigskip

The proof of that proposition is done by induction and requires two lemmas. The first lemma proves a certain control of the $L^1_{2+\gamma/2,v}$-norm and will be the base case for the induction, while the second lemma will prove an inductive bound on the moments.
\par In what follows we will rely heavily on the following technical lemma, proved in the appendix of \cite{MisWen}:

\bigskip
\begin{lemma}\label{lem:technical MisWen lemma}
Let $f_0\in L^1_{2,v}$. Then, there exists a positive convex function $\psi$ defined on $\R^+$ such that $\psi(x)=x\phi(x)$ with $\phi$ a concave function that increases to infinity and satisfies that for any $\eps>0$ and $\alpha\in(0,1)$ 
$$\left(\phi(x)-\phi(\alpha x)\right)x^\eps \underset{x\to\infty}{\longrightarrow} \infty,$$
and such that
$$\int_{\R^d}\psi\left(\abs{v}^2\right)f_0(v)\:dv <\infty.$$
\end{lemma}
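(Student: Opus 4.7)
The plan is to prove this lemma by a de la Vallée-Poussin–type construction, tailored to produce $\psi$ with the specific product structure $\psi(x)=x\phi(x)$.

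First, I would translate the integrability condition. Define the Borel measure $\nu$ on $\R^+$ by $\int g\,d\nu:=\int_{\R^d}g(|v|^2)|v|^2 f_0(v)\,dv$; the hypothesis $f_0\in L^1_{2,v}$ is exactly the finiteness of $\nu$. Then $\int\psi(|v|^2)f_0\,dv=\int\phi\,d\nu$, so the task reduces to constructing a concave $\phi:\R^+\to\R^+$ increasing to infinity, with $\phi\in L^1(\nu)$, such that $x\phi(x)$ is convex and $(\phi(x)-\phi(\alpha x))x^\eps\to\infty$ for every $\eps>0$ and $\alpha\in(0,1)$.

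Second, I would build $\phi$ from the tail $G(r):=\nu([r,\infty))$, which is nonincreasing with $G(r)\to 0$. Pick an increasing sequence $r_n\uparrow\infty$ such that $G(r_n)\leq 2^{-n}$ and $r_{n+1}-r_n\geq r_n-r_{n-1}$ (enforced inductively by enlarging $r_n$ if necessary, using $G\to 0$), and set $\phi\equiv 0$ on $[0,r_1]$, then piecewise linear through the points $(r_n,n)_{n\geq 1}$. The slopes $\phi'_n=1/(r_{n+1}-r_n)$ form a nonincreasing sequence, so $\phi$ is concave; $\phi(r_n)=n\to\infty$ gives divergence; and
\[
\int\phi\,d\nu=\sum_{n\geq 1}\int_{r_n}^{r_{n+1}}\phi\,d\nu\leq\sum_{n\geq 1}(n+1)\,G(r_n)\leq\sum_{n\geq 1}(n+1)\,2^{-n}<\infty.
\]
Mollifying and taking $r_n\sim\exp(2^n)$ yields a smooth concave $\phi$ with $\phi(x)\sim\log\log x$ at infinity (so $\phi'(x)\sim 1/(x\log x)$), retaining all properties above.

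Third, I would verify the two remaining conditions. By concavity and the mean value theorem, $\phi(x)-\phi(\alpha x)\geq (1-\alpha)\,x\,\phi'(x)$; with $\phi'(x)\sim 1/(x\log x)$ this gives $(\phi(x)-\phi(\alpha x))\,x^\eps\gtrsim x^\eps/\log x\to\infty$ for every $\eps>0$ and $\alpha\in(0,1)$. For convexity of $\psi=x\phi$, direct differentiation gives $\psi''(x)=2\phi'(x)+x\phi''(x)\sim(\log x-1)/(x\log^2 x)\geq 0$ at infinity, and near the origin the choice $\phi\equiv 0$ on $[0,r_1]$ ensures convexity trivially (the one-sided slopes of $\psi$ at $r_1$ jump upward from $0$ to $r_1\phi'_1>0$).

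The main obstacle is reconciling three competing constraints: $\phi$ concave, $x\phi(x)$ convex, and $\phi'$ decaying strictly slower than any polynomial. The iterated-logarithmic scale $\phi(x)\sim\log\log x$ sits precisely at the intersection of these constraints, and the construction above can be tuned via the adaptive choice of the breakpoints $r_n$ to reach that regime while absorbing the tail behavior of $G$ for the specific initial datum $f_0$.
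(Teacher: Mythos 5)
The paper does not prove this lemma; it is cited directly from the appendix of Mischler--Wennberg \cite{MisWen}, so there is no in-paper proof to compare against. Your general strategy — reducing via the pushforward measure $\nu$ and constructing $\phi$ adaptively from the tail $G(r)=\nu([r,\infty))$ in de la Vall\'ee-Poussin style — is the right spirit. But as written the construction has several gaps that are not cosmetic.

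First, $\phi$ is not concave in your construction. If $\phi\equiv 0$ on $[0,r_1]$ and $\phi(r_1)=1$ you have a discontinuity; if instead $\phi$ is piecewise linear from $(r_1,0)$ onward, the slope jumps up from $0$ to $1/(r_2-r_1)>0$ at $r_1$, which is a convex kink. The fix (make $\phi$ strictly increasing from $x=0$ with the largest slope on the first segment) is easy, but the next issue is structural: a piecewise-linear concave $\phi$ can never give a convex $\psi=x\phi(x)$. A direct computation at any breakpoint $r_n$ gives
\[
\psi'(r_n^+)-\psi'(r_n^-)=r_n\left(\phi'(r_n^+)-\phi'(r_n^-)\right)\leq 0,
\]
so $\psi'$ jumps \emph{downward}, i.e.\ $\psi$ has concave corners. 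This is not fixed by mollification: a smoothing kernel regularizes the corner but does not change the sign of the net change of $\psi'$ across it, so $\psi''<0$ survives on a small interval near each $r_n$.

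Second, the step ``mollifying and taking $r_n\sim\exp(2^n)$'' abandons the adaptivity that was the whole point. The constraint $G(r_n)\leq 2^{-n}$ forces $r_n$ to be at least some sequence dictated by $\nu$; for a measure $\nu$ whose tail decays arbitrarily slowly the required $r_n$ can grow much faster than $\exp(2^n)$, in which case $\phi$ grows \emph{slower} than $\log\log$ and the explicit estimates $\phi'(x)\sim 1/(x\log x)$, $\psi''(x)\sim(\log x-1)/(x\log^2 x)$ simply do not apply. In particular, your verification of the third condition uses $\phi'(x)x^{1+\eps}\to\infty$; but for the piecewise-linear $\phi$ one has $\phi'(x)=1/(r_{n+1}-r_n)$ on $[r_n,r_{n+1}]$, and if $G$ forces $r_{n+1}\gg r_n^{1+\eps}$ then $\phi'(r_n)r_n^{1+\eps}\to 0$, so the mean-value-theorem lower bound fails and the product $(\phi(x)-\phi(\alpha x))x^\eps$ need not tend to infinity (e.g.\ take $x=2r_n$, $\alpha=1/2$). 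The crux of the lemma is precisely the simultaneous reconciliation of integrability against an arbitrary finite $\nu$, concavity of $\phi$, convexity of $x\phi(x)$, and the sub-polynomial decay of $\phi(x)-\phi(\alpha x)$; fixing a single comparison function $\log\log x$ handles the last three but not the first, and the adaptive piecewise-linear $\phi$ handles the first but not the others. A correct proof (as in \cite{MisWen}) prescribes $\phi'$ directly as a carefully chosen nonincreasing function such that $x\mapsto x^2\phi'(x)$ is nondecreasing (which is exactly equivalent to $\psi''\geq 0$ together with $\phi''\leq 0$), and tunes its decay to the tail $G$ while keeping it sub-polynomial; the piecewise-linear ansatz cannot be made to satisfy all these at once.
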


In what follows we will denote by $\psi$ and $\phi$, the associated functions given by Lemma \ref{lem:technical MisWen lemma} for the initial data $f_0$.

\bigskip
\begin{lemma}\label{lem:moments1}
Let $f$ satisfy the conditions of Proposition \ref{prop:moments}. Then for any $T$ in $[0,T_0)$ there exist $c_T, C_T >0$ such that for all $0\leq t \leq T$,
\begin{eqnarray}
&&\int_{\R^d}f(t,v)\psi\left(\abs{v}^2\right) \:dv + c_T\int_0^t \int_{\R^d}f(\tau,v)\left[M_{2+\frac{\gamma}{2}}(\tau)+ \psi\left(\abs{v}^2\right)\right]\:dvd\tau \nonumber
\\&&\quad\quad\quad \leq \int_{\R^d}f_0(v)\psi\left(\abs{v}^2\right)\:dv + C_T t. \label{initialmoments}
\end{eqnarray}
\end{lemma}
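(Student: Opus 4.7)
The plan is to derive a differential inequality for $t\mapsto\int f(t,v)\psi(\abs{v}^2)\,dv$ and then integrate it in time. I would start from the weak formulation of Lemma \ref{lem:integralQ} with $\Psi(v)=\psi(\abs{v}^2)$, writing
$$\frac{d}{dt}\int_{\R^d} f\psi(\abs{v}^2)\,dv = \frac{C_\Phi}{2}\iint_{\R^d\times\R^d}\abs{v-v_*}^\gamma f f_*\, K_\psi(v,v_*)\,dv\,dv_*,$$
where $K_\psi$ is the object from the Povzner-type Lemma \ref{lem:povzner} with the choice $F(v,v_*,\sigma)=1+f(v')+f(v'_*)$. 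On any $[0,T]\subset[0,T_0)$ the assumption $f\in L^\infty_{\mbox{\scriptsize{loc}}}([0,T_0),L^\infty_v)$ gives $1\leq F\leq 1+2\sup_{[0,T]}\norm{f}_{L^\infty_v}<\infty$, so Lemma \ref{lem:povzner}(iii) applies and decomposes $K_\psi=G_\psi-H_\psi$.

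I would then bound the gain contribution. Assuming the $\psi$ provided by Lemma \ref{lem:technical MisWen lemma} has been chosen to satisfy the additional property $\phi'(x)\leq C/(1+x)$ (which is compatible with the lower growth demand on $\phi(x)-\phi(\alpha x)$ and amounts to taking $\phi$ of logarithmic type), the refinement in Lemma \ref{lem:povzner}(iii) gives $\abs{G_\psi}\leq C_G\abs{v}\abs{v_*}$. Since $\abs{v-v_*}^\gamma\leq \abs{v}^\gamma+\abs{v_*}^\gamma$ for $\gamma\in[0,1]$, this yields
$$\iint\abs{v-v_*}^\gamma ff_*\, G_\psi\,dv\,dv_*\leq 2C_G\, M_1(f)\,M_{1+\gamma}(f),$$
which by interpolation between $M_0(f)=M_0(f_0)$ and $M_2(f)=M_2(f_0)$ is bounded by a constant depending only on $f_0$.

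The delicate step is the lower bound on the loss contribution. Lemma \ref{lem:povzner}(iii) yields $H_\psi\geq C_H(\abs{v}^{2-\eps}+\abs{v_*}^{2-\eps})\chi(v,v_*)$ for any $\eps>0$. Using the symmetry $v\leftrightarrow v_*$ to reduce to the cone $\{\abs{v_*}<\abs{v}/2\}\subset\{\chi=1\}$, on which $\abs{v-v_*}\geq\abs{v}/2$, and applying Chebyshev to the conserved energy to obtain $\int_{\abs{v_*}<\abs{v}/2}f(v_*)\,dv_*\geq M_0(f_0)/2$ for $\abs{v}\geq R_0(M_0(f_0),M_2(f_0))$, one arrives at
$$\iint\abs{v-v_*}^\gamma ff_*\, H_\psi\,dv\,dv_*\geq c\,M_{2+\gamma-\eps}(f(t)) - C.$$
Choosing $\eps\leq\gamma/2$ and invoking the pointwise inequalities $\abs{v}^{2+\gamma/2}\leq 1+\abs{v}^{2+\gamma-\eps}$ and $\psi(\abs{v}^2)\leq C(1+\abs{v}^{2+\gamma/2})$ (the latter being a consequence of $\phi'\leq C/(1+x)$, which forces $\phi$ to grow at most logarithmically), the right-hand side can be split into a $-c\,M_{2+\gamma/2}$ piece and a $-c'\int f\psi\,dv$ piece, up to an additive constant.

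Combining the two one-sided estimates yields
$$\frac{d}{dt}\int f\psi(\abs{v}^2)\,dv \leq -c\,M_{2+\gamma/2}(f(t)) - c'\int f\psi(\abs{v}^2)\,dv + C,$$
with constants depending only on $f_0$ and $\sup_{[0,T]}\norm{f}_{L^\infty_v}$; integrating this over $[0,t]$ together with the identity $M_0(f(\tau))=M_0(f_0)$ produces \eqref{initialmoments} after a harmless renaming of constants into a common $c_T$. The main obstacle is the loss bound: the geometric cutoff $\chi$ in the Povzner inequality must be carefully unpacked via the cone $\{\abs{v_*}<\abs{v}/2\}$, and the construction of $\psi$ must simultaneously fulfil the lower growth condition required by Lemma \ref{lem:povzner}(iii) and the upper bound $\phi'\leq C/(1+x)$ that produces the clean $G_\psi$ estimate; the remaining algebra is routine.
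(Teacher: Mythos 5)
Your proposal captures the right ingredients — Lemma~\ref{lem:integralQ} with $\Psi=\psi(\abs{v}^2)$, the Povzner decomposition with $F=1+f'+f'_*$, the pointwise bounds on $G_\psi$ and $H_\psi$, and the geometry of the cutoff $\chi$ — but there is a genuine gap in the very first step: you write down $\frac{d}{dt}\int f\psi(\abs{v}^2)\,dv$ and apply the weak formulation directly with test function $\psi(\abs{v}^2)$. This is not a priori justified. The hypotheses only give $f\in L^1_{2,v}$, and $\psi(\abs{v}^2)$ grows strictly faster than $\abs{v}^2$; the existence and finiteness of $\int f(t,\cdot)\psi(\abs{v}^2)\,dv$ for $t>0$ is precisely what the lemma is supposed to establish, so it cannot be assumed when manipulating the weak formulation. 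The paper circumvents this by introducing an increasing sequence of convex functions $\psi_n\nearrow\psi$ with $\psi_{n+1}-\psi_n$ convex, together with degree-one polynomials $p_n$ making $\psi_n-p_n$ compactly supported. Conservation of mass and energy then gives $\int[f(t)-f_0]\psi_n=\int[f(t)-f_0](\psi_n-p_n)$, where the right-hand side involves a legitimate compactly supported test function, and one only passes to the limit $n\to\infty$ at the end via monotone convergence, using the monotonicity and sign properties of $H_{\psi_n}$ and the uniform bound $\abs{G_{\psi_n}}\leq C_G\abs{v}\abs{v_*}$.

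A second, related issue: you impose the extra hypothesis $\phi'(x)\leq C/(1+x)$ on the function supplied by Lemma~\ref{lem:technical MisWen lemma}, in order to invoke the refined $G$-bound of Lemma~\ref{lem:povzner}(iii) and to get $\psi(\abs{v}^2)\leq C(1+\abs{v}^{2+\gamma/2})$. That lemma does not grant this; a generic de la Vall\'ee-Poussin weight need not have $\phi'$ bounded by $C/(1+x)$. The paper does not need this hypothesis on $\psi$ itself: the uniform bound $\abs{G_{\psi_n}}\leq C_G\abs{v}\abs{v_*}$ is obtained from the \emph{truncated} approximants $\psi_n$, whose $\phi_n$ are eventually constant, and then the monotone limit does the rest. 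If you want to proceed without truncating, you must at minimum argue carefully that $\phi$ can always be replaced by a slower, still admissible concave weight satisfying your extra growth bound — which, even if true, is exactly the kind of reduction the paper's $\psi_n$-construction packages cleanly. As it stands, your derivation of the differential inequality rests on two unjustified assumptions, and the argument does not go through.
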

\bigskip

\begin{proof}[Proof of Lemma $\ref{lem:moments1}$]
We fix $T$ in $[0,T_0)$ and we consider $0\leq t \leq T$.
\par As seen in \cite{MisWen}, one can construct an increasing sequence of convex functions, $\left(\psi_n\right)_{n\in\N}$, that converges pointwise to $\psi$ and satisfies that $\psi_{n+1}-\psi_n$ is convex. Moreover, there exists a sequence of polynomials of order $1$, $\left(p_n\right)_{n\in\N}$, such that $\psi_n - p_n$ is of compact support.
\par The properties of $\psi_n$ imply that for a given $F$ as in Lemma $\ref{lem:povzner}$ we have that the associated operators $H_{\psi_n}$, $G_{\psi_n}$ satisfy:
\begin{itemize}
\item $H_{\psi_n}$ is positive and increasing (due to Remark $\ref{remark:povzner}$).
\item $H_{\psi_n}$ converges pointwise to $H_{\psi}$ (this follows from the appropriate representation of $H$, see \cite{MisWen}).
\item $\abs{G_{\psi_n}(v,v_*)}\leq C_G \abs{v}\abs{v_*}$ for all $n$.
\end{itemize}

\bigskip
As $f$ preserves mass and energy and $p_n$ is of order $1$:
$$\int_{\R^d} \left[f(t,v) - f_0(v)\right]\psi_n\left(\abs{v}^2\right) \:dv = \int_{\R^d} \left[f(t,v) - f_0(v)\right]\left(\psi_n\left(\abs{v}^2\right)-p_n\left(\abs{v}^2\right)\right) \:dv.$$
Since $\psi_n - p_n$ is compactly supported and $f$ solves the Boltzmann-Nordheim equation, we use Lemma $\ref{lem:integralQ}$ to conclude
\begin{eqnarray*}
&&\int_{\R^d} \left[f(t,v) - f_0(v)\right]\psi_n\left(\abs{v}^2\right) \:dv
\\&&\quad\quad\quad = \frac{C_\Phi}{2}\int_0^t\int_{\R^d\times\R^d\times\mathbb{S}^{d-1}}q(f)(\tau,v,v_*)\left[\psi'_{n*} + \psi'_n - \psi_{n*} - \psi_n\right]\:dvdv_*d\tau,
\end{eqnarray*}
with
$$q(f)(\tau,v,v_*) = |v-v_*|^\gamma b\left(\mbox{cos}\:\theta\right)f(\tau)f_*(\tau)\left(1+f'(\tau)+f'_*(\tau)\right).$$

Using Lemma $\ref{lem:povzner}$ with $F=1+f'+f'_*$ we find that the above implies, using the decomposition stated in the lemma, that
\begin{eqnarray}
&&\int_{\R^d}f(t,v)\psi_n\left(\abs{v}^2\right) \:dv + \frac{C_\phi}{2}\int_0^t\int_{\R^d\times\R^d}f(\tau)f(\tau)_*\abs{v-v_*}^\gamma H_{\psi_n}\:dv_* dv d\tau \nonumber
\\&&= \int_{\R^d}f_0(v)\psi_n\left(\abs{v}^2\right) \:dv + \frac{C_\phi}{2}\int_0^t\int_{\R^d\times\R^d}f(\tau)f(\tau)_*\abs{v-v_*}^\gamma G_{\psi_n}\:dv_* dv d\tau. \nonumber
\end{eqnarray}

\bigskip
At this point the proofs follows much like in the work of Mischler and Wennberg. We concisely outline the steps for the sake of completion.
\par Using the uniform bound on $G_{\psi_n}$ and the properties of $H_{\psi_n}$ we find that by the monotone convergence theorem
\begin{eqnarray}
&&\int_{\R^d}f(t,v)\psi\left(\abs{v}^2\right) \:dv + \frac{C_\phi}{2}\int_0^t\int_{\R^d\times\R^d}f(\tau)f(\tau)_*\abs{v-v_*}^\gamma H_{\psi}\:dv_* dv d\tau \nonumber
\\&&= \int_{\R^d}f_0(v)\psi\left(\abs{v}^2\right) \:dv + \frac{C_\phi C_G}{2}\int_0^t\int_{\R^d\times\R^d}f(\tau)f(\tau)_*\abs{v-v_*}^\gamma \abs{v}\abs{v_*}\:dv_* dv d\tau. \nonumber
\end{eqnarray}

Using Lemma $\ref{lem:povzner}$ again for $H_{\psi}$ and picking $\epsilon=\frac{\gamma}{2}$ in the relevant case we have that 
\begin{equation}\nonumber
\begin{split}
\int_{\R^d\times\R^d}f(\tau)f(\tau)_*\abs{v-v_*}^\gamma H_{\psi}\:dv_* dv \geq & C\int_{\R^d \times \R^d}f(\tau)f(\tau)_*\abs{v}^{2+\frac{\gamma}{2}}dv_* dv
\\&-c \int_{\R^d\times \R^d}f(\tau)f(\tau)_*\left(\abs{v}\abs{v_*}\right)^{1+\frac{\gamma}{4}}dv_* dv.
\end{split}
\end{equation}
As
$$M_\beta(f)(\tau) \leq \norm{f(\tau)}_{L^1_{2,v}}=\norm{f_0}_{L^1_{2,v}}$$
for any $\beta \leq 2$ we conclude that due to the conservation of mass and energy we have that
\begin{equation}\nonumber
\int_{\R^d}f(t,v)\psi\left(\abs{v}^2\right) \:dv + \frac{c_T}{2}\norm{f_0}_{L^1_v}\int_0^t M_{2+\frac{\gamma}{2}}(\tau) d\tau 
\leq \int_{\R^d}f_0(v)\psi\left(\abs{v}^2\right) \:dv +C_T t.
\end{equation}
The above also implies that
\begin{equation}\nonumber
\int_{\R^d}f(t,v)\psi\left(\abs{v}^2\right) \:dv \leq  \int_{\R^d}f_0(v)\psi\left(\abs{v}^2\right) \:dv +C_T T,
\end{equation}
which is enough to complete the proof.

\end{proof}
\bigskip

Next we prove the lemma that governs the induction step. Again, the proof follows \cite{MisWen} closely, yet we include it for completion.
\bigskip
\begin{lemma}\label{lem:moments2}
Let $T$ be in $(0,T_0)$. For any $n\in \N$ there exists $T_n >0$ as small as we want such that
$$M_{2+(2n+1)\gamma/2}(T_n)<\infty.$$
Moreover, for any $t\in[T_n,T]$ there exists $C_T >0$ and $c_{T_n,T}>0$ such that 
\begin{equation}\label{inductionmoments}
M_{2+(2n+1)\gamma/2}(t) + c_T\int_{T_n}^t \left[M_{2+(2n+1)\gamma/2}(\tau) + M_{2+(2n+3)\gamma/2}(\tau)\right]\:d\tau \leq C_{T_n,T} (1+t),
\end{equation}
\end{lemma}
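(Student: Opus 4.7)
The plan is to prove Lemma \ref{lem:moments2} by induction on $n$, mirroring the classical argument of Mischler and Wennberg \cite{MisWen} but now using the extended Povzner decomposition of Lemma \ref{lem:povzner}(i) applied to $\psi(x) = x^{1+\alpha_n}$ with $\alpha_n = (2n+1)\gamma/4$, so that $\psi(|v|^2) = |v|^{2+(2n+1)\gamma/2}$. The base case $n=0$ is immediate: Lemma \ref{lem:moments1} gives $\int_0^T M_{2+\gamma/2}(\tau)\,d\tau < \infty$, so one can pick $T_0 > 0$ arbitrarily small with $M_{2+\gamma/2}(T_0) < \infty$.

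For the inductive step, set $\alpha = \alpha_n$. Using Lemma \ref{lem:integralQ} with $\Psi(v) = |v|^{2+2\alpha}$ and decomposing via Lemma \ref{lem:povzner}(i) applied to $F(v,v_*,\sigma) = 1 + f(v') + f(v'_*)$ (legitimate since $1 \leq F \leq 1 + 2\sup_{[0,T]} \|f\|_{L^\infty_v}$) gives
\begin{equation*}
\frac{d}{dt} M_{2+2\alpha}(t) = \frac{C_\Phi}{2} \int f\, f_* |v-v_*|^\gamma (G_\psi - H_\psi)\,dv\,dv_*.
\end{equation*}
The bound $|G_\psi| \leq C_G \alpha (|v||v_*|)^{1+\alpha}$ together with $|v-v_*|^\gamma \leq |v|^\gamma + |v_*|^\gamma$ yields $\int f f_* |v-v_*|^\gamma |G_\psi|\,dv\,dv_* \leq 2 C_G \alpha M_{1+\alpha+\gamma}(t) M_{1+\alpha}(t)$. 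The coercive lower bound $H_\psi \geq C_H \alpha (|v|^{2+2\alpha} + |v_*|^{2+2\alpha})\chi$, restricted to $\{|v_*| \leq |v|/2\} \subset \{\chi = 1\}$ (where $|v-v_*|^\gamma \geq (|v|/2)^\gamma$) and combined with the fact that $\int_{|v_*| \leq |v|/2} f\,dv_* \geq M_0/2$ for $|v|$ sufficiently large (a consequence of mass conservation), gives $\int f f_* |v-v_*|^\gamma H_\psi\,dv\,dv_* \geq c_H \alpha M_0 M_{2+2\alpha+\gamma}(t) - \tilde C$. Tracking constants in Lemma \ref{lem:povzner}(i) carefully so that the coercivity dominates the $G$-contribution, via the interpolation $M_{1+\alpha+\gamma} M_{1+\alpha} \leq M_0 M_{2+2\alpha+\gamma}$ and the elementary splitting $M_{2+2\alpha} \leq \epsilon M_{2+2\alpha+\gamma} + C_\epsilon$, produces the differential inequality
\begin{equation*}
\frac{d}{dt} M_{2+2\alpha}(t) + c\, M_{2+2\alpha+\gamma}(t) \leq C.
\end{equation*}
Integrating from $T_n$ to $t$ and reshuffling a fraction of $\int M_{2+2\alpha+\gamma}$ into $\int M_{2+2\alpha}$ by the same interpolation yields \eqref{inductionmoments}. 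The resulting integrability of $M_{2+2\alpha+\gamma}$ on $[T_n,T]$ then provides $T_{n+1}$ arbitrarily close to $T_n$ with $M_{2+(2n+3)\gamma/2}(T_{n+1}) < \infty$, closing the induction.

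The main obstacle is the quantitative constant tracking: both $G_\psi$ and $H_\psi$ carry the same $\alpha$ factor, and the moment interpolation appearing in the $G$-bound is borderline, the total exponent in $M_{2+2\alpha+\gamma}$ being exactly $1$ since $(1+\alpha)+(1+\alpha+\gamma) = 2+2\alpha+\gamma$. The coercive contribution of $H_\psi$ must therefore be delivered with a strictly larger effective constant than that produced by $G_\psi$, which is made possible by the $M_0$ factor emerging from mass conservation together with the geometric freedom in the definition of $\chi$ in Lemma \ref{lem:povzner}; the precise bookkeeping of these constants is the technical heart of the proof.
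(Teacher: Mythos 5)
Your overall architecture — induction on $n$, Lemma $\ref{lem:integralQ}$ with the test function $|v|^{2+2\alpha}$, the Povzner decomposition of Lemma $\ref{lem:povzner}$(i) applied to $F=1+f'+f'_*$, extraction of $T_{n+1}$ from the time-integrability produced at step $n$ — matches the paper. The gap is in how you control the $G$-term, and it is a genuine one. You bound $\int f f_*|v-v_*|^\gamma|G_\psi| \leq 2C_G\alpha\,M_{1+\alpha+\gamma}M_{1+\alpha}$ and then interpolate upward via H\"older, $M_{1+\alpha+\gamma}M_{1+\alpha}\leq M_0\,M_{2+2\alpha+\gamma}$, landing on the \emph{same} moment $M_{2+2\alpha+\gamma}$ that the $H$-term is supposed to deliver as coercivity. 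As you yourself observe, this is dimensionally exact (the H\"older exponents sum to $1$), so there is no room from interpolation alone, and since $G_\psi$ and $H_\psi$ both carry an $\alpha$ prefactor and both pick up an $M_0$ after the geometric reduction, absorption now requires $C_H$ to exceed $C_G$ by an explicit factor (roughly $2^{2+\gamma}$ from your $\{|v_*|\leq|v|/2\}$ restriction). Lemma $\ref{lem:povzner}$ gives no such comparison: $C_G$ and $C_H$ are declared to ``depend only on $\alpha,\psi,\eps,b,a,\|F\|_\infty$'' with no claimed ordering, and the proof of Lemma $\ref{lem:povzner}$(i) leaves a term of size $4^{1+2\alpha}(|v||v_*|)^{1+\alpha}$ in $G_\psi$ coming from the region $|v|/2\leq|v_*|\leq 2|v|$ which cannot be made small by the free parameter $\eps$. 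Your closing sentence acknowledges that the constant bookkeeping is ``the technical heart of the proof'' but does not actually carry it out, and it cannot be carried out from the stated lemma.

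The paper sidesteps this entirely by interpolating in the other direction. Writing $(|v||v_*|)^{1+\alpha}\left(|v|^\gamma+|v_*|^\gamma\right)\leq 2\left(|v|^{2+2\alpha}|v_*|^\gamma+|v|^\gamma|v_*|^{2+2\alpha}\right)$ (split on $|v|\gtrless|v_*|$), the $G$-contribution becomes $\lesssim \alpha\,M_\gamma\,M_{2+2\alpha}$ with $M_\gamma\leq\|f_0\|_{L^1_{2,v}}$ bounded by conservation. This is the \emph{same order} moment you are propagating, $M_{2+(2n+1)\gamma/2}$, not the higher-order $M_{2+(2n+3)\gamma/2}$. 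The resulting inequality $\frac{d}{dt}M_{2+2\alpha}+cM_{2+2\alpha+\gamma}\leq C\,M_{2+2\alpha}+C'$ is then closed not by absorption but by integrating in time and invoking the previous induction step (or Lemma $\ref{lem:moments1}$ at $n=0$) to control $\int_{T_n}^t M_{2+2\alpha}(\tau)\,d\tau$. That is where the iterative structure of the lemma actually does work for you; in your version the induction hypothesis is never used to bound the $G$-term, which is a sign something is off. To repair your proof, replace the upward H\"older interpolation with the split above and control $\int_{T_n}^t M_{2+(2n+1)\gamma/2}$ by the previous step.
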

\bigskip

\begin{proof}[Proof of Lemma $\ref{lem:moments2}$]
We start by noticing that since $M_{2+\gamma/2}\in L^1_{loc}\left([0,T_0)\right)$, according to Lemma $\ref{lem:moments1}$ and the conservation of mass, we can find $t_0$, as small as we want, such that
$$M_{2+\gamma/2}(t_0)<\infty.$$
We repeat the proof of Lemma $\ref{lem:moments1}$ with the function $\psi(x)=x^{1+\frac{\gamma}{4}}$ on the interval $[t_0,T]$, as we can still find the same polynomial approximation and a uniform bound on the associated $H$ and $G$, to find that for almost any $t\in [t_0,T_0)$
\begin{equation}\nonumber
\begin{gathered}
\int_{\R^d\times\R^d} f(t)\abs{v}^{2+\gamma/2}dvdv_* + C_T \int_{t_0}^t f(\tau)f_*(\tau)\abs{v}^{2+3\gamma/2}dvdv_*d\tau \\
\leq c_t \int_{t_0}^t f(\tau)f_*(\tau)\abs{v}^{2+\gamma/2}\abs{v_*}^{\gamma}dvdv_*d\tau + \int_{\R^d\times\R^d}f(t_0)\abs{v}^{2+\gamma/2}dvdv_*.
\end{gathered}
\end{equation}
This completes the proof in the case $n=0$ using Lemma $\ref{lem:moments1}$ again. Notice that as the right hand side is a uniform bound in $t$ we can conclude that the inequality is valid for any $t$ in the appropriate interval.
\par We continue in that manner, using Lemma $\ref{lem:povzner}$ with $\psi(x) = x^{1+(2n+3)\gamma/4}$, assuming we have shown the result for $M_{2+(2n+1)\gamma/2}$, and conclude the proof.
\end{proof}
\bigskip

We now posses the tools to prove the main proposition of this section.

\begin{proof}[Proof of Proposition $\ref{prop:moments}$]
We start by noticing, that since $f$ conserves mass and energy $f$ is in $L^1_{2,v}$ for all $t\in[0,T_0)$ and therefore the Proposition is valid for all $\alpha\in[0,2]$.

\bigskip
Given $\alpha>2$ and $0<T<T_1<T_0$ we know by Lemma \ref{lem:moments2} that we can construct an increasing sequence $\left(T_n\right)_{n\in \N}$ such that $T_n <T$ for all $n$ and 
\begin{equation}\nonumber
M_{2+(2n+1)\gamma/2}(t) < C_{T_n,T}(1+T_1)
\end{equation}
when $t\in [T_n,T_1]\subset [T,T_1]$. This completes the proof.
\end{proof}
\bigskip

\begin{remark}
We would like to emphasize at this point that this result is slightly different from the one for the Boltzmann equation. Indeed, in the case when $T_0=+\infty$ in the Boltzmann equation the bounds on the moments on $[T,\infty)$ depend only on $T$,while for the Boltzmann-Nordheim equation in our settings we can only find local bounds on the moments since we require the boundedness of the solution $f$.
\end{remark}
\bigskip


\subsection{The rate of blow up of the $L^1_{2+\gamma,v}$-norm at $t=0$}\label{subsec:explosionM2gamma}

In this subsection we will investigate the rate by which the $(2+\gamma)^{th}$ moment blows up as $t$ approaches zero. This will play an important role in the proof of the uniqueness to the Boltzmann-Nordheim equation.

\bigskip
\begin{prop}\label{prop:explosionM2gamma}
Let $f$ be a non-negative solution of the Boltzmann-Nordheim equation in $L^{\infty}_{\mbox{\scriptsize{loc}}}\left([0,T_0),L^1_{2,v}\cap L^\infty_v\right)$ satisfying the conservation of mass and energy.
\\Then, given $T<T_0$, if $M_{2+\gamma}(t)$ is unbounded on $(0,T]$ there exists a constant $C_T >0$  such that
$$\forall t \in (0,T], \quad M_{2+\gamma}(t) \leq \frac{C_T}{t}.$$
$C_T$ depends only on $\gamma,d$, the collision kernel, $\sup_{t\in (0,T]}\norm{f}_{L^\infty_{v}}$ and the appropriate norms of $f_0$
\end{prop}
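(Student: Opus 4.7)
The plan is to derive a Riccati-type differential inequality of the form
$$M_{2+\gamma}'(t) \leq -c\,M_{2+\gamma}^2(t) + C$$
valid on $(0,T]$, from which the bound $M_{2+\gamma}(t) \leq C_T/t$ follows by an elementary ODE comparison. We would apply Lemma $\ref{lem:povzner}$(i) with $\psi(x)=x^{1+\gamma/2}$ (so that $\int_{\R^d}f(t,v)\psi(\abs{v}^2)\,dv = M_{2+\gamma}(t)$) and $F(v,v_*,\sigma)=1+f'+f'_*$, which satisfies $1 \leq F \leq 1+2\sup_{t\in(0,T]}\norm{f(t,\cdot)}_{L^\infty_v}$. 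Using Lemma $\ref{lem:integralQ}$ and the Povzner decomposition $K_\psi=G_\psi-H_\psi$, we obtain $M_{2+\gamma}'(t) = \frac{C_\Phi}{2}(I_G(t)-I_H(t))$, where $I_G$, $I_H$ denote the integrals of $ff_*\abs{v-v_*}^\gamma G_\psi$ and $ff_*\abs{v-v_*}^\gamma H_\psi$ over $\R^d\times\R^d$ respectively. This identity is made rigorous through the polynomial-truncation approximation of $\psi$ employed in the proof of Lemma $\ref{lem:moments1}$, combined with the finiteness of $M_{2+\gamma}(t)$ for each $t>0$ (Proposition $\ref{prop:moments}$).

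For the lower bound on $I_H$ we restrict to the region $\{\abs{v_*}\leq\abs{v}/2\}\subset\{\chi=1\}$, on which $\abs{v-v_*}\geq\abs{v}/2$, so that the Povzner lower bound on $H_\psi$ produces an integrand at least $c\,f(v)f(v_*)\abs{v}^{2+2\gamma}$. Conservation of energy gives $\int_{\abs{v_*}>R}f\,dv_*\leq M_2/R^2$, so that for $\abs{v}\geq R_0 := \sqrt{8M_2/M_0}$ the inner mass integral exceeds $M_0/2$. Combined with the Cauchy--Schwarz interpolation $M_{2+\gamma}^2 \leq M_2\cdot M_{2+2\gamma}$, this yields
$$I_H(t) \geq c_1\,M_{2+\gamma}^2(t) - C_1,$$
with constants depending only on $\gamma$, the collision kernel, $M_0$, $M_2$, and $\sup_{(0,T]}\norm{f}_{L^\infty_v}$.

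For the upper bound on $I_G$, Lemma $\ref{lem:povzner}$(i) gives $\abs{G_\psi}\leq C_G(\abs{v}\abs{v_*})^{1+\gamma/2}$, and combined with $\abs{v-v_*}^\gamma\leq \abs{v}^\gamma+\abs{v_*}^\gamma$ this leads to $\abs{I_G(t)}\leq 2C_G\,M_{1+\gamma/2}(t)\,M_{1+3\gamma/2}(t)$. The first factor is bounded by $M_0+M_2$. For the second, if $1+3\gamma/2\leq 2$ it too is bounded by $M_0+M_2$, and otherwise H\"older interpolation gives $M_{1+3\gamma/2} \leq M_2^{1-\theta}M_{2+\gamma}^\theta$ with $\theta=3/2-1/\gamma\in(0,1/2]$. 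In all cases $M_{1+3\gamma/2} \leq C(1+M_{2+\gamma}^\theta)$ with $\theta<2$, so the elementary inequality $x^\theta \leq \eta x^2 + C(\eta)$ (for any $\eta>0$) with $\eta$ small enough to be absorbed into $I_H$ produces
$$\abs{I_G(t)} \leq \frac{c_1}{2}\,M_{2+\gamma}^2(t) + C_2,$$
whence the Riccati inequality $M_{2+\gamma}'(t) \leq -c\,M_{2+\gamma}^2(t) + C_3$ with $c=c_1/4>0$.

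To conclude, set $M^*:=\sqrt{C_3/c}$, the equilibrium of the Riccati ODE. If $M_{2+\gamma}(t)\leq M^*$ the bound is trivial. If $M_{2+\gamma}(t)>M^*$ for some $t\in(0,T]$, the fact that $M_{2+\gamma}' \leq 0$ on $\{M_{2+\gamma}=M^*\}$ together with continuity forces $M_{2+\gamma}(\tau)>M^*$ throughout $(0,t]$. Writing $w(\tau):=M_{2+\gamma}(\tau)-M^*>0$, the Riccati inequality becomes $w'(\tau)\leq -cw(\tau)^2-2cM^*w(\tau)\leq -cw(\tau)^2$, so $\frac{d}{d\tau}(1/w(\tau))\geq c$. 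The unboundedness of $M_{2+\gamma}$ on $(0,T]$ combined with its local boundedness on $[\varepsilon,T]$ for each $\varepsilon>0$ (Proposition $\ref{prop:moments}$) forces a sequence $s_n\to 0^+$ with $M_{2+\gamma}(s_n)\to\infty$; integrating $(1/w)'\geq c$ on $[s_n,t]$ and letting $n\to\infty$ yields $w(t)\leq 1/(ct)$, hence $M_{2+\gamma}(t)\leq M^* + 1/(ct) \leq C_T/t$ on $(0,T]$. The main obstacle is squeezing a genuinely quadratic (rather than merely linear) gain out of the Povzner $H$-term, and for this the combination of the asymmetric $\chi$-cutoff from Lemma $\ref{lem:povzner}$ with the Cauchy--Schwarz interpolation $M_{2+\gamma}^2 \leq M_2\cdot M_{2+2\gamma}$ is essential.
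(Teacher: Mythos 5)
Your proof is correct and takes essentially the same route as the paper: both apply the Povzner decomposition (Lemma \ref{lem:povzner}(i)) with $\psi(x)=x^{1+\gamma/2}$ and $F=1+f'+f'_*$, extract a quadratic decay term $-c\,M_{2+\gamma}^2$ from the $H$-integral via the Cauchy--Schwarz interpolation $M_{2+\gamma}^2 \leq M_2\,M_{2+2\gamma}$, absorb the $G$-contribution using lower-order moments and Young's inequality, and then solve the resulting Riccati inequality. Your explicit restriction to $\{\abs{v_*}\leq\abs{v}/2\}$ together with the conservation of mass and energy to bound the inner integral below by $M_0/2$ (for $\abs{v}$ large) makes rigorous a step the paper states tersely, and your backward-invariance ODE argument is a clean equivalent of the paper's monotonicity argument.
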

\bigskip

\begin{proof}[Proof of Proposition $\ref{prop:explosionM2gamma}$]
Let $0<t<T<T_0$. We start by mentioning that due to Proposition $\ref{prop:moments}$ we know that all the moments considered in what follows are defined and finite. Using Lemma $\ref{lem:integralQ}$ we find that
\begin{equation}\label{explosion1}
\frac{d}{dt}M_{2+\gamma}(t) = \frac{C_\phi}{2}\int_{\R^d\times\R^d}\abs{v-v_*}^\gamma f f_* K_{1+\gamma/2}(v,v_*)\:dv_* dv,
\end{equation}
where $K_{1+\gamma/2}(v,v_*)$ is given by Lemma $\ref{lem:povzner}$ with the choice $\psi(x)=x^{1+\gamma/2}$. From the same lemma we have
\begin{equation*}
\begin{split}
K_{1+\gamma/2}(v,v_*) \leq& C_{1,T}\abs{v}^{1+\gamma/2}\abs{v_*}^{1+\gamma/2} - C_{2,T}\left(\abs{v}^{2+\gamma}+\abs{v_*}^{2+\gamma}\right)
\\& + C_{3,T}\left(\abs{v}^{2+\gamma}+\abs{v_*}^{2+\gamma}\right)\mathbf{1}_{\{\abs{v}/2<\abs{v_*}<2\abs{v}\}}
\end{split}
\end{equation*}
for constants $C_{1,T}$, $C_{2,T}$, $C_{3,T}$ depending only on $\gamma,T,d,\norm{f}_{L^\infty_{[0,T],v}}$ and the appropriate norms of $f_0$.
\par On $\{\abs{v}/2<\abs{v_*}<2\abs{v}\}$
$$\abs{v}^{2+\gamma} + \abs{v_*}^{2+\gamma} \leq 2^{2+\gamma/2}\abs{v}^{1+\gamma/2}\abs{v_*}^{1+\gamma/2}.$$
Therefore, $\eqref{explosion1}$ yields
$$\frac{d}{dt}M_{2+\gamma}(t) \leq \int_{\R^d\times\R^d}\abs{v-v_*}^\gamma f f_*\left[\tilde{C}_{1,T}\abs{v}^{1+\gamma/2}\abs{v_*}^{1+\gamma/2} - C_{2,T}\abs{v}^{2+\gamma}\right]\:dv_* dv.$$

\bigskip
Since $f$ preserves the mass and energy, and $0\leq \gamma \leq 1$, we find that with abusing notations for relevant constants
\begin{equation}\label{explosion2}
\frac{d}{dt}M_{2+\gamma}(t) \leq C_{1,T} M_{1+3\gamma/2} - C_{2,T} M_{2+\gamma},
\end{equation}
where we have used the fact that $\abs{\abs{v}^\gamma-\abs{v_*}^\gamma} \leq \abs{v-v_*}^\gamma \leq \abs{v}^\gamma+\abs{v_*}^\gamma $. As, for any $\epsilon>0$ 
$$\abs{v}^{1+3\gamma/2}=\abs{v}^{1+\gamma/2}\abs{v}^\gamma \leq \eps \abs{v}^{2+\gamma} + \frac{1}{4\eps} \abs{v}^{2\gamma}\leq \eps\left(1+\abs{v}^{2+2\gamma}\right)+ \frac{1}{4\eps} \abs{v}^{2\gamma}.$$
we conclude that since $2\gamma \leq 2$ we can take $\eps$ to be small enough such that $\eqref{explosion2}$ becomes
$$\frac{d}{dt}M_{2+\gamma}(t) \leq c_{T} - C_{T} M_{2+2\gamma}(t).$$
where $c_{T}$, $C_{T}>0$ are independent of $t$ and depend only on the relevant known quantities.

\bigskip
Due to Holder's inequality we know that
$$M_{2+\gamma} \leq M_{2}^{1/2}M_{2+2\gamma}^{1/2}$$
implying that
$$\frac{d}{dt}M_{2+\gamma}(t) \leq c_T - C_T M_{2+\gamma}^2(t).$$
As $M_{2+\gamma}(t)$ is unbounded in $(0,T]$, we know that there exists $t_0\in(0,T]$ such that
$$M_{2+\gamma}(t_0)\geq \max\left(\sqrt{\frac{2c_T}{C_T}},M_{2+\gamma}(T)\right)$$.
We find that
$$\frac{d}{dt}M_{2+\gamma}(t_0) \leq \frac{C_T}{2}M^2_{2+\gamma}(t_0)-C_T M^2_{2+\gamma}(t_0) <0$$
implying that there exists a neighbourhood of $t_0$ where $M_{2+\gamma}(t)$ decreases. As this means that $M_{2+\gamma}(t) \geq \frac{2c_T}{C_T}$ to the left of $t_0$ we can repeat the above argument and conclude that $M_{2+\gamma}(t)$ decreases on $(0,t_0]$. Moreover, in this interval we have 
$$\frac{d}{dt}M_{2+\gamma} \leq -\frac{C_T}{2}M^2_{2+\gamma}.$$
The above inequality is equivalent to 
$$\frac{d}{dt}\left(\frac{1}{M_{2+\gamma}}\right)\geq \frac{C_T}{2}, $$
which implies, by integrating over $(0,t)$ and remembering that $M_{2+\gamma}$ is unbounded, that
$$\frac{1}{M_{2+\gamma}(t)} \geq \frac{C_T}{2}t$$
on $(0,t_0]$, from which the result follows.
\end{proof}
\bigskip

\section{Uniqueness of solution to the Boltzmann-Nordheim equation}\label{sec:uniqueness}

This section is dedicated to proving that if a solution to the Boltzmann-Nordheim equation exists, with appropriate conditions on the initial data, then it must be unique. The main theorem we will prove in this section is the following:

\bigskip
\begin{theorem}\label{theo:uniqueness}
Let $f_0$ be in $L^1_{2,v} \cap L_{s,v}^\infty$, where $d-1<s$. If $f$ and $g$ are two non-negative mass and energy preserving solutions of the Boltzmann-Nordheim equation with the same initial data $f_0$ that are in $L^{\infty}_{\mbox{\scriptsize{loc}}}\left([0,T_0),L^1_{2,v}\cap L^\infty_v\right)$ then $f=g$ on $[0,T_0)$.
\end{theorem}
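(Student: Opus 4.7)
The plan is to follow the strategy of Mischler and Wennberg \cite{MisWen}, adapting it to the trilinear Boltzmann--Nordheim structure. Set $h = f - g$ and derive a Gr\"onwall-type inequality for the weighted $L^1$ quantity
\[
\Phi(t) = \int_{\R^d} |h(t,v)|\,(1+|v|^2)\,dv,
\]
so that $\Phi(0)=0$ together with the inequality forces $\Phi\equiv 0$ on $[0,T_0)$, whence $f=g$.

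The first step is to expand $Q(f)-Q(g)$ multilinearly. The trilinear expression $f'f'_*(1+f+f_*)-g'g'_*(1+g+g_*)$ decomposes as a sum of terms each of which contains exactly one factor of $h$, either pre- or post-collisional, the remaining factors being $f$ or $g$; a similar decomposition applies to the loss term $fQ^-(f)-gQ^-(g)$. I would then multiply $\partial_t h = Q(f)-Q(g)$ by $\mathrm{sign}(h)(1+|v|^2)$, integrate in $v$, and use the symmetry $(v,v_*,\sigma)\leftrightarrow (v',v'_*,\sigma)$ together with the elementary bound $\mathrm{sign}(h)(v')\,\psi(v')\leq \psi(v')$ to convert gain-type contributions into absolute-value integrals of $|h|$ against polynomial weights.

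The uniform $L^\infty_v$ bounds on $f$ and $g$ absorb the quantum-correction factors $(1+\cdot)$, reducing each trilinear piece to a bilinear integral in $|h|$ and $f$ or $g$, multiplied by a constant depending on $\|f\|_{L^\infty_v}+\|g\|_{L^\infty_v}$. Post-collisional occurrences of $h$ are controlled by Proposition \ref{prop:integration on hyperplane of f} and Lemma \ref{lem:integration on hyperplanes of Q^+}, which supply uniform bounds for integrations on the Carleman hyperplanes $E_{vv'}$. The Povzner-type cancellation $|v'|^2+|v'_*|^2=|v|^2+|v_*|^2$ applied to the weight eliminates the highest-order contributions but leaves residual terms of order $|v-v_*|^{2+\gamma}|h|$, yielding an estimate of the form
\[
\frac{d}{dt}\Phi(t) \leq C_T\bigl(1+M_{2+\gamma}(t)\bigr)\Phi(t), \qquad t\in (0,T],
\]
where $C_T$ depends only on the collision kernel, $d$, $\gamma$, and the $L^\infty_v\cap L^1_{2,v}$ bounds of $f$ and $g$ on $[0,T]$.

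The main obstacle, as the authors explicitly flag, is the potential non-integrability of $M_{2+\gamma}$ near $t=0$: Proposition \ref{prop:explosionM2gamma} only guarantees $M_{2+\gamma}(\tau)\leq C_T/\tau$, which is borderline non-integrable. To close the argument I would integrate the differential inequality on $[s,t]\subset(0,T]$ to get $\Phi(t)\leq \Phi(s)(t/s)^{C_T} e^{C_T(t-s)}$, then pass to $s\to 0^+$. The necessary quantitative vanishing of $\Phi(s)$ as $s\to 0$ must be extracted from the continuity of both solutions at their common initial value $f_0\in L^1_{2,v}\cap L^\infty_{s,v}$, combined with the $L^\infty$ and higher-moment control provided by Section \ref{sec:apriori} and Proposition \ref{prop:moments}. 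This balancing of the logarithmic singularity of the exponent against the rate of vanishing of $\Phi(s)$ is the delicate technical step; everything else is the quantum-corrected bookkeeping of the classical Mischler--Wennberg argument.
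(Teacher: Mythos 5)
Your single-functional Gr\"onwall strategy does not close, and the paper's proof reveals why. When you expand $P(f,g) = ff_*(1+f'+f'_*) - gg_*(1+g'+g'_*)$ multilinearly and pair against $\mathrm{sgn}(h)(1+|v|^2)$, the pieces in which $h$ appears post-collisionally (e.g. the analogue of the paper's $I_4$ term, with a factor $(f'_*-g'_*)$) cannot be absorbed into $\Phi(t) = \|h\|_{L^1_{2,v}}$. After the pre/post change of variables the troublesome weight $(1+|v_*|^2)$ lands on the leg that is \emph{not} carrying $h$, and neither the Carleman-hyperplane bounds of Proposition~\ref{prop:integration on hyperplane of f} nor Lemma~\ref{lem:integration on hyperplanes of Q^+} control this without invoking $\|h\|_{L^\infty_v}$. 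This is exactly why the paper tracks a \emph{system} of three coupled inequalities (Lemmas~\ref{lem:L1v}, \ref{lem:L12v}, \ref{lem:Linfty}), with $\|h\|_{L^\infty_v}$ appearing on the right-hand side of the $L^1_{2,v}$ estimate, and $\|h\|_{L^1_{2,v}}$ in turn appearing inside the integral controlling $\|h\|_{L^\infty_v}$.

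The second, independent gap is in your closing argument. Integrating your proposed inequality gives $\Phi(t)\leq \Phi(s)(t/s)^{C_T C_1}e^{C_T(t-s)}$, so you need $\Phi(s)=o(s^{C_T C_1})$ as $s\to 0^+$, but ``continuity at the common initial value $f_0$'' only gives $\Phi(s)\to 0$, with no rate whatsoever. The rate of vanishing is precisely what must be manufactured, and the paper does this through a bootstrap you have no access to: Lemma~\ref{lem:Linfty} expresses $\|h\|_{L^\infty_v}$ as an integral from $0$ to $t$ (yielding a gain of one power of $t$ per iteration), the first and third inequalities together give $\|h\|_{L^1_v},\|h\|_{L^\infty_v}\lesssim t$, then the $L^1_{2,v}$ inequality gives $\|h\|_{L^1_{2,v}}\lesssim t$, and iterating yields $\max$-norm $\lesssim t^n$ for every $n$. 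Only after this super-polynomial vanishing is established does the Nagumo-type argument with $X(t)=\|h\|_{L^1_{2,v}}/t^n$, $n\geq C_T C_1$, kill the logarithmic singularity in the exponent. Without the $L^\infty_v$ tracking this iteration cannot even start.
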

\bigskip

The proof relies on precise estimates of the $L^1_v$, the $L^1_{2,v}$ and the $L^\infty_v$-norms of the difference of two solutions. As the difference of solutions may not have a fixed sign, these estimations require some delicacy due to a possible gain of a $\abs{v}^\gamma$ weight from the collision operator.
\par In what follows we will repeatedly denote by $C_T$ constants that depend on $d,s$, the collision kernel, $\norm{f_0}_{L^1_{2,v}\cap L^\infty_{s,v}}$, $\sup_{t\in (0,T]}\norm{f}_{L^\infty_v}$, $\sup_{t\in (0,T]}\norm{g}_{L^\infty_v}$ and $T$. Other instances will be clear form the context.\\ 
We would like to point out that if $f$ is a weak solution to the Boltzmann-Nordheim equation, i.e.
\begin{equation}\label{eq:weak}
f(t,v)=f_0(v)+\int_{0}^t Q\left(f(s,\cdot) \right)ds,
\end{equation}
with the required conservation and bounds, then similarly to Lemma \ref{lem:Q^- control}, and using Lemma \ref{lem:controlQ+infty} together with Proposition \ref{prop:integration on hyperplane of f} show us that for a fixed $v\in\R^d$ we have that $Q \left(f(s,v)\right) \in L^\infty_t([0,T])$. This implies that $f$ is actually absolutely continuous with respect to $t$ and as such we can differentiate \eqref{eq:weak} strongly with respect to $t$. \\
The above gives validation to the techniques used in the next few subsection.
\bigskip


\subsection{Evolution of $\norm{f-g}_{L^1_v}$}

The following algebraic identity will serve us many times in what follows:
\begin{equation}\label{arithmetic}
abc-def = \frac{1}{2} (a-d)(bc+ef) + \frac{a+d}{4}\left[(b-e)(c+f) + (c-f)(b+e)\right].
\end{equation}

\bigskip
\begin{lemma}\label{lem:L1v}
Let $0\leq T<T_0$. Then, there exists $C_T >0$  such that for all $t\in [0,T]$:
$$\frac{d}{dt}\norm{f-g}_{L^1_v} \leq C_T \left[\norm{f-g}_{L^1_{2,v}} + \norm{f-g}_{L^\infty_{v}}\right].$$
\end{lemma}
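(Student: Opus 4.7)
The plan is to start from
\[
\frac{d}{dt}\norm{f-g}_{L^1_v} = \int_{\R^d} \mathrm{sgn}(f-g)\,[Q(f)-Q(g)]\, dv,
\]
which is justified rigorously by mollifying the sign function and exploiting $f,g\in L^\infty_{\mathrm{loc}}([0,T_0), L^1_{2,v}\cap L^\infty_v)$. With the test function $\Psi = \mathrm{sgn}(f-g)$, applying Lemma~\ref{lem:integralQ} to both $Q(f)$ and $Q(g)$ yields
\[
\frac{d}{dt}\norm{f-g}_{L^1_v} = \frac{C_\Phi}{2} \int \abs{v-v_*}^\gamma b(\cos\theta)\,[q(f) - q(g)]\,[\Psi'+\Psi_*'-\Psi-\Psi_*]\, d\sigma dv dv_*,
\]
where $q(f) := ff_*(1+f'+f_*')$ and the bracket is uniformly bounded by $4$.

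The core step is to apply the algebraic identity \eqref{arithmetic} to $q(f)-q(g)=ff_*(1+f'+f_*')-gg_*(1+g'+g_*')$, which expands as
\begin{align*}
q(f)-q(g) &= \tfrac{1}{2}(f-g)\bigl[f_*(1+f'+f_*')+g_*(1+g'+g_*')\bigr] \\
&\quad + \tfrac{f+g}{4}\Bigl[(f_*-g_*)(2+f'+f_*'+g'+g_*') \\
&\quad\quad\quad\quad + \bigl((f'-g')+(f_*'-g_*')\bigr)(f_*+g_*)\Bigr].
\end{align*}
This isolates four groups of terms according to whether the $(f-g)$ factor appears at $v$, at $v_*$, at $v'$, or at $v_*'$. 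I set $L := \sup_{t\in[0,T]}\max(\norm{f}_{L^\infty_v},\norm{g}_{L^\infty_v}) < \infty$ and estimate each group in turn.

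For the first two groups, which carry a factor $(f-g)(v)$ or $(f-g)(v_*)$, I would bound the trilinear expressions $1+f'+f_*'$ and $1+g'+g_*'$ pointwise by $1+2L$, leaving the densities $f_*, g_*$ (respectively $f, g$) as functions of $v_*$ (respectively $v$) for integrability. The elementary inequality $\abs{v-v_*}^\gamma \leq 2(1+\abs{v}^2)(1+\abs{v_*}^2)$, valid for $\gamma \in [0,1]$, combined with the uniform $L^1_{2,v}$ bound on $f$ and $g$ coming from mass and energy conservation, controls the resulting integrals by $C_T \norm{f-g}_{L^1_{2,v}}$.

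For the last two groups the $(f-g)$ factor appears at a post-collisional velocity, which is the delicate case. The key observation is that I can bound this factor pointwise by $\norm{f-g}_{L^\infty_v}$ while keeping $(f+g)(v)$ and $(f_*+g_*)(v_*)$ as densities at $v$ and $v_*$. The remaining integral $\int \abs{v-v_*}^\gamma b(\cos\theta)\,(f+g)(v)(f+g)(v_*)\, d\sigma dv dv_*$ is then bounded by $2 l_b \norm{f+g}_{L^1_{2,v}}^2$ via the same elementary inequality, yielding a contribution of size $C_T\norm{f-g}_{L^\infty_v}$. The main obstacle is precisely this accounting: because the kernel is trilinear, each term carries three factors of $f$ or $g$, and at least two of them must remain as densities (one at $v$ and one at $v_*$) to offset the $\abs{v-v_*}^\gamma$ weight; this leaves at most one factor available for pointwise $L^\infty$ control, which is exactly why the $L^\infty_v$-norm of $f-g$ must appear alongside the $L^1_{2,v}$-norm in the final estimate.
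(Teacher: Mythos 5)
Your proposal is correct and follows essentially the same route as the paper: the same symmetrization via Lemma~\ref{lem:integralQ}, the same application of the algebraic identity~\eqref{arithmetic} to $P(f,g)$, the same pointwise $L^\infty$ bound on the trilinear factors, and the same kernel estimate $\abs{v-v_*}^\gamma\leq(1+\abs{v}^2)(1+\abs{v_*}^2)$ combined with conservation of mass and energy. The paper compresses the four groups into two by the symmetries $(v,v_*)\mapsto(v_*,v)$ and $\sigma\mapsto-\sigma$, but the estimates you carry out term by term are the same.
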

\bigskip

\begin{proof}[Proof of Lemma $\ref{lem:L1v}$]
Given $T\in [0,T_0)$ we have, due to Lemma $\ref{lem:integralQ}$:
\begin{equation}\label{L1ineq1}
\begin{split}
\frac{d}{dt}\norm{f-g}_{L^1_v} = \int_{\R^d} \mbox{sgn}(f-g)\partial_t \left(f-g\right)\:dv = \int_{\R^d} \mbox{sgn}(f-g)\left(Q(f)-Q(g)\right)\:dv&
\\\quad\quad\quad= \frac{C_\Phi}{2}\int_{\R^d\times\R^d\times\mathbb{S}^{d-1}} b\left(\mbox{cos}\:\theta\right)|v-v_*|^\gamma P(f,g)\left[\Psi'_* + \Psi' - \Psi_* - \Psi\right]\:d\sigma dv_*dv&,
\end{split}
\end{equation}
where $\Psi(t,v)= \mbox{sgn}(f-g)(t,v)$ and 
\begin{equation}\label{P}
P(f,g) = ff_*(1+f'+f'_*) - gg_*(1+g'+g'_*)
\end{equation}

\bigskip
It is simple to see that $\abs{\Psi'_* + \Psi' - \Psi_* - \Psi} \leq 4$ and using the algebraic identity $\eqref{arithmetic}$ we also note that
\begin{equation*}
\begin{split}
\abs{P(f,g)} \leq  C_T \Bigg(&\abs{f-g}(f_*+g_*) 
\\&\quad+(f+g)\abs{f_*-g_*} +(f+g)(f_*+g_*)\Big[\abs{f'-g'} + \abs{f'_*-g'_*}\Big]\Bigg) .
\end{split}
\end{equation*}

Using the above with $\eqref{L1ineq1}$, along with known symmetry properties, we find that
\begin{equation*}
\begin{split}
\frac{d}{dt}\norm{f-g}_{L^1_v} \leq& C_T\Bigg(\int_{\R^d\times\R^d} |v-v_*|^\gamma \abs{f-g}(f_*+g_*) \:dv_*dv 
\\ &+\int_{\R^d\times\R^d\times\mathbb{S}^{d-1}} b\left(\mbox{cos}\:\theta\right)|v-v_*|^\gamma (f+g)(f_*+g_*)\abs{f'-g'} \: dv_*dvd\sigma\Bigg).
\end{split}
\end{equation*}
As
$$\abs{v-v_*}^\gamma \leq \left(1+\abs{v}^2\right)\left(1+\abs{v_*}^2\right),$$
since $\gamma\in [0,1]$, and using the conservation of mass and energy, as well as the fact that $f$ and $g$ has the same initial condition $f_0$, we conclude that
\begin{equation*}
\begin{split}
\frac{d}{dt}\norm{f-g}_{L^1_v} \leq& C_T \left(\norm{f_0}_{L^1_{2,v}}\norm{f-g}_{L^1_{2,v}} 
+ \norm{f_0}_{L^1_{2,v}}^2 \norm{f-g}_{L^\infty_v}\right),
\end{split}
\end{equation*}
proving the desired result.
\end{proof}
\bigskip


\subsection{Evolution of $\norm{f-g}_{L^1_{2,v}}$}

The most problematic term to appear in our evolution equation is that of the $L^1_{2,v}-$norm. We have the following:

\bigskip
\begin{lemma}\label{lem:L12v}
Let $0\leq T<T_0$. Then, there exists $C_T >0$  such that for all $t\in [0,T]$:
$$\frac{d}{dt}\norm{f-g}_{L^1_{2,v}} \leq C_T \left[M_{2+\gamma}(t)\norm{f-g}_{L^1_v} + \norm{f-g}_{L^1_{2,v}} + \left(1+M_{2+\gamma}(t)\right)\norm{f-g}_{L^\infty_{v}}\right],$$
where $M_{2+\gamma}$ is the $(2+\gamma)^{\mbox{th}}$ moment of $f+g$.
\end{lemma}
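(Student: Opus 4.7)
The plan is to mimic the proof of Lemma \ref{lem:L1v}, applying Lemma \ref{lem:integralQ} with the test function $\Psi(v) = (1+\abs{v}^2)\mbox{sgn}(f-g)(v)$ against $Q(f)-Q(g)$. This produces an integral of $|v-v_*|^\gamma b(\cos\theta)P(f,g)W$ with $P(f,g)$ as in $\eqref{P}$ and $W := \Psi'+\Psi'_*-\Psi-\Psi_*$. I would decompose $P(f,g)$ as in the proof of Lemma \ref{lem:L1v},
\[
P(f,g) = P_1 + P_2 + P_3, \quad P_1 = (f-g)f_*(1+f'+f'_*),\ P_2 = g(f_*-g_*)(1+f'+f'_*),
\]
with $P_3 = gg_*[(f'-g')+(f'_*-g'_*)]$, absorbing $(1+f'+f'_*)\leq C_T$ using the uniform $L^\infty$ bound from Theorem \ref{theo:apriori}.

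For the $P_1$-contribution, the key is to \emph{not} bound $W$ by its crude absolute value $4+2(\abs{v}^2+\abs{v_*}^2)$, but to retain the sign of the term $-\Psi(v)(f-g)(v) = -(1+\abs{v}^2)|f-g|(v)$. When paired with the nonnegative factor $f_*(1+f'+f'_*)|v-v_*|^\gamma b\geq 0$, this piece contributes nonpositively to the integrand, while the three remaining pieces of $W$ carry indefinite signs and are bounded in absolute value by $(1+\abs{v'}^2)+(1+\abs{v'_*}^2)+(1+\abs{v_*}^2)$. Using the energy identity $\abs{v'}^2+\abs{v'_*}^2 = \abs{v}^2+\abs{v_*}^2$, the $\abs{v}^2$ on the $(f-g)$-variable then cancels against the good negative term, yielding a pointwise bound on the $P_1$-integrand of the form $C_T|v-v_*|^\gamma(1+\abs{v_*}^2)|f-g|(v)f_*(v_*)$. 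Splitting $|v-v_*|^\gamma \leq C(1+\abs{v}^\gamma+\abs{v_*}^\gamma)$ and expanding into monomials, all resulting terms integrate to either $C_T\norm{f-g}_{L^1_{2,v}}$ (when a $|v|^2$-weight sits on $|f-g|$) or $M_{2+\gamma}(t)\norm{f-g}_{L^1_v}$ (when a $|v_*|^{2+\gamma}$-weight sits on $f_*$). The $P_2$-contribution is symmetric and gives the same bounds.

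The $P_3$-contribution is where the $L^\infty$ norm of $f-g$ enters. Here the difference $(f'-g')$ sits on a primed variable, so the cancellation trick above is unavailable; we simply bound $|(f'-g')W|\leq|f'-g'|(4+2(\abs{v}^2+\abs{v_*}^2))$. The idea then is to perform the $\sigma$-integration first: by the change of variables $d\sigma = \frac{2^{d-1}}{\abs{v-v_*}^{d-1}}ds(v')$ on the sphere $S_{vv_*}$ of radius $|v-v_*|/2$ centered at $(v+v_*)/2$ (as in the proof of Lemma \ref{lem:integration on hyperplanes of Q^+}), one gets the uniform-in-$(v,v_*)$ estimate
\[
\int_{\mathbb{S}^{d-1}}b(\cos\theta)|f(v')-g(v')|\,d\sigma \leq C_T\norm{f-g}_{L^\infty_v},
\]
since the sphere area $c_d(|v-v_*|/2)^{d-1}$ cancels the $|v-v_*|^{-(d-1)}$ Jacobian factor. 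Substituting this back leaves $C_T\norm{f-g}_{L^\infty_v}\int|v-v_*|^\gamma(1+\abs{v}^2+\abs{v_*}^2)g(v)g(v_*)\,dvdv_*$, which after expansion is at most $C_T(1+M_{2+\gamma}(t))\norm{f-g}_{L^\infty_v}$; the $(f'_*-g'_*)$-half of $P_3$ is handled identically by the symmetry $\sigma\to -\sigma$.

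The main obstacle is precisely the cancellation step in the $P_1, P_2$ analysis: a naïve $|W|$-bound would leave the uncontrollable term $\int\abs{v}^{2+\gamma}|f-g|(v)\,dv$, which cannot be absorbed into any of the three quantities on the right-hand side of the lemma using moments of $f+g$ alone. Once one notices that the weight-sign and the sign of $(f-g)(v)$ combine to produce a genuinely negative piece, the $|v|^2$-factor on the $(f-g)$-variable disappears and everything closes; the $P_3$-analysis is then routine given the spherical integration trick.
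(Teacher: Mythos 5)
Your proposal is correct and follows essentially the same route as the paper: both key on the observation that the negative piece $-\Psi\cdot(f-g)=-(1+|v|^2)|f-g|$ (respectively $-\Psi_*\cdot(f_*-g_*)$) absorbs the otherwise fatal $|v|^2$-weight on the unprimed difference, reducing the $P_1$ and $P_2$ contributions to terms controllable by $\|f-g\|_{L^1_{2,v}}$ and $M_{2+\gamma}\|f-g\|_{L^1_v}$, while the primed-variable differences are simply bounded by $\|f-g\|_{L^\infty_v}$. The only cosmetic differences are that the paper decomposes $P(f,g)$ via the symmetric identity $\eqref{arithmetic}$ (giving four integrals $I_1,\dots,I_4$) rather than your telescoping split, and that your $\sigma$-integration change of variables for the $P_3$ term is unnecessary machinery — the paper simply uses $|f'_*-g'_*|\leq\|f-g\|_{L^\infty_v}$ and $\int b\,d\sigma=l_b$ directly, which yields the same estimate.
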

\bigskip

\begin{proof}[Proof of Lemma $\ref{lem:L12v}$]
We proceed like the proof of Lemma $\ref{lem:L1v}$. For a given fixed $T\in [0,T_))$ we have:
\begin{equation}\label{L2ineq1}
\frac{d}{dt}\norm{f-g}_{L^1_{2,v}} = \frac{C_\Phi}{2}\int_{\R^d\times\R^d\times\mathbb{S}^{d-1}} b|v-v_*|^\gamma P(f,g)\left[\Psi'_* + \Psi' - \Psi_* - \Psi\right]\: dv_*dvd\sigma,
\end{equation}
with $\Psi(t,v)= \mbox{sgn}(f-g)(t,v)\left(1+\abs{v}^2\right)$ and $P(f,g)$ given by $\eqref{P}$.
\par Using the algebraic identity $\eqref{arithmetic}$ and known symmetry properties we obtain
\begin{equation}\label{L12decomposition}
\frac{d}{dt}\norm{f-g}_{L^1_v} = C_\Phi \left(\frac{1}{2}I_1 + \frac{1}{4}I_2 + \frac{1}{8}I_3 + \frac{1}{4}I_4\right)
\end{equation}
with
\begin{eqnarray*}
I_1 &=& \int_{\R^d\times\R^d\times\mathbb{S}^{d-1}} b|v-v_*|^\gamma \left[G(\Psi) - \Psi\right] (f-g)(f_*+g_*)\:d\sigma dv_*dv,
\\I_2 &=& \int_{\R^d\times\R^d\times\mathbb{S}^{d-1}} b|v-v_*|^\gamma \left[G(\Psi) - \Psi\right] (f-g)(f_*(f'+f'_*)+g_*(g'+g'_*))\:d\sigma dv_*dv,
\\I_3 &=& \int_{\R^d\times\R^d\times\mathbb{S}^{d-1}} b|v-v_*|^\gamma \left[G(\Psi) - \Psi\right] (f+g)(f_*-g_*)(f'+f'_*+g'+g'_*)\:d\sigma dv_*dv,
\\I_4 &=& \int_{\R^d\times\R^d\times\mathbb{S}^{d-1}} b|v-v_*|^\gamma \left[G(\Psi) - \Psi\right] (f+g)(f_*+g_*)(f'_* - g'_*)\:d\sigma dv_*dv,
\end{eqnarray*}
and where we defined $G(\Psi) = \Psi'_* + \Psi' - \Psi_*$. It is immediate to verify that
\begin{equation}\label{controlG}
\abs{G(\Psi)}\leq 3+\abs{v'}^2+\abs{v'_*}^2 +\abs{v_*}^2 = 2\left(1 + \abs{v_*}^2\right) + \left(1 + \abs{v}^2\right).
\end{equation}

\bigskip
Thanks to the latter bound on $G(\Psi)$ and the fact $\Psi \cdot (f-g) = \left(1+\abs{v}^2\right)\abs{f-g}$ we find that
\begin{eqnarray}
I_1 &\leq& 2l_b \int_{\R^d\times\R^d} \left(1 + \abs{v_*}^2\right)\left(\abs{v}^\gamma+\abs{v_*}^\gamma\right)\abs{f-g}(f_*+g_*)\:dv dv_* \nonumber
\\ &\leq& 4l_b \norm{f_0}_{L^1_{2,v}}\norm{f-g}_{L^1_{2,v}} + 2l_b M_{2+\gamma}\norm{f-g}_{L^1_{v}}.\label{I1}
\end{eqnarray}
where we have used similar estimation as in Lemma $\ref{lem:L1v}$.
\par The term $I_2$ is dealt similarly:
\begin{equation}\label{I2}
I_2 \leq C_T\norm{f_0}_{L^1_{2,v}}\norm{f-g}_{L^1_{2,v}} + C_T M_{2+\gamma}\norm{f-g}_{L^1_{v}}.
\end{equation}
When dealing with $I_3$ we make the symmetric change of $(v,v_*)\to(v_*,v)$ and obtain:
\begin{equation}\label{I3}
I_3 \leq C_T\norm{f_0}_{L^1_{2,v}}\norm{f-g}_{L^1_{2,v}} + C_T M_{2+\gamma}\norm{f-g}_{L^1_{v}}.
\end{equation}
\par Lastly, we find that using similar methods
\begin{eqnarray}
\abs{I_4} &\leq& 4l_b\left(\int_{\R^d\times\R^d} (1+\abs{v}^2)\left(\abs{v}^\gamma + \abs{v_*}^\gamma\right)(f+g)(f_*+g_*) \:dv_*dv\right)\norm{f-g}_{L^\infty_v} \nonumber
\\&\leq& 4 l_b  \left( 2\norm{f_0}_{L^1_{2,v}}^2 + 4\norm{f_0}_{L^1_{v}}M_{2+\gamma}\right)\norm{f-g}_{L^\infty_v}. \label{I4}
\end{eqnarray}

\bigskip
To conclude we just add $\eqref{I1}$, $\eqref{I2}$, $\eqref{I3}$ and $\eqref{I4}$ with appropriate coefficients.
\end{proof}
\bigskip


\subsection{Control of $\norm{f-g}_{L^\infty_{v}}$}

Lastly, we deal with the evolution of the $L^\infty - $norm.

\bigskip
\begin{lemma}\label{lem:Linfty}
Let $0\leq T<T_0$. Then, there exists $C_T >0$  such that for all $t\in [0,T]$:
$$\norm{f-g}_{L^\infty_{v}} \leq C_T \int_0^t \left[\norm{f-g}_{L^1_{2,v}}(u) + \norm{f-g}_{L^\infty_v}(u)\right] \:du.$$
\end{lemma}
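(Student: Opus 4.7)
The strategy is to reformulate the equation for $h := f-g$ as a linear Cauchy problem with $Q^{-}(f)$ as absorbing coefficient, apply Duhamel's formula, and obtain the crucial pointwise-in-$v$ estimate on the source term by combining Carleman representations with the gain of weighted regularity supplied by Theorem \ref{theo:apriori}. Concretely, since $h(0)=0$ and
$$\partial_t h + Q^{-}(f)\,h \;=\; [Q^{+}(f) - Q^{+}(g)] - g\bigl[Q^{-}(f)-Q^{-}(g)\bigr] \;=:\; R,$$
the Duhamel formula together with $Q^{-}(f)\geq 0$ yields $\abs{h(t,v)}\leq \int_0^t \abs{R(u,v)}\,du$. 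The lemma therefore reduces to the pointwise bound $\abs{R(u,v)} \leq C_T \bigl[\norm{h(u)}_{L^1_{2,v}} + \norm{h(u)}_{L^\infty_v}\bigr]$ uniform in $v\in\R^d$ and $u\in[0,T]$, after which taking $\sup_v$ and integrating in $u$ delivers the conclusion.

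To establish this pointwise bound I expand $R$ via the algebraic identity \eqref{arithmetic} into finitely many integrals in which $h$ sits at exactly one of $v, v_*, v', v'_*$. For terms carrying $h$ at a primed variable, I invoke the Carleman representation \eqref{eq: Carleman rep} exactly as in Lemma \ref{lem:controlQ+infty}: the integral reduces to $\int \abs{h(v')}\abs{v-v'}^{-(d-1-\gamma)}\,dv'$, bounded uniformly in $v$ by $C(\norm{h}_{L^1_v}+\norm{h}_{L^\infty_v})$ via Lemma \ref{lem:integration lemma part II}, times a hyperplane integral of $f$ or $g$ uniformly bounded by Proposition \ref{prop:integration on hyperplane of f}. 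For the cubic gain contributions $\int B\,g'g'_*(h+h_*)\,dv_*d\sigma$ in $Q^{+}(f)-Q^{+}(g)$, I factor $\norm{h}_{L^\infty_v}$ out and use $\int B\,g'g'_*\,dv_*d\sigma \leq Q^{+}(g)(v)$, whose $L^\infty_v$-norm is uniformly bounded by Lemma \ref{lem:controlQ+infty}.

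The delicate contribution, and the step I expect to be the main obstacle, is $g(v)\int B\,h_*(1+f'+f'_*)\,dv_*\,d\sigma$ appearing inside $g[Q^{-}(f)-Q^{-}(g)]$. A direct estimate of the inner integral unavoidably produces the weight $(1+\abs{v}^\gamma)$ paired with $\norm{h(u)}_{L^1_{2,v}}$, since $\int \abs{v-v_*}^\gamma \abs{h(v_*)}\,dv_* \leq C(1+\abs{v}^\gamma)\norm{h}_{L^1_{2,v}}$ after splitting $\abs{v-v_*}^\gamma \leq 2^\gamma(\abs{v}^\gamma+\abs{v_*}^\gamma)$. The key point is that Theorem \ref{theo:apriori} guarantees $g\in L^\infty_{s',v}$ uniformly on $[0,T]$ for some $s'>\gamma$ (which holds since $s>d-1$ and $d\geq 3$ imply $\bar s>\gamma$), so $g(v)(1+\abs{v}^\gamma)$ is uniformly bounded by $C_T$; this absorbs the dangerous weight into a clean $C_T\norm{h(u)}_{L^1_{2,v}}$ bound. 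The companion piece $g(v)\int B\,g_*(h'+h'_*)\,dv_*d\sigma$ is of Carleman type and treated exactly as in the previous step. Without this absorption provided by the $L^\infty_{s'}$ regularity, the only fallback would be to exploit the Duhamel exponential $\exp\bigl(-\int_u^t Q^{-}(f)\,ds\bigr)$ together with the lower bound of Lemma \ref{lem:Q^- control} to dampen $(1+\abs{v}^\gamma)$ in $v$; but that route produces only sup-in-time, rather than integral-in-time, control of $\norm{h}_{L^1_{2,v}}$ on the right-hand side, which would be incompatible with the Grönwall closure used in the uniqueness proof. The gain of regularity at infinity is thus what makes the clean integral form stated in the lemma possible.
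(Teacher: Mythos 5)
Your strategy is essentially the same as the paper's. The Duhamel reformulation with absorbing coefficient $Q^-(f)$ is equivalent to the paper's identity $\abs{f-g}(t,v)=\int_0^t\mathrm{sgn}(f-g)\bigl(Q(f)-Q(g)\bigr)\,ds$ followed by dropping the manifestly non-positive contribution $-\tfrac12\abs{f-g}\bigl(f_*(1+f'+f'_*)+g_*(1+g'+g'_*)\bigr)$, and the essential mechanism — expanding the trilinear difference via \eqref{arithmetic}, using the Carleman representation \eqref{eq: Carleman rep} for terms where $h$ sits at a primed slot, and invoking the gain of weighted regularity from Theorem \ref{theo:apriori} to absorb the kinetic weight $(1+\abs{v}^\gamma)$ paired with $\norm{f-g}_{L^1_{2,v}}$ — is exactly the paper's, in both the gain ($J_1$) and loss ($J_2$) pieces. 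The one place you deviate is the term $g(v)\int B\,g_*(h'+h'_*)\,dv_*d\sigma$: the paper simply bounds $\abs{h'},\abs{h'_*}\leq\norm{f-g}_{L^\infty_v}$ and applies the same weighted $L^\infty_{\gamma,v}$ absorption once more, yielding a factor of the form $\norm{f+g}^2_{L^\infty_{\gamma,v}}\norm{f-g}_{L^\infty_v}$, whereas you propose a Carleman estimate. Your route also works, but it is not literally ``the previous step'': since $g$ sits at the unprimed $v_*=v'+v'_*-v$, the Carleman change of variables turns the inner integral into $\int_{E_{v'v}}g\,dE$, i.e.\ $g$ integrated over the translated hyperplane through $v'$ rather than through $v$; Proposition \ref{prop:integration on hyperplane of f} is uniform in both $v$ and $v'$ so this still closes, but the observation deserves to be made explicit, and the direct $\norm{f-g}_{L^\infty_v}$ bound is simpler. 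Your check that $\bar{s}>\gamma$, and hence that $f,g\in L^\infty_{s',v}\subset L^\infty_{\gamma,v}$ uniformly on $[0,T]$ for a suitable $s'<\bar{s}$, is correct and is precisely the point the paper flags with the remark that $\gamma<d-1<s$.
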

\bigskip

\begin{proof}[Proof of Lemma $\ref{lem:Linfty}$]
Given $T\in [0,T_0)$ and $t\in [0,T]$, we have that since $f(0)=g(0)$:
\begin{equation*}
\begin{split}
\abs{f(t)-g(t)} &= \int_0^t  \mbox{sgn}(f-g)(s)\left(Q(f(s))-Q(g(s))\right)\!ds\\&=C_\Phi\int_0^t\int_{\R^d\times\mathbb{S}^{d-1}}b\left(\mbox{cos}\:\theta\right)\abs{v-v_*}^\gamma  \mbox{sgn}(f-g)P(f',g')\:d\sigma dv_*ds 
\\&\quad- C_\Phi\int_0^t \int_{\R^d\times\mathbb{S}^{d-1}}b\left(\mbox{cos}\:\theta\right)\abs{v-v_*}^\gamma  \mbox{sgn}(f-g)P(f,g)\:d\sigma dv_*ds
\\&=J_1+J_2.
\end{split}
\end{equation*}
where $P$ is given by $\eqref{P}$, and we have used the convention $f''=f$ and $g''=g$.

\bigskip
Using the algebraic identity $\eqref{arithmetic}$ and the definition of $P$ we find that:
\begin{equation*}
\begin{split}
\abs{P(f',g')} \leq & C_T\left[\abs{f'-g'}(f'_*+g'_*) + \abs{f'_*-g'_*}(f'+g')\right]
\\&+\frac{1}{4}\abs{f_*-g_*}(f'+g')(f'_*+g'_*) + \frac{1}{4}\abs{f-g}(f'+g')(f'_*+g'_*).
\end{split}
\end{equation*}
\par The change of variable $\sigma \to -\sigma$ sends $v'$ to $v'_*$ and vice versa. Thus we find that:
\begin{equation*}
\begin{split}
\abs{J_1} \leq &  C_T \int_0 ^t  \int_{\R^d\times\mathbb{S}^{d-1}}\tilde{b}\left(\mbox{cos}\:\theta\right)\abs{v-v_*}^\gamma \abs{f'-g'}(f'_*+g'_*)\:d\sigma dv_*ds
\\&+\frac{1}{4} \int_0 ^t \int_{\R^d\times\mathbb{S}^{d-1}}b\left(\mbox{cos}\:\theta\right)\abs{v-v_*}^\gamma (f'_*+g'_*)(f'+g')\abs{f_*-g_*}\:d\sigma dv_*ds
\\&+\frac{1}{4}\norm{f-g}_{L^\infty_v} \int_{\R^d\times\mathbb{S}^{d-1}}b\left(\mbox{cos}\:\theta\right)\abs{v-v_*}^\gamma (f'_*+g'_*)(f'+g')\:d\sigma dv_*ds,
\end{split}
\end{equation*}
where we defined $\tilde{b}(x) = b(x) + b(-x)$. The first term can be dealt with using the appropriate Carleman change of variables, leading to the Carleman representation $\eqref{eq: Carleman rep}$. Indeed, one can show that
\begin{equation}\nonumber
\begin{split}
&\int_{\R^d\times\mathbb{S}^{d-1}}\tilde{b}\left(\mbox{cos}\:\theta\right)\abs{v-v_*}^\gamma \abs{f'-g'}(f'_*+g'_*)\:d\sigma dv_* 
\\&\quad\quad\quad=\int_{\R^d}\frac{\abs{f'-g'}}{\abs{v-v'}}\int_{E_{vv'}}\frac{\tilde{b}(\cos \theta)\abs{v-v_*}^\gamma}{\abs{v'_*-v'}^{d-2-\gamma}}\left(f'_*+g'_*\right)\:dE(v'_*)
\\&\quad\quad\quad\leq 2 \norm{b}_{L^\infty}\norm{\int_{E_{vv'}}\left(f'_*+g'_*\right)\:dE(v'_*)}_{L^\infty_v}\norm{\int_{\R^d}\frac{\abs{f'-g'}}{\abs{v-v'}^{d-1-\gamma}}dv'}_{L^\infty_v} 
\\&\quad\quad\quad\leq C_T \left(\norm{f-g}_{L^\infty_v}+\norm{f-g}_{L^1_v}\right)
\end{split}
\end{equation}
due to Proposition $\ref{prop:integration on hyperplane of f}$ and the inequality
\begin{equation}\nonumber
\int_{\R^d}\frac{f(v)}{\abs{v-v'}^{\beta}dv} \leq C_\beta \norm{f}_{L^\infty_v}+\norm{f}_{L^1_v},
\end{equation}
when $\beta<d$. The same technique will work for the third term in $J_1$, yielding
\begin{equation}\nonumber
\begin{split}
&\int_{\R^d\times\mathbb{S}^{d-1}}b\left(\mbox{cos}\:\theta\right)\abs{v-v_*}^\gamma (f'_*+g'_*)(f'+g')\:d\sigma dv_* 
\\&\quad\quad\quad\leq \norm{b}_{L^\infty}\norm{\int_{E_{vv'}}\left(f'_*+g'_*\right)\:dE(v'_*)}_{L^\infty_v}\norm{\int_{\R^d}\frac{\abs{f'+g'}}{\abs{v-v'}^{d-1-\gamma}}dv'}_{L^\infty_v}  \leq C_T.
\end{split}
\end{equation}
We are only left with the middle term of $J_1$. Using the simple inequality
$$\abs{v-v_*}^\gamma = \abs{v'-v'_*}^\gamma \leq \left(1+\abs{v'}^\gamma\right)\left(1 + \abs{v'_*}^\gamma\right).$$
we find that
\begin{equation}\nonumber
\begin{gathered}
\int_{\R^d\times\mathbb{S}^{d-1}}b\left(\mbox{cos}\:\theta\right)\abs{v-v_*}^\gamma (f'_*+g'_*)(f'+g')\abs{f_*-g_*}\:d\sigma dv_*\\
\leq l_b \norm{f+g}^2_{L^{\infty}_{\gamma,v}}\norm{f-g}_{L^1_v} \leq C_T\norm{f-g}_{L^1_v},
\end{gathered}
\end{equation}
where we have used Theorem $\ref{theo:apriori}$ and the fact that $\gamma<d-1<s$.
\par Combining the above yields
\begin{equation}\label{eq:J1 est}
\abs{J_1} \leq C_T \int_0 ^t \left(\norm{f-g}_{L^\infty_v}+\norm{f-g}_{L^1_v}\right)\:ds.
\end{equation}

\bigskip
The term $J_2$ requires a more delicate treatment. Starting again with the algebraic identity $\eqref{arithmetic}$ we find that:
\begin{equation}\nonumber
\begin{gathered}
\abs{P(f,g)-\frac{1}{2}(f-g)\left(f_*\left(1+f'+f'_*\right)+g_*\left(1+g'+g'_*\right)\right)} \\
\leq C_T (f+g)\abs{f_*-g_*}+\frac{1}{4}(f+g)(f_*+g_*)\abs{f'-g'}+\frac{1}{4}(f+g)(f_*+g_*)\abs{f'_*-g'_*}
\end{gathered}
\end{equation}
Thus,
\begin{equation}\nonumber
\begin{gathered}
- \mbox{sgn}(f-g)P(f,g) \leq -\frac{1}{2}\abs{f-g}\left(f_*\left(1+f'+f'_*\right)+g_*\left(1+g'+g'_*\right)\right) \\
+ C_T (f+g)\abs{f_*-g_*}+\frac{1}{4}(f+g)(f_*+g_*)\abs{f'-g'}+\frac{1}{4}(f+g)(f_*+g_*)\abs{f'_*-g'_*}
\end{gathered}
\end{equation}
implying that
\begin{equation*}
\begin{split}
J_2 &\leq   C_T \int_0 ^t  \int_{\R^d\times\mathbb{S}^{d-1}}\tilde{b}\left(\mbox{cos}\:\theta\right)\abs{v-v_*}^\gamma \abs{f_*-g_*}(f+g)\:d\sigma dv_*ds
\\&\quad +\frac{1}{2} \int_0 ^t \norm{f-g}_{L^\infty_v}\int_{\R^d\times\mathbb{S}^{d-1}}b\left(\mbox{cos}\:\theta\right)\abs{v-v_*}^\gamma (f_*+g_*)(f+g)\:d\sigma dv_*ds
\\&\leq C_T\int_{0}^t \left(\norm{f+g}_{L^\infty_{v,\gamma}}\norm{f-g}_{L^1_{2,v}}+\norm{f+g}^2_{L^\infty_{v,\gamma}}\norm{f-g}_{L^\infty_{v}}\right)\:ds \\
\\&\leq C_T \int_0^t \left(\norm{f-g}_{L^1_{2,v}}+\norm{f-g}_{L^\infty_v}\right)\:ds.
\end{split}
\end{equation*}
Combining the estimations for $J_1$ and $J_2$ yields the desired result.
\end{proof} 
\bigskip


\subsection{Uniqueness of the Boltzmann-Nordheim equation}

We are finally ready to prove our main theorem for this section.

\begin{proof}[Proof of Theorem $\ref{theo:uniqueness}$]:
Combining Lemma $\ref{lem:L1v}$, $\ref{lem:L12v}$ and $\ref{lem:Linfty}$ we find that for any given $T\in [0,T_0)$ the following inequalities hold:
\begin{equation}\label{inequalitiesuniqueness}
\left\{\begin{array}{rl} &\displaystyle{\frac{d}{dt}\norm{f-g}_{L^1_v} \leq C_T \left[\norm{f-g}_{L^1_{2,v}} + \norm{f-g}_{L^\infty_{v}}\right]} \vspace{2mm} \\ \vspace{2mm} &\displaystyle{\frac{d}{dt}\norm{f-g}_{L^1_{2,v}} \leq C_T \left[M_{2+\gamma}(t)\norm{f-g}_{L^1_v} + \norm{f-g}_{L^1_{2,v}} + (1+M_{2+\gamma}(t))\norm{f-g}_{L^\infty_{v}}\right]} \vspace{2mm} \\ \vspace{2mm} &\displaystyle{\norm{f-g}_{L^\infty_{v}}\leq C_T \int_0^t \left[\norm{f-g}_{L^1_{2,v}}(u) + \norm{f-g}_{L^\infty_v}(u)\right] \:du.}\end{array}\right.,
\end{equation}
where $C_T$ can be chosen to be the same in all the inequalities.
\par As the $L^1_v$, $L^1_{2,v}$ and $L^\infty_{v}$-norms of $f$ and $g$ are bounded uniformly on $[0,T]$ we see from (\ref{inequalitiesuniqueness}) that
$$\norm{f-g}_{L^1_v} \leq C_T t,$$
$$\norm{f-g}_{L^\infty_v}\leq C_T t.$$
Moreover, due to Proposition $\ref{prop:explosionM2gamma}$ we know that the rate of blow up of $M_{2+\gamma}$ is at worst of order $1/t$. More precisely there exists a constant $C_1$ that may depend on $T,d,\gamma$, $\sup_{t\in[0,T]}\norm{f}_{L^\infty_v}$, $\sup_{t\in[0,T]}\norm{g}_{L^\infty_v}$, the appropriate norms of $f_0$, or the bound of the $(2+\gamma)^{th}$ moment if it is bounded, such that
\begin{equation}
M_{2+\gamma}(t) \leq \frac{C_1}{t}.
\end{equation}
This, together with the middle inequality of (\ref{inequalitiesuniqueness}) implies that
$$\frac{d}{dt}\norm{f-g}_{L^1_{2,v}}\leq C_T\left(C_TC_1 + 2\norm{f_0}_{L^1_{2,v}}+C_T(T+C_1)\right),$$
from which we conclude that
$$\norm{f-g}_{L^1_{2,v}}\leq \left(C^2_TC_1 + 2C_T\norm{f_0}_{L^1_{2,v}}+C^2_T(T+C_1)\right)t.$$
Iterating this process shows that there exists $C_{n,T}>0$ such that 
$$\max\left(\norm{f-g}_{L^1_v},\norm{f-g}_{L^1_{2,v}},\norm{f-g}_{L^\infty_v}\right) \leq C_{n.T}t^n,$$
though the dependency of $C_{n,T}$ on $n$ may be slightly complicated. We will continue following the spirit of Nagumo's fixed point theorem.

\bigskip 
Firstly, we notice that by defining $t_0=\min\{T,1/(2C_T)\}$, a simple estimation in the third inequality of (\ref{inequalitiesuniqueness}) shows that for any $t\in[0,t_0]$:
$$\sup_{t\in[0,t]}\norm{f-g}_{L^\infty_{v}} \leq 2C_Tt\sup_{t\in[0,t]}\norm{f-g}_{L^1_{2,v}}.$$
This, together with the second inequality of (\ref{inequalitiesuniqueness}) and the moment bounds implies that for any $t\in[0,t_0]$ we have
\begin{equation}\label{finalineq}
\frac{d}{dt}\norm{f-g}_{L^1_{2,v}} \leq \frac{K_1}{t}\norm{f-g}_{L^1_{2,v}} + K_2 \sup\limits_{[0,t]} \norm{f-g}_{L^1_{2,v}},
\end{equation}
where $K_1=C_TC_1$ and $K_2=C_T\left(1+2C_T T+2C_TC_1\right)$.\\

\bigskip
Let $n\in \N$ be such that $K_1\leq n$ and define $X(t) =\norm{f-g}_{L^1_{2,v}}/t^n$. As $X(t)\leq C_{n+2,T}t^2$ and $X(0)=0$ we conclude that $X(t)$ is differentiable at $t=0$ and as such, in $[0,t_0]$.
We have that for $t\in[0,t_0]$:
\begin{eqnarray*}
\frac{d}{dt}X(t) &=& \frac{1}{t^n}\left(\frac{d}{dt}\norm{f-g}_{L^1_{2,v}} -\frac{n}{t}\norm{f-g}_{L^1_{2,v}}\right)
\\&\leq& \frac{K_2}{t^n}\sup\limits_{[0,t]} \norm{f-g}_{L^1_{2,v}}\leq K_2\sup\limits_{[0,t]} X(u),
\end{eqnarray*}
which implies that $X(t) \leq K_2 \sup_{[0,t]}X(u) t$. Continuing by induction we conclude that for any $n\in \N$ and $t\in [0,t_0]$
$$X(t) \leq \frac{K_2^n t^n}{n!}\sup_{[0,t]}X(u).$$
Taking $n$ to infinity shows that $X(t)=0$ for all $t\in [0,t_0]$, proving that $f=g$ on that interval. If $t_0=T$ we are done, else we repeat the same arguments, starting from $t_0$ where the functions are equal, on the interval $[t_0,2t_0]$. Continuing inductively we conclude the uniqueness in $[0,T]$.
\end{proof}
\bigskip

We finally have all the tools to show Theorem \ref{theo:uniqeandproperties}
\begin{proof}[Proof of Theorem \ref{theo:uniqeandproperties}]
This follows immediately from Theorem \ref{theo:apriori}, Proposition \ref{prop:moments} and Theorem \ref{theo:uniqueness}.
\end{proof}

\section{Local existence of solutions}\label{sec:existence}

In this section we will develop the theory of existence of local in time solutions to the Boltzmann-Nordheim equation and prove Theorem \ref{theo:existence}. From this point onwards, we assume that $f_0$ is not identically $0$. 
\par The method of proof we will employ to show the above theorem involves a time discretisation of equation $\eqref{BN}$ along with an approximation of the Boltzmann-Nordheim  collision operator $Q$, giving rise to a sequence of approximate solutions to the equation.
\bigskip


\subsection{Some properties of truncated operators}\label{subsec:truncatedoperators}

The idea of approximating the collision kernel in the case of hard potentials is a common one in the Boltzmann equation literature (see for instance \cite{Ark}\cite{Ark1} or \cite{MisWen}). For $n\in\N$, we consider the following truncated operators:

$$Q_n(f) = C_\Phi\int_{\R^d\times \mathbb{S}^{d-1}}\left(|v-v_*|\wedge n\right)^\gamma b(\theta)\left[f'f'_*(1+f+ f_*) - ff_*(1+f'+f'_*)\right]dv_*d\sigma.$$
where $x \wedge y = \min(x,y)$.
\par We associate the following natural decomposition to the truncated operators:
$$Q_n(f) = Q_n^+(f)-f Q_n^-(f) ,$$
with $Q^+$ and $Q^-$ defined as in $\eqref{Q+}-\eqref{Q-}$. We have the following:

\bigskip
\begin{lemma}\label{lem:controlQ-n}
For any $f\in L^1_{2,v}\cap L^\infty_v$ we have that:
\begin{itemize}
\item $ \norm{fQ^-_n(f)}_{L^1_{2,v}} \leq C_\Phi l_b n^\gamma \left(1+2\norm{f}_{L^\infty_v}\right)\norm{f}_{L^1_{2,v}}^2,$
\item $\norm{Q^-_n(f)}_{L^\infty_v} \leq C_\Phi l_b n^\gamma \left(1+2\norm{f}_{L^\infty_v}\right)\norm{f}_{L^1_{v}},$
\item if  $f \geq 0$, then for any $v\in\R^d$
$$Q_n^-(f)(v) \geq C_\Phi l_b \left(n^\gamma \wedge \left(1+\abs{v}^\gamma\right)\right)\norm{f}_{L^1_v} - C_\Phi C_\gamma l_b \norm{f}_{L^1_{2,v}},$$
where $C_\gamma>0$ is defined by $\eqref{Cgamma}$.
\end{itemize}
\end{lemma}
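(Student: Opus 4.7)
The first two bounds are essentially trivial and I would dispatch them in a few lines. For every $v,v_*$ we have the crude bounds $(|v-v_*|\wedge n)^\gamma\leq n^\gamma$ and $1+f'+f'_*\leq 1+2\norm{f}_{L^\infty_v}$, and $\int_{\mathbb S^{d-1}}b(\cos\theta)\,d\sigma=l_b$ by $\eqref{lb}$. Pulling the angular integral out of $Q_n^-$ immediately gives the pointwise estimate
\begin{equation*}
Q_n^-(f)(v)\leq C_\Phi l_b n^\gamma\left(1+2\norm{f}_{L^\infty_v}\right)\norm{f}_{L^1_v},
\end{equation*}
which is precisely the second bullet. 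Multiplying by $f(v)(1+\abs v^2)$, integrating in $v$, and using $\norm{f}_{L^1_v}\leq\norm{f}_{L^1_{2,v}}$ yields the first bullet.

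For the third bullet, which is the substantive point, I would replicate the strategy of Lemma \ref{lem:Q^- control} but adapted to the truncated kernel. Since $f\geq 0$ we can throw away $f'+f'_*$ from the integrand to get
\begin{equation*}
Q_n^-(f)(v)\geq C_\Phi\int_{\R^d\times\mathbb S^{d-1}}(|v-v_*|\wedge n)^\gamma b(\cos\theta)f_*\,dv_*d\sigma.
\end{equation*}
The heart of the argument is then the pointwise inequality
\begin{equation}\label{eq:truncpointwise}
(|v-v_*|\wedge n)^\gamma\geq \bigl(n^\gamma\wedge(1+\abs v^\gamma)\bigr)-(1+\abs{v_*}^\gamma),
\end{equation}
which I would verify by a short case analysis on whether $|v-v_*|\leq n$ or not, combined with the elementary subadditive bound $\abs x^\gamma-\abs y^\gamma\leq \abs{x-y}^\gamma$ already used in Lemma \ref{lem:Q^- control}. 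Concretely: if $|v-v_*|\leq n$ the left side equals $|v-v_*|^\gamma\geq\abs v^\gamma-\abs{v_*}^\gamma$, which exceeds the right side since the right side is at most $\abs v^\gamma-\abs{v_*}^\gamma$; if $|v-v_*|> n$ the left side equals $n^\gamma$, and the two subcases $n^\gamma\leq 1+\abs v^\gamma$ and $n^\gamma>1+\abs v^\gamma$ are both immediate (in the second subcase one uses $\abs v^\gamma< n^\gamma$).

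Once \eqref{eq:truncpointwise} is in hand, integrating against $b(\cos\theta)f_*$ gives
\begin{equation*}
Q_n^-(f)(v)\geq C_\Phi l_b\bigl(n^\gamma\wedge(1+\abs v^\gamma)\bigr)\norm{f}_{L^1_v}-C_\Phi l_b\int_{\R^d}(1+\abs{v_*}^\gamma)f_*\,dv_*,
\end{equation*}
and the last integral is bounded by $C_\gamma\norm{f}_{L^1_{2,v}}$ directly from the definition $\eqref{Cgamma}$ of $C_\gamma$. The main (and only real) obstacle is locating the correct form of the truncated subadditivity estimate \eqref{eq:truncpointwise}; the rest is bookkeeping.
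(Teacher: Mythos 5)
Your argument is correct and follows essentially the same route as the paper: for the first two bullets the paper also just bounds the truncated kernel by $n^\gamma$, the quantum factor by $1+2\norm{f}_{L^\infty_v}$, and uses $l_b$; for the third it also discards the nonnegative $f'+f'_*$ term, splits according to $\abs{v-v_*}\lessgtr n$, and applies the subadditivity $\abs{v-v_*}^\gamma\geq\abs v^\gamma-\abs{v_*}^\gamma$ on the bounded region before invoking $C_\gamma$. Your reformulation of that splitting as the single pointwise bound \eqref{eq:truncpointwise} is just a tidier packaging of the identical estimate (and indeed in the case $\abs{v-v_*}>n$ no subcases are needed, since the right-hand side is strictly less than $n^\gamma\wedge(1+\abs v^\gamma)\leq n^\gamma$).
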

\bigskip

\begin{proof}[Proof of Lemma $\ref{lem:controlQ-n}$]
As
$$Q^-_n(f)(v) = C_\Phi \int_{\R^d\times \mathbb{S}^{d-1}}\left(n \wedge \abs{v-v_*}\right)^\gamma b(\mbox{cos} \:\theta) f_*\left[1+f'_*+f'\right]\:dv_*d\sigma.$$
The first two inequalities are easily obtained by bounding $f'_*+f'$ by $2\norm{f}_{L^\infty_v}$ and the collision kernel by $n^\gamma b(\mbox{cos} \:\theta)$.

\bigskip
To show the last inequality we use the non-negativity of $f$ and mimic the proof of Lemma $\ref{lem:Q^- control}$:
\begin{eqnarray*}
Q^-_n(f)(v) &\geq& C_\Phi \int_{\R^d\times \mathbb{S}^{d-1}}\left(n \wedge \abs{v-v_*}\right)^\gamma b(\mbox{cos} \:\theta) f_*\:dv_*d\sigma
\\ &\geq& C_\Phi l_b \left[\int_{\abs{v-v_*}\leq n}\abs{v-v_*}^\gamma  f_*\:dv_* + \int_{\abs{v-v_*}\geq n}n^\gamma  f_*\:dv_*\right]
\\&\geq& C_\Phi l_b \left[\int_{\abs{v-v_*}\leq n}\big((1+\abs{v}^\gamma)-(1+\abs{v_*}^\gamma)\big) f_*\:dv_* + \int_{\abs{v-v_*}\geq n}n^\gamma  f_*\:dv_*\right]
\\&\geq& C_\Phi l_b \left[\left(n^\gamma \wedge \left(1+\abs{v}^\gamma\right)\right)\norm{f}_{L^1_v} -C_\gamma\int_{\abs{v-v_*}\leq n} (1+\abs{v_*}^2)f_*\:dv_*\right],
\end{eqnarray*}
where $C_\gamma$ was defined in $\eqref{Cgamma}$. The proof is now complete.
\end{proof}
\bigskip

As we saw in Section $\ref{sec:apriori}$, the control of the integral of $Q^+$ over the hyperplanes $E_{vv'}$ is of great importance in the study of $L^\infty$-norm for the solutions to the Boltzmann-Nordheim equation. We thus strive to find a similar result for the $Q^+_n$ operators. 

\bigskip
\begin{lemma}\label{lem:controlQ+n}
Let $f$ be in $L^1_{2,v}\cap L^\infty_v$. Then:
\begin{itemize}
\item $ \norm{Q^+_n(f)}_{L^1_{2,v}} \leq 2C_\Phi l_b n^\gamma \left(1+2\norm{f}_{L^\infty_v}\right)\norm{f}_{L^1_{2,v}}^2,$
\item If $f \geq 0$ then for almost every $(v,v')$
$$\int_{E_{vv'}}Q^+_n(f)(v'_*)\:dE(v'_*) \leq C_{+E}\norm{f}_{L^1_v}\left(1+2\norm{f}_{L^\infty_v}\right)\left[\frac{\abs{\mathbb{S}^{d-1}}}{d+\gamma-1}\norm{f}_{L^\infty_v} + \norm{f}_{L^1_v}\right],$$
where $C_{+E}$ was defined in Lemma $\ref{lem:integration on hyperplanes of Q^+}$,
\item If there exists $E_f>0$ such that for almost every $(v,v')$
$$\int_{E_{vv'}}\abs{f'_*}\:dE(v'_*) \leq E_f$$
then
$$\norm{Q_n^+(f)}_{L^\infty_v} \leq C_{+}E_f\left(1+2\norm{f}_{L^\infty_v}\right)\left[\frac{\abs{\mathbb{S}^{d-1}}}{1+\gamma}\norm{f}_{L^\infty_v} + \norm{f}_{L^1_v}\right],$$
where $C_+$ was defined in Lemma $\ref{lem:controlQ+infty}$.
\end{itemize}
\end{lemma}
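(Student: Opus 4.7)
\emph{Proof plan.} The three inequalities are independent and each follows by adapting a tool already established in this section; I would treat them in order.

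For (i), I would apply the pre/post-collisional change of variables $(v,v_*,\sigma)\mapsto(v',v'_*,\sigma)$ (which preserves $dv\,dv_*\,d\sigma$) to $\int_{\R^d} Q_n^+(f)(v)(1+|v|^2)\,dv$. After substitution the weight becomes $1+|v'|^2$, controlled via conservation of energy by $1+|v'|^2\leq (1+|v|^2)+(1+|v_*|^2)$, while the integrand carries $f f_*(1+f'+f'_*)$. Bounding $(|v-v_*|\wedge n)^\gamma\leq n^\gamma$, $1+f'+f'_*\leq 1+2\|f\|_{L^\infty_v}$, and integrating $b$ against $d\sigma$ to obtain $l_b$, the symmetry between the two weight pieces produces $2\|f\|_{L^1_v}\|f\|_{L^1_{2,v}}\leq 2\|f\|_{L^1_{2,v}}^2$.

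For (ii), I would follow the proof of Lemma~\ref{lem:integration on hyperplanes of Q^+} almost verbatim: introduce the Gaussian approximation $\varphi_m$ of the surface measure on $E_{vv'}$, perform the $(v,v_*,\sigma)\mapsto(v',v'_*,\sigma)$ substitution, bound $1+f'+f'_*\leq 1+2\|f\|_{L^\infty_v}$, and carry out the $\sigma$-integration via the concentration estimate on the sphere $\mathbb{S}_{vv_*}$ that is invoked there. This yields
\begin{equation*}
\int_{\R^d}\varphi_m(v)\,Q_n^+(f)(v)\,dv\;\leq\;2C_\Phi b_\infty C_d\left(1+2\|f\|_{L^\infty_v}\right)\int_{\R^d\times\R^d}\frac{(|v-v_*|\wedge n)^\gamma}{|v-v_*|}\,f(v)f(v_*)\,dv\,dv_*.
\end{equation*}
I would then split the inner $v_*$-integral at $|v-v_*|=1$: on $\{|v-v_*|\leq 1\}$ the integrand is $\leq |v-v_*|^{\gamma-1}f_*$, so bounding $f_*$ by $\|f\|_{L^\infty_v}$ and integrating $|x|^{\gamma-1}$ on the unit ball produces the factor $|\mathbb{S}^{d-1}|/(d+\gamma-1)$; on $\{|v-v_*|>1\}$ the integrand is $\leq f_*$, contributing $\|f\|_{L^1_v}$. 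Letting $m\to\infty$ (the bound being $m$-independent) closes the argument. The truncation is precisely what legitimizes the large-velocity piece that the untruncated kernel would not directly control.

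For (iii), I would apply Carleman's representation \eqref{eq: Carleman rep} directly to $Q_n^+$. The bound $(|v-v_*|\wedge n)^\gamma\leq|v-v_*|^\gamma$, together with $|v'-v'_*|=|v-v_*|$ and the Pythagorean identity $|v-v_*|^2=|v-v'|^2+|v-v'_*|^2$ on $E_{vv'}$, gives
\begin{equation*}
\frac{B_n\bigl(2v-v'_*-v',\tfrac{v'_*-v'}{|v'_*-v'|}\bigr)}{|v'_*-v'|^{d-2}}\;\leq\;\frac{b_\infty}{|v-v_*|^{d-2-\gamma}}\;\leq\;\frac{b_\infty}{|v-v'|^{d-2-\gamma}},
\end{equation*}
where the last inequality uses $d-2-\gamma\geq 0$ (whence the requirement $d\geq 3$, exactly as in Lemma~\ref{lem:controlQ+infty}). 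Factoring out $1+2\|f\|_{L^\infty_v}$ and the hyperplane bound $E_f$, matters reduce to estimating $\int|f(v')|\,|v-v'|^{-(d-1-\gamma)}\,dv'$. Splitting at $|v-v'|=1$, using $\|f\|_{L^\infty_v}$ on the near part (where $\int_{|x|\leq 1}|x|^{-(d-1-\gamma)}dx=|\mathbb{S}^{d-1}|/(1+\gamma)$) and $\|f\|_{L^1_v}$ on the far part (where $|v-v'|^{-(d-1-\gamma)}\leq 1$) produces the bracketed quantity. The only substantive point in the whole lemma is this use of $\gamma\leq d-2$; everything else is a repetition of techniques from Lemmas~\ref{lem:controlQ+infty} and~\ref{lem:integration on hyperplanes of Q^+} with the harmless replacement $|v-v_*|^\gamma\mapsto(|v-v_*|\wedge n)^\gamma$, which can only improve the estimates. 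I do not anticipate any genuine obstacle.
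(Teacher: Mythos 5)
Your proof is correct and follows the paper's route: item (i) by the pre/post-collisional change of variables (the paper phrases this as reducing to Lemma~\ref{lem:controlQ-n} via $\int(1+\abs{v}^2)Q_n^+(f)\,dv\leq 2\int(1+\abs{v}^2)fQ_n^-(f)\,dv$), and items (ii) and (iii) by observing that $(\abs{v-v_*}\wedge n)^\gamma\leq\abs{v-v_*}^\gamma$, so Lemmas~\ref{lem:integration on hyperplanes of Q^+} and~\ref{lem:controlQ+infty} transfer verbatim, followed by the elementary split yielding $\int f\abs{v-v_0}^{-\alpha}\,dv\leq\frac{\abs{\mathbb{S}^{d-1}}}{d-\alpha}\norm{f}_{L^\infty_v}+\norm{f}_{L^1_v}$. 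One small remark to correct: in (ii) the region $\abs{v-v_*}>1$ is controlled by $\gamma\leq 1$, which gives $\abs{v-v_*}^{\gamma-1}\leq 1$ for the \emph{untruncated} kernel as well, so the truncation is not what ``legitimizes'' that piece — its only role in this lemma is to let the earlier $Q^+$-estimates apply unchanged to $Q_n^+$, not to tame a divergence.
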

\bigskip

\begin{proof}[Proof of Lemma $\ref{lem:controlQ+n}$]
To prove the first inequality, we notice that the change of variable $(v',v'_*) \rightarrow (v,v_*)$ yields the following inequality:
$$\int_{\R^d}\left(1+\abs{v}^\gamma\right)Q^+_n(f)\:dv\leq 2\int_{\R^d}\left(1+\abs{v}^2\right)fQ^-_n(f)\:dv,$$
from which the result follows due to Lemma $\ref{lem:controlQ-n}$.

\bigskip
The last two inequalities follow respectively from the Lemma $\ref{lem:integration on hyperplanes of Q^+}$ and Lemma $\ref{lem:controlQ+infty}$, as the truncated kernel is bounded by the collision kernel, and the following inequality for $\alpha < d$:

\begin{eqnarray*}
\int_{\R^d} \frac{f(v)}{\abs{v-v_0}^\alpha}\:dv &\leq&  \frac{\abs{\mathbb{S}^{d-1}}}{d-\alpha}\norm{f}_{L^\infty_v} + \norm{f}_{L^1_v}.
\end{eqnarray*}
\end{proof}
\bigskip


\subsection{Construction of a sequence of approximate solutions to the truncated equation}\label{subsec:constructionseq}

In this subsection we will start our path towards showing local existence of solutions to the Boltzmann-Nordheim equation by finding solutions to the truncated Boltzmann-Nordheim equation
$$\partial_t f_n = Q_n(f_n)$$
on an interval $[0,T_0]$, when $n\in \N$ is fixed and $T_0$ is independent of $n$. We will do so by an explicit Euler scheme.\\
To simplify the writing of what follows, we denote the mass and the energy of $f_0$ respectively by $M_0$ and $M_2$ and  we introduce the following notations:

\begin{equation}\label{CL}
C_L = C_\Phi l_b M_0,
\end{equation}
\begin{equation}\label{Kinfty}
K_\infty = \frac{2\norm{f_0}_{L^\infty_v}}{\min\left(1,C_L\right)},
\end{equation}
\begin{equation}\label{Einfty}
E_\infty = \sup\limits_{(v,v') \in \R^d\times\R^d}\left(\int_{E_{vv'}}f_0(v'_*) \:dE(v'_*) \right) + C_{+E}M_0(1+2K_\infty)\left[\frac{\abs{\mathbb{S}^{d-1}}}{d+\gamma-1}K_\infty + M_0\right]
\end{equation}
and 
\begin{equation}\label{Cinfty}
C_\infty = C_\Phi C_\gamma l_b (M_0+M_2)K_\infty + C_+ E_\infty\left(1+2K_\infty\right)\left[\frac{\abs{\mathbb{S}^{d-1}}}{1+\gamma}K_\infty + M_0\right].
\end{equation}
\bigskip

We are now ready to define the time interval on which we will work:
\begin{equation}\label{T0}
T_0 = \min\left\{1 \:;\: \frac{K_\infty}{2C_\infty}\min\left(1,C_L\right)\right\}.
\end{equation}

\bigskip
For a fixed $n$ we consider the following explicit Euler scheme on $[0,T_0]$: for $j\in \N$ we define
\begin{equation}\label{inductionseq}
\left\{ \begin{array}{rl}&\displaystyle{f^{(0)}_{j,n}(v) = f_0(v)} \vspace{2mm} \\ \vspace{2mm} &\displaystyle{f^{(k+1)}_{j,n}(v) = f^{(k)}_{j,n}(v)\left(1-\Delta_j Q^-_n\left(f^{(k)}_n\right)\right) + \Delta_j Q^+_n\left(f^{(k)}_{j,n}\right)}, \mbox{for}\: k \in \left\{0,\dots,\left[\frac{T_0}{\Delta_j}\right]\right\}, \end{array}\right\}.
\end{equation}
where $\Delta_j$, the time step, is chosen as follows:
\begin{equation}\label{Deltan}
\Delta_j =\min\left\{ 1, \frac{1}{2 C_\Phi  l_b j n^{\gamma} M_0\left[1+2K_\infty\right]}\right\}.
\end{equation}

\bigskip
We notice the following properties of the sequence:

\bigskip
\begin{prop}\label{prop:approx}
For all $k$ in $\{0,\dots,[T_0/\Delta_j]\}$, we have that $f^{(k)}_{j,n}$ satisfies:
\begin{enumerate}
\item[(i)] $f^{(k)}_{j,n} \geq 0$;
\item[(ii)] $\norm{f^{(k)}_{j,n}}_{L^1_{v}} = M_0$, $\norm{\abs{v}^2f^{(k)}_{j,n}}_{L^1_{v}} = M_2$ and $\:\int_{\R^d}vf^{(k)}_{j,n} \:dv = M_1$;
\item[(iii)]
$$f^{(k)}_{j,n}(v) \leq f_0(v) -  C_L\sum\limits_{l=0}^{k-1}\Delta_j\left(n^\gamma \wedge \left(1+\abs{v}^\gamma\right)\right)f^{(l)}_{j,n} + k\Delta_j C_\infty$$
and for almost every $(v,v')$
$$\int_{E_{vv'}} f^{(k)}_{j,n}(v'_*)\:dE(v'_*) \leq  \int_{E_{vv'}} f_0(v'_*)\:dE(v'_*) + k\Delta_j C_{+E}M_0(1+2K_\infty)\left[\frac{\abs{\mathbb{S}^{d-1}}}{d+\gamma-1}K_\infty + M_0\right] $$
\item[(iv)]
$$\sup\limits_{v\in\R^d}\left[ f_{j,n}^{(k)}(v) +C_L\Delta_j\sum\limits_{l=0}^{k-1}\left(n^\gamma \wedge \left(1+\abs{v}^\gamma\right)\right)f_{j,n}^{(l)}(v)\right] \leq K_\infty$$
and for almost every $(v,v')$,
$$\int_{E_{vv'}} f_{j,n}^{(k)}(v'_*) dE(v'_*) \leq E_\infty.$$
\end{enumerate}
\end{prop}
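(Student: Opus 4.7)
The plan is to prove all four assertions simultaneously by induction on $k$. The base case $k=0$ is immediate: $f_n^{(0)} = f_0$ is non-negative with the prescribed mass/momentum/energy, the sums in (iii)--(iv) are empty, and the inequality $f_0 \leq K_\infty$ and the hyperplane bound $\int_{E_{vv'}} f_0 \, dE \leq E_\infty$ follow from the definitions \eqref{Kinfty} and \eqref{Einfty}. For the inductive step I rewrite the scheme as
\[
f_n^{(k+1)} = f_n^{(k)}\bigl(1 - \Delta_n Q_n^-(f_n^{(k)})\bigr) + \Delta_n Q_n^+(f_n^{(k)}),
\]
and assume (i)--(iv) hold up through step $k$.

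For (i), by Lemma~\ref{lem:controlQ-n} together with $\|f_n^{(k)}\|_{L^\infty} \leq K_\infty$ (from (iv)) and $\|f_n^{(k)}\|_{L^1_v} = M_0$ (from (ii)), I obtain $\|Q_n^-(f_n^{(k)})\|_{L^\infty} \leq C_\Phi l_b n^\gamma(1+2K_\infty)M_0$. The choice of $\Delta_n$ in \eqref{Deltan} then gives $\Delta_n Q_n^-(f_n^{(k)}) \leq 1/2$, and since $Q_n^+(f_n^{(k)}) \geq 0$, $f_n^{(k+1)} \geq 0$. For (ii), testing against $\psi \in \{1, v_i, |v|^2\}$ and using the weak form of Lemma~\ref{lem:integralQ} (which goes through identically for the truncated kernel $(|v-v_*|\wedge n)^\gamma$, since the collision invariants are unaffected by the modulus on relative velocities) yields $\int Q_n(f_n^{(k)})\psi\,dv = 0$, so the recurrence transports the conservation laws from step $k$ to step $k+1$.

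For (iii), I combine the recurrence with the lower bound on $Q_n^-$ from Lemma~\ref{lem:controlQ-n}, using $\|f_n^{(k)}\|_{L^1_{2,v}} = M_0 + M_2$, together with the $L^\infty$ bound on $Q_n^+$ from Lemma~\ref{lem:controlQ+n}(iii), for which the hypothesis $\int_{E_{vv'}} f_n^{(k)}\,dE \leq E_\infty$ is supplied precisely by (iv) at step $k$. The combination produces
\[
f_n^{(k+1)}(v) \leq f_n^{(k)}(v) - \Delta_n C_L\bigl(n^\gamma \wedge (1+|v|^\gamma)\bigr) f_n^{(k)}(v) + \Delta_n C_\infty,
\]
with $C_L$ and $C_\infty$ as in \eqref{CL} and \eqref{Cinfty}; iterating gives the first estimate of (iii). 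The hyperplane estimate follows similarly by discarding the non-negative loss term $f_n^{(k)} Q_n^-(f_n^{(k)})$, integrating $Q_n^+(f_n^{(k)})$ over $E_{vv'}$ via Lemma~\ref{lem:controlQ+n}(ii) with $\|f_n^{(k)}\|_{L^1_v}=M_0$ and $\|f_n^{(k)}\|_{L^\infty} \leq K_\infty$, and iterating.

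For (iv), the key is the calibration of $T_0$. The first inequality of (iii) gives
\[
\min(1,C_L)\Bigl[f_n^{(k+1)} + \Delta_n \!\!\sum_{j=0}^{k}\bigl(n^\gamma \wedge (1+|v|^\gamma)\bigr) f_n^{(j)}\Bigr] \leq f_n^{(k+1)} + C_L \Delta_n \sum_{j=0}^{k}(\cdots) \leq \|f_0\|_{L^\infty} + (k+1)\Delta_n C_\infty,
\]
and since $(k+1)\Delta_n \leq T_0$ and $T_0 C_\infty \leq K_\infty \min(1,C_L)/2 = \|f_0\|_{L^\infty}$, the right-hand side is at most $2\|f_0\|_{L^\infty} = K_\infty \min(1,C_L)$; dividing by $\min(1,C_L)$ yields the $L^\infty$ bound. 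The hyperplane bound is immediate from the second estimate of (iii) together with $(k+1)\Delta_n \leq T_0 \leq 1$ and the definition \eqref{Einfty}. The main subtlety is the interlocking character of the induction: positivity (i) at step $k+1$ requires the $L^\infty$ bound (iv) at step $k$, while (iv) at step $k+1$ requires (iii) at step $k+1$, which in turn needs the hyperplane bound (iv) at step $k$; the whole scheme closes only because $K_\infty$, $E_\infty$, $C_\infty$, and $T_0$ have been tuned so that the gain in each inductive step is absorbed into the margin built into the definitions, \emph{uniformly} in $n$.
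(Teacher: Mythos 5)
Your proof is correct and follows essentially the same route as the paper: a simultaneous induction on $k$, with (i) secured by the choice of $\Delta_n$ and Lemma~\ref{lem:controlQ-n}, (ii) by the collision invariants of $Q_n$, (iii) by combining the lower bound on $Q_n^-$ with the $L^\infty$ and hyperplane bounds on $Q_n^+$ (fed by (iv) at step $k$), and (iv) by the calibration $(k+1)\Delta_n \leq T_0$ against the definitions of $K_\infty$, $E_\infty$, $C_\infty$. The one step worth double-checking — absorbing $\min(1,C_L)$ into the weights and using $T_0 C_\infty \leq \|f_0\|_{L^\infty_v}$ — you have carried out correctly.
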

\bigskip

\begin{proof}[Proof of Proposition $\ref{prop:approx}$]
The proof of the proposition is done by induction. The case $k=0$ follows directly from our definitions of $K_\infty$ and $E_\infty$. We proceed to assume that the claim is valid for $k$ such that $k+1\leq \frac{T_0}{\Delta_j}$.
\par Combining Lemma $\ref{lem:controlQ-n}$ with $(ii)$ and $(iv)$ of Proposition $\ref{prop:approx}$ for $f_{j,n}^{(k)}$ we have that
$$\Delta_j \norm{Q^-_n\left(f^{(k)}_{j,n}\right)}_{L^\infty_v} \leq \Delta_j C_\Phi l_b n^\gamma M_0(1+2K_\infty) \leq \frac{1}{2}.$$
Thus, by definition of $f^{(k+1)}_{j,n}$:
$$f^{(k+1)}_{j,n}(v) \geq \frac{1}{2}f^{(k)}_{j,n}(v) + \Delta_j Q^+_n\left(f^{(k)}_{j,n}\right) \geq 0$$
as $f^{(k)}_{j,n} \geq 0$, proving $(i)$.

\bigskip
Furthermore, we have
$$
\int_{\R^d}\left(\begin{array}{c} 1 \\v \\ \abs{v}^2\end{array}\right)f^{(k+1)}_{j,n}(v)\:dv = \int_{\R^d}\left(\begin{array}{c} 1 \\v \\ \abs{v}^2\end{array}\right)f^{(k)}_{j,n}(v)\:dv + \Delta_j \int_{\R^d}\left(\begin{array}{c} 1 \\v \\ \abs{v}^2\end{array}\right) Q_n(f^{(k)}_{j,n})(v) \:dv.
$$
Since $Q_n$ satisfies the same integral properties of $Q$, we find that the last term is zero. This shows that as $f_{j,n}^{(k)}$ satisfies $(ii)$, so does $f^{(k+1)}_{j,n}$.

\bigskip
In order to prove $(iii)$ we will use the positivity of $f^{(k)}_{j,n}$ along with Lemma $\ref{lem:controlQ-n}$, and Lemma $\ref{lem:controlQ+n}$ together with property $(iv)$ for $f^{(k)}_{j,n}$. This shows that:
$$f^{(k+1)}_{j,n}(v) \leq f^{(k)}_{j,n}(v) -  C_L\Delta_j\left(n^\gamma \wedge \left(1+\abs{v}^\gamma\right)\right)f^{(k)}_{j,n} + \Delta_j C_\infty$$
proving the first part of $(iii)$. Since $Q^-_n(f^{(k)}_{j,n})$ is positive we also find for almost every $(v,v')$
\begin{equation*}
\begin{split}
&\int_{E_{vv'}}f^{(k+1)}_{j,n}(v'_*)\:dE(v'_*) \leq \int_{E_{vv'}}f^{(k)}_{j,n}\:dE(v_*) + \Delta_j  \int_{E_{vv'}}Q_n^+(f^{(k)}_{j,n})\:dE(v_*)
\\&\quad\quad\quad\leq  \int_{E_{vv'}}f^{(k)}_{j,n}\:dE(v_*)+ \Delta_j \left(C_{+E}M_0\left(1+2K_\infty\right)\left[\frac{\abs{\mathbb{S}^{d-1}}}{d+\gamma-1}K_\infty+M_0 \right]\right)
\end{split}
\end{equation*}
where we have used property $(ii)$ of Lemma $\ref{lem:controlQ+n}$, and properties $(ii)$ and $(iv)$ of $f_{j,n}^{(k)}$. Thus, the second part of $(iii)$ is valid by the same property for $f_{j,n}^{(k)}$.

\bigskip
The last property $(iv)$ is a direct consequence of $(iii)$ along with the fact that $(k+1)\Delta_j \leq T_0$, and the definition of $T_0$.
\end{proof}
\bigskip

As a discrete version of the Boltzmann-Nordheim equation, our apriori estimates in Section \ref{sec:apriori} led us to believe that we may be able to propagate moments and weighted $L^\infty$ norm in our sequence. This is indeed the case, as we will state shortly. However, it is important to notice that while the truncated kernel can be thought of as a an appropriate kernel with $\gamma=0$, in order to get bounds that are independent in $n$ we must use estimation that use the $\gamma$ given in the problem. This will lead to a drop in the power we can weight the function against.\\
The following Lemma is easy to prove using similar methods to the ones presented in Section \ref{sec:apriori}. We state it here and leave the proof to the Appendix. 
\begin{lemma}\label{lem:additional properties of the sequence}
Consider the sequence defined in \eqref{inductionseq}.
\begin{itemize}
\item[(i)] Let $s>2$, there exists $C_s >0$ (uniform constant defined in Lemma \ref{lemapp:prop of moments}) such that for any $j \geq j_0=2(1+M_2)C_s/M_0$ we have that
\begin{equation}\label{eq:preprop of moments}
\int_{\R^d}(1+\abs{v}^s) f_{j,n}^{(k)}(v)dv \leq (D_s k\Delta_j +1) \int_{\R^d}(1+\abs{v}^s) f_0(v)dv,
\end{equation}
where $D_s=4C_\Phi C_s l_b(1+2K_\infty)(1+M_2)$. 
\item[(ii)] If $f_0\in L^\infty_{s,v}$ when $s>d+2\gamma$ then for any $s^\prime<s-2\gamma$
$$W_{s^\prime}=\sup_{k,j\geq j_0,n}\norm{f_{j,n}^{(k)}}_{L^\infty_{s^\prime,v}} < \infty.$$
\end{itemize}
\end{lemma}


\subsection{Convergence towards a mass and momentum preserving solution of the truncated Boltzmann-Nordheim equation}\label{subsec:convergenceseq}

In the previous subsection we have constructed a family of functions $\left(f_{j,n}^{(k)}\right)_{k \in \left\{0,\dots,[T_0/\Delta_j]\right\}}$ in $L^1_{2,v}\cap L^\infty_{s',v}$, for $s'<s-2\gamma$, with the same mass and energy as the initial data $f_0$. Our next goal is to use this family in order to find a sequence of functions, $\left(f_{j,n}\right)_{j \in \N}$ in $L^1([0,T_0]\times\R^d)\cap L^{\infty}\left([0,T_0];L^{\infty}_{s',v}(\R^d) \right)$ that converges strongly to a solution of the truncated Boltzmann-Nordheim equation, while preserving the mass and energy of the initial data. The construction of such sequence is fairly straight forward - we view the sequence $\left(f_{j,n}^{(k)}\right)_{k \in \left\{0,[\dots,T_0/\Delta_j]+1\right\}}$ as a constant in time sequence of functions and construct a piecewise function using them. Indeed, we define for any $j\in \N$:
\begin{equation}\label{fn}
f_{j,n}(t,v)=f_{j,n}^{(k)}(v) \quad (t,v) \in [k\Delta_j,(k+1)\Delta_j)\times\R^d,
\end{equation}
where we replace of $([T_0/\Delta_j]+1) \Delta_j$ by $T_0$.

\bigskip
\begin{prop}\label{prop:masspreservsol}
Let $f_0\in L^1_{2,v}\cap L^{\infty}_{s,v}$ for $s>d+2\gamma$. Then, the sequence $\left(f_{j,n}\right)_{j\in \N}$ converges strongly in $L^1([0,T_0]\times\R^d)$ to a function $f_n$ that belongs to $L^1([0,T_0]\times\R^d)$ and $L^{\infty}\left([0,T_0];L^{\infty}_{s',v}(\R^d)\right)$. Moreover:
\begin{itemize}
\item[(i)] $f_n$ is a solution of the truncated Boltzmann-Nordheim equation $\eqref{BN}$ with $Q$ replaced by $Q_n$ and initial data $f_0$,
\item[(ii)] $f_n$ is positive and for all $t$ in $[0,T_0]$, $\norm{\psi (\cdot)f_n(t,\cdot)}_{L^1_v} = \norm{\psi f_0}_{L^1_v}$ for $\psi(v)=1,v,\abs{v}^2$. 
\item[(iii)] $f_n$ satisfies 
$$\sup_{t\leq T_0}\norm{f_n(t,\cdot)}_{L^\infty_v} \leq K_\infty \quad\mbox{and}\quad \sup_{t\leq T_0}\norm{f_n(t,\cdot)}_{L^\infty_{s',v}} \leq W_{s'}$$
for any $s'<s-2\gamma$, where $W_{s'}$ has been defined in Lemma \ref{lem:additional properties of the sequence}.
\end{itemize}
\end{prop}
\bigskip

\begin{proof}[Proof of Proposition $\ref{prop:masspreservsol}$]
For simplicity in the proof we will drop the subscript $n$.
We start by noticing that by its definition and Proposition \ref{prop:approx}, $\left\{f_j \right\}_{j\in\N}$ has the same mass, energy and momentum as $f_0$.\\
We will now show that $\left(f_j\right)_{j\in\N}$ is a Cauchy sequence in $L^1([0,T_0]\times\R^d)$. Indeed, by its definition we find that
$$f_j^{(k)}(v)-f_j^{(0)}(v)=\Delta_j \sum_{l=0}^{k-1}Q_n\left(f_j^{(l)}\right)(v).$$
This, combined with the definition of $f_{j}$, shows that if $t\in [k\Delta_j,(k+1)\Delta_j)$ we have that
\begin{equation}\label{eq:almost solution}
f_j(t,v)=f_0(v) + \int_{0}^{t}Q_n\left(f_j(s,v)\right)\:ds - \left(t-k\Delta_j\right)Q_n\left(f_j^{(k)}\right).
\end{equation}
For a given $j\leq l$ we see that
$$\norm{f_j(t,\cdot)-f_l(t,\cdot)}_{L^1_v} \leq\abs{ \int_{0}^{t}Q^{+}_n(f_j(s,\cdot))-Q^{+}_n(f_l(s,\cdot))dvds}+ E_j$$
where we have used Lemma \ref{lem:controlQ+n} and Proposition \ref{prop:approx} and the symmetry of the collision operators. We also denoted 
$$E_j = 4\left(1+2K_\infty\right)\left(C_{+}E_\infty\left[\frac{\abs{\mathbb{S}^{d-1}}}{1+\gamma}K_\infty+M_0\right]+C_\Phi l_b n^\gamma(M_0+M_2) \right)\Delta_j.$$
We conclude that, using the algebraic property $\eqref{arithmetic}$,
\begin{equation*}
\begin{split}
\norm{f_j(t,\cdot)-f_l(t,\cdot)}_{L^1_v} \leq& C_\Phi l_b n^\gamma(1+2K_\infty) \int_{0}^t \int_{\R^d\times \R^d } \left(f_j+f_l\right)\abs{f_{j,*}-f_{l,*} }dv_*dvds
\\&+2C_\Phi n^\gamma \int_0^t \int_{\R^d\times\R^d\times \mathbb{S}^{d-1}}b(\cos \theta) f_jf_{j,*}\abs{f_j^{\prime}-f_l^{\prime}}dvdv_*d\sigma + E_j
\end{split}
\end{equation*}
Next, using Lemma \ref{lem:additional properties of the sequence} we have that
$$f_i(v)f_i(v_*) \leq \frac{W^2_{s^\prime}}{(1+\abs{v}^{s^\prime})(1+\abs{v_*}^{s^\prime})} \leq \frac{W_{s^\prime}^2}{1+2^{\frac{2-s^\prime}{2}}\left(\abs{v}^2+\abs{v_*}^2 \right)^{\frac{s^\prime}{2}}},$$
for any $i\in \N$. Thus,
\begin{equation*}
\begin{split}
\int_0^t& \int_{\R^d\times\R^d\times \mathbb{S}^{d-1}}b(\cos \theta) f_jf_{j,*}\abs{f_j^{\prime}-f_l^{\prime}}dvdv_*d\sigma 
\\&\leq l_b W_{s'}^2 \int_{\R^d\times \R^d} \frac{\abs{f_j-f_l}}{1+2^{\frac{s^\prime-\gamma}{2}}\abs{v_*}^{s^\prime}}dvdv_*=l_b C_{s^\prime}W^2_{s^\prime}\int_{0}^t\norm{f_j(s,\cdot)-f_l(s,\cdot)}_{L^1_v}ds 
\end{split}
\end{equation*}
if $s^\prime>d$, where $C_{s^\prime}$ is a uniform constant. Since $s>d+2\gamma$ and we can pick any $s^\prime$ up to $s-2\gamma$, the above is valid for $s^\prime=d+\epsilon$ for $\epsilon>0$ small enough.
Thus,
$$\norm{f_j(t,\cdot)-f_l(t,\cdot)}_{L^1_v}  \leq C_n \int_{0}^t\norm{f_j(s,\cdot)-f_l(s,\cdot)}_{L^1_v}ds+E_j$$
where $C_n>0$ is independent of $j$, $l$ and $t$. From which we conclude that
$$\norm{f_j(t,\cdot)-f_l(t,\cdot)}_{L^1_v}  \leq E_j e^{C_nt} $$
As $E_j$ goes to zero as $j$ goes to infinity, the above shows that $\left\{ f_j\right\}_{j\in N}$ converges to a function $f$ both in $L^1_v$ for any fixed $t$ and in $L^{1}\left( [0,T_0]\times \R^d \right)$. Thanks to \\
Passing to an appropriate subsequence, which we still denote by $\left\{ f_j \right\}_{j\in \N}$ we can assume that $f_j$ converges pointwise to $f$ almost everywhere. As such, the preservation of mass and $(iii)$ follow immediately form the associated properties of the sequence. Moreover, thanks to $\eqref{eq:almost solution}$ and the strong convergence we just showed, we conclude that
$$f(t,v)=f_0(v)+\int_{0}^t Q_n(f(s,v))ds,$$
showing $(i)$.

\bigskip
We are only left with showing the conservation of momentum and energy. Using Fatou's lemma we find that
$$\int_{\R^d}\abs{v}^2 f(v)dv \leq \liminf_{j\rightarrow\infty}\int_{\R^d}\abs{v}^2 f_j(v)dv =\int_{\R^d}\abs{v}^2 f_0(v)dv. $$
If we show tightness of the sequence $\left\{ \abs{v}^2 f_j(v) \right\}_{j\in\N}$, i.e. that for any $\epsilon>0$ there exists $R_\epsilon>0$ such that
$$\sup_{j\in\N}\int_{\abs{v}>R_\epsilon}\abs{v}^2f_j(v)dv < \epsilon$$
then the converse will be valid and we would show the conservation of the energy. To prove this we recall Lemma $\ref{lem:technical MisWen lemma}$ for $f_0$ and denote the appropriate convex function by $\psi$. We claim that there exists $C>0$, depending only on the initial data, $\gamma,d$ and the collision kernel \emph{but not $j$} such that for all $j\in\N$, for all $k\in\{0,\dots,\left[T_0/\Delta_j\right]+1\}$,
\begin{equation}\label{eq:conservation mid step}
\int_{\R^d} f^{(k)}_{j}(v)\psi\left(\abs{v}^2\right)\:dv \leq \int_{\R^d}\psi\left(\abs{v}^2\right)f_0(v)\:dv +Ck\Delta_j \norm{f_0}^2_{L^1_{2,v}},
\end{equation}
This will imply that
\begin{equation}\label{eq:uniformtightness}
\int_{\R^d} f_j(t,v)\psi\left(\abs{v}^2\right)\:dv \leq \int_{\R^d}\psi\left(\abs{v}^2\right)f_0(v)\:dv + C\norm{f_0}^2_{L^1_{2,v}},
\end{equation}
from which the desired result follows as $\psi(x)=x\phi(x)$ for a concave, increasing to infinity function $\phi$.
\par  We prove $\eqref{eq:conservation mid step}$ by induction. The case $k=0$ is trivial and we proceed to assume that $(k+1)\Delta_j\leq T_0$ and that inequality $\eqref{eq:conservation mid step}$ is valid for $f_{j}^{(k)}$. Defining
$$M^{(k)}_j = \int_{\R^d} f^{(k)}_j(v)\psi\left(\abs{v}^2\right)\:dv$$
and using the definition of $f_j^{(k)}$ and Lemma $\ref{lem:integralQ}$  we find that
\begin{equation}\label{ineqtight1}
\begin{split}
M^{(k+1)}_j &= M^{(k)}_j + \Delta_j \int_{\R^d}\psi\left(\abs{v}^2\right)Q_n(f^{(k)}_j)(v)\:dv 
\\&= M^{(k)}_j+ \frac{C_\Phi \Delta_j}{2} \int_{\R^d\times\R^d}\left(n\wedge \abs{v-v_*}\right)^\gamma f^{(k)}_j(v)f^{(k)}_j(v_*)
\\&\quad\quad\times\left[\int_{\mathbb{S}^{d-1}}\left[1+f^{(k)}_j(v')+f^{(k)}_j(v'_*)\right]b(\mbox{cos}\:\theta)\left(\psi'_*+\psi'-\psi_*-\psi\right)d\sigma\right]dv_*dv 
\\&=  M^{(k)}_j
\\&\quad+ \frac{C_\Phi \Delta_j}{2} \int_{\R^d\times\R^d}\left(n\wedge \abs{v-v_*}\right)^\gamma f^{(k)}_j(v)f^{(k)}_j(v_*)\left[G(v,v_*)-H(v,v_*)\right]dv_*dv.
\end{split}
\end{equation}
for the appropriate $G$ and $H$ given by Lemma $\ref{lem:povzner}$. Moreover, Lemma $\ref{lem:povzner}$ implies that 
\begin{eqnarray*}
G(v,v_*) &\leq& C_G \abs{v}\abs{v_*},
\\ H(v,v_*) &\geq& 0,
\end{eqnarray*}
where $C_G$ depends only on the collision kernel, $\gamma,d$ and possibly the mass of $f^{(k)}_j$. As the latter is uniformly bounded for all $j$ and $k$ by $K_\infty$ we can assume that $C_G$ is a constant that is independent of $j$ and $k$. From the above we conclude that
\begin{eqnarray*}
M^{(k+1)}_j &\leq& M^{(k)}_j + \frac{C_\Phi \Delta_j}{2}C_G \int_{\R^d\times\R^d}\abs{v-v_*}^\gamma \abs{v}\abs{v_*}f^{(k)}_j(v)f^{(k)}_j(v_*) \:dv_*dv
\\&\leq& M^{(k)}_j + \frac{C_\Phi\Delta_j}{2}C_G \left[\int_{\R^d}\left(1+\abs{v}^\gamma\right)\abs{v}f^{(k)}_j(v)\:dv \right]^2
\\&\leq& M^{(k)}_j + C_\Phi\Delta_j C_G \norm{f_j^{(k)}}_{L^1_{2,v}}^2 
\\&\leq& \int_{\R^d}\psi\left(\abs{v}^2\right)f_0(v) +\Delta_j\left(Ck+C_\Phi C_G\right) \norm{f_0}^2_{L^1_{2,v}}
\end{eqnarray*}
showing the desired result for the choice $C=C_\Phi C_G$. A similar, yet simpler, proof (as we have bounded second moment) shows the conservation of momentum.
\end{proof}

\subsection{Existence of a solution to the Boltzmann-Nordheim equation}

Now that we have solutions to the truncated equation, we are ready to show the existence theorem.
\begin{proof}[Proof of Theorem \ref{theo:existence}]
If $\gamma=0$ then the truncated equation is actually the full equation. As such, Proposition \ref{prop:masspreservsol} shows $(i)$. From now on we will assume that $\gamma>0$ and $s>d+2+\gamma$. We notice that in that case there exists $\epsilon>0$ such that the $f_0\in L^1_{2+\gamma+\epsilon,v}$.\\
We denote by $\left\{ f_n \right\}_{n\in\N}$ the solutions to the truncated equation
$$\begin{cases}
\partial_t f_n(t)=Q_n(f_n) & t>0,v\in\R^d \\
f(0,v)=f_0(v) & v\in\R^d
\end{cases}$$
given by Proposition \ref{prop:masspreservsol}. We will show that the sequence is Cauchy. In what follows, unless specified otherwise, constants $C$ that appear will depend on $K_\infty,E_\infty,C_\infty,T_0$ and $f_0$ \emph{but not on $n$, $m$ and $t$}. Assuming that $n\geq m$ and following the same technique as the one in Lemma \ref{lem:L12v} we see that
\begin{equation*}
\begin{split}
\frac{d}{dt}\norm{f_n(t)-f_m(t)}_{L^1_{2,v}} &=\int_{\R^d}\text{sgn}(f_n(t)-f_m(t))(1+\abs{v}^2)\left(Q_n(f_n(t))-Q_n(f_m(t)) \right)dv
\\&+\int_{\R^d}\text{sgn}(f_n(t)-f_m(t))(1+\abs{v}^2)\left(Q_n(f_m(t))-Q_m(f_m(t)) \right)dv
\\&=I_1+I_2.
\end{split}
\end{equation*}
Exactly as in Lemma \ref{lem:L12v}
$$I_1 \leq C \left(1+M_{2+\gamma}(f_n+f_m) \right)\left(\norm{f_n-f_m}_{L^1_{2,v}}+\norm{f_n-f_m}_{L^\infty_{v}} \right)$$
Since $M_{2+\gamma}(f_0)<\infty$ we find that, due to Lemma \ref{lem:additional properties of the sequence}, the sequence $\left\{f^{(k)}_{j,n} \right\}_{j,n\in\N}$, and as such, our $f_n$, have a uniform bound, depending on $f_0$, on their moment of order $2+\gamma$. Thus,
$$I_1 \leq C_1\left(\norm{f_n-f_m}_{L^1_{2,v}}+\norm{f_n-f_m}_{L^\infty_{v}} \right)$$
\par For the second term we notice that
\begin{equation}\label{eq:existproof II.0}
\begin{split}
&\abs{Q_n(f_m)(v)-Q_m(f_m)(v)} 
\\&\quad\quad\quad\leq C_\Phi b_\infty (1+2K_\infty)\int_{\{\abs{v-v_*}\geq m\}\times \mathbb{S}^{d-1}}\abs{v-v_*}^\gamma \left(f_m^\prime f^\prime_{m,*}+f_m f_{m,*}\right) dv_* d\sigma
\end{split}
\end{equation}
and as such
\begin{equation*}
\begin{split}
\int_{\R^d\times\R^d\times\mathbb{S}^{d-1}}&(1+\abs{v}^2) \abs{Q_n(f_m)(v)-Q_m(f_m)(v)}dv
\\&\leq \frac{C}{n^\eps}\int_{\R^d\times \R^d \times \mathbb{S}^{d-1}}\left(1+\abs{v}^2+\abs{v_*}^2\right)\abs{v-v_*}^{\gamma+\epsilon} f_mf_{m,*}\:dvdv_* d\sigma
\\&\leq \frac{C}{n^\eps}(M_0+M_2+\mathcal{M}_{2+\gamma+\epsilon})^2
\end{split}
\end{equation*}
where $\mathcal{M}_{2+\gamma+\epsilon}$ is a uniform bound on the $2+\gamma+\epsilon$ moment of all $\left\{f_n\right\}_{n\in\N}$, depending only on $f_0$ and other parameters of the problem. We conclude that
$$I_2 \leq \frac{C_2}{m^\epsilon}.$$
Thus, 
\begin{equation}\label{eq:massenergypart}
\frac{d}{dt}\norm{f_n(t)-f_m(t)}_{L^1_{2,v}} \leq C_1\left(\norm{f_n-f_m}_{L^1_{2,v}}+\norm{f_n-f_m}_{L^\infty_{v}} \right)+\frac{C_2}{m^\epsilon}.
\end{equation}
\par Next, we turn our attention to the $L^\infty$ norm. In order to do that we notice that due to Proposition \ref{prop:masspreservsol} 
\begin{equation}\label{eq:existproof II.1}
\begin{split}
\abs{v-v_*}^\alpha f_m^\prime f^\prime_{m,*} &\leq C_{s^\prime,\alpha}W_{s^\prime}^2 \left( \frac{1}{\left(1+\abs{v^\prime}^{s^\prime-\alpha}\right)}\frac{1}{\left(1+\abs{v^\prime_*}^{s^\prime}\right)} 
+ \frac{1}{\left(1+\abs{v^\prime_*}^{s^\prime-\alpha}\right)}\frac{1}{\left(1+\abs{v^\prime}^{s^\prime}\right)} \right),
\\&\leq \frac{\tilde{D}_{s^\prime,\alpha}}{1+\abs{v_*}^{s^\prime-\alpha}}
\end{split}
\end{equation}
and the same holds replacing $(v',v'_*)$ by $(v,v_*)$. We therefore see that by choosing $\alpha=\gamma+\epsilon$, if $s^\prime-d>\gamma+\epsilon$ we have that 
\begin{equation}\label{eq:existproof III}
\abs{Q_n(f_m)(v)-Q_m(f_m)(v)} \leq \frac{D_{s^\prime,\alpha}}{m^{\epsilon}}.
\end{equation}
where $D_{s^\prime,\alpha}$ is a constant that depends only on the parameters on the problems. Due to Lemma \ref{lem:additional properties of the sequence} we know that we can choose $s^\prime$ as close as we want to $$s-2\gamma>d+2-\gamma \geq d+\gamma.$$
Using the above, and Lemma \ref{lem:Linfty} we see that
\begin{equation*}
\begin{split}
\norm{f_n(t)-f_m(t)}_{L^\infty_v}\leq& C \int_{0}^t \left(\norm{f_n-f_m}_{L^1_{2,v}}+\norm{f_n-f_m}_{L^\infty_{v}} \right)ds 
\\&+ \int_{0}^t \abs{Q_n(f_m)(v)-Q_m(f_m)(v)}ds
\\\leq&  C \int_{0}^t \left(\norm{f_n-f_m}_{L^1_{2,v}}+\norm{f_n-f_m}_{L^\infty_{v}} \right)ds+\frac{D_{s^\prime,\alpha}T_0}{m^{\epsilon}}.
\end{split}
\end{equation*}
Following in the steps of the proof of the uniqueness in Section \ref{sec:uniqueness} we choose $t_0$, depending only on $T_0$ and $D_0$ such that for all $t\leq t_0$
$$\sup_{s\in[0,t]}\norm{f_n-f_m}_{L^\infty_v} \leq C_3\int_{0}^t \norm{f_n-f_m}_{L^1_{2,v}}ds+\frac{2D_{s^\prime,\alpha}T_0}{m^{\epsilon}}.$$
Combining this with the integral version of $\eqref{eq:massenergypart}$ gives us that in $[0,t_0]$
\begin{equation*}
\begin{split}
\norm{f_n(t)-f_m(t)}_{L^1_{2,v}} \leq& \norm{f_n(0)-f_m(0)}_{L^1_{2,v}} 
\\&+ C \left(\int_{0}^t (1+t-s)\norm{f_n(s)-f_m(s)}_{L^1_{2,v}}ds + \frac{1}{m^\eps}\right).
\end{split}
\end{equation*}
As $f_n(0)=f_m(0)$ we find that the above is enough to show that $\left\{f_n \right\}_{n\in\N}$ is Cauchy in $L^{1}_{2,v}$ as well as $L^1_{2,t,v}$ for $t\in[0,t_0]$. As $t_0$ was independent of $n$, $m$ and the bound that we used are valid for all $t\in[0,T_0]$ we can use the fact that $\left\{f_n(t_0) \right\}_{n\in\N}$ is Cauchy and repeat the process. This shows that the sequence is Cauchy in all of $[0,T_0]$ and we denote by $f$ its limit in $L^{1}_{2,t,v}$. Using the strong convergence of $f_n$ to $f$, and the fact that $f_n$ solves the truncated equation, we conclude that for any such $\phi$
$$\int_{0}^t \int_{\R^d}\phi(t,v)\left(f(t,v)-f_0(v)-\int_0^t Q(f)(s,v)ds \right)=0$$
which shows that $f$ is indeed the desired solution.\\
Since the convergence of $f_n$ to $f$ is in $L^{1}_{2,t,v}$ we conclude the conservation of mass, momentum and energy.

\bigskip
Lastly, we notice that $T_0$, the time we have worked with from the sequence $\left\{f_{j,n}^{(k)} \right\}_{j,n}$, depends only on $f_0$ and parameters of the collision. Thus If $\norm{f(t,\cdot)}_{L^\infty_v}$ is bounded on $[0,T_0]$ we can use Theorem \ref{theo:existence} together with the conservation of mass, momentum and energy, to repeat our arguments and extend the time under which the solution exists. We conclude that we can 'push' our solution up to a time $T_{max}$ such that
$$\limsup\limits_{T \to T_{max}^-}\norm{f}_{L^\infty_{[0,T]\times\R^d}} = + \infty.$$
This completes the proof.
\end{proof}
\bigskip

We end this section with a few remarks.
\begin{remark}\hspace{1mm}
\begin{enumerate}
\item[(i)] As we have shown existence of a mass, momentum and energy conserving solution to the Boltzmann-Nordheim equation that is in $L^{\infty}_{\mbox{\scriptsize{loc}}}\left([0,T_0),L^1_{2,v} \cap L_v^\infty \right)$, the \textit{a priori} estimate given by Theorem $\ref{theo:apriori}$ actually improve our regularity of the solution and we learn that $f$ belongs to $L^{\infty}_{\mbox{\scriptsize{loc}}}\left([0,T_0),L^1_{2,v} \cap L_{s',v}^\infty \right)$ for all $s'<s$. 
\item[(ii)] Note that we have given an explicit way to find the solution, as all our sequences converge strongly.
\end{enumerate}
\end{remark}
\bigskip



\appendix

\section{Simple Computations}\label{simplecomputation}
We gather a few simple computations in this Appendix to make some of the proofs of the paper more coherent, without breaking the flow of the paper.
\bigskip


\subsection{Proof of Lemma $\ref{lem:integration lemma part II}$}

We start by noticing that 
\begin{equation*}
\begin{split}
\int_{\R^d}f(v_*)\abs{v-v_*}^{-\alpha}dv_* &=\int_{\abs{v-v_*}<1}f(v_*)\abs{v-v_*}^{-\alpha}\:dv_*+\int_{\abs{v-v_*}>1}f(v_*)\:dv_* 
\\&\leq \norm{f}_{L^\infty_v}\int_{\abs{x}<1}\abs{x}^{-\alpha}\:dx + \norm{f}_{L^1_v}=C_{d,\alpha}\norm{f}_{L^\infty_v}+ \norm{f}_{L^1_v}
\end{split}
\end{equation*}
implying that the required integral is uniformly bounded in all $v$ as $0\leq \alpha <d$. Thus, in order to prove the Lemma we can assume that $\abs{v}>1$. For such a $v$ consider the sets
\begin{equation}\nonumber
A=\left\{v_* \in \R^d; \quad \abs{v_*}\leq \frac{\abs{v}}{2}\right\},
\end{equation}
\begin{equation}\nonumber
B=\left\{v_* \in\R^d; \quad \abs{v-v_*}\leq \frac{\abs{v}^{\frac{s_2-s_1}{d}}}{2}\right\},
\end{equation}
and $C=\left(A\cup B\right)^c$. 

\bigskip
We have that
$$\int_{A}f(v_*)\abs{v-v_*}^{-\alpha}\:dv_* \leq 2^{\alpha}\abs{v}^{-\alpha}\int_{A}f(v_*)\:dv_* \leq 2^{\alpha}\abs{v}^{-\alpha} \norm{f}_{L^1_v}$$
as $\abs{v-v_*} \geq \abs{v}-\abs{v_*}\geq \abs{v}/2$.
\par Next, we notice that if $v_*\in B$ and $\abs{v}>1$ then since $s_2-s_1<d$ we have that
\begin{equation}\nonumber
\abs{v_*}\geq \abs{v}-\abs{v-v_*}\geq \abs{v}-\frac{\abs{v}^{\frac{s_2-s_1}{d}}}{2} =\abs{v}\left(1-\frac{1}{2\abs{v}^{\frac{d-(s_2-s_1)}{d}}}\right)\geq \frac{\abs{v}}{2}.
\end{equation}
Thus
\begin{equation*}
\begin{split}
\int_{B}f(v_*)\abs{v-v_*}^{-\alpha}\:dv_* &\leq \norm{f}_{L^\infty_{s_2,v}}\int_{B}\left(1+\abs{v_*}^{s_2}\right)^{-1}\abs{v-v_*}^{-\alpha}\:dv_* 
\\&\leq 2^{s_2}\norm{f}_{L^\infty_{s_2,v}} \abs{v}^{-s_2}\int_{\abs{x}\leq \frac{\abs{v}^{\frac{s_2-s_1}{d}}}{2}}\abs{x}^{-\alpha}\:dx
\\&=\frac{2^{s_2}\abs{\mathbb{S}^{d-1}}}{d-\alpha}\norm{f}_{L^\infty_{s_2,v}} \abs{v}^{-s_2+\frac{(d-\alpha)(s_2-s_1)}{d}}
\\&=C_{d,\alpha}\norm{f}_{L^\infty_{s_2,v}}\abs{v}^{-s_1-\frac{\alpha(s_2-s_1)}{d}}.
\end{split}
\end{equation*}

\par Lastly, when $v_*\in C$ we have that $\abs{v_*}\geq \frac{\abs{v}}{2}$ and $\abs{v-v_*} \geq \frac{\abs{v}^{\frac{s_2-s_1}{d}}}{2}$. Thus
\begin{equation*}
\begin{split}
\int_{C}f(v_*)\abs{v-v_*}^{-\alpha}\:dv_* &\leq 2^{\alpha}\abs{v}^{-\frac{\alpha(s_2-s_1)}{d}}\int_{C}f(v_*)\left(1+\abs{v_*}^{s_1}\right)\abs{v_*}^{-s_1}\:dv_*
\\&\leq 2^{s_1+\alpha}\abs{v}^{-s_1-\frac{\alpha(s_2-s_1)}{d}}\norm{f}_{L^1_{s_1,v}}.
\end{split}
\end{equation*}
Combining all of the above gives the desired result.
\bigskip


\subsection{Additional estimations}

\par Throughout this section we will denote by $\mathbb{S}^{d-1}_{r}(a)$ the sphere of radius $r$ and centre $a \in \R^d$.

\bigskip
\begin{lemma}\label{lem:integration of |v-v_1|^(-alpha) over the sphere}
For any $a\in \R^d$ and $r>0$ we have that if $0\leq \alpha \leq d-1$ then there exists $C_{d,\alpha}>0$ such that
\begin{equation}\label{eq:integration of |v-v_1|^(-alpha) over the sphere}
\int_{\mathbb{S}^{d-1}_{r}(a)}\abs{v-v_1}^{-\alpha}d\sigma(v_1) \leq C_{d,\alpha}r^{-\alpha}.
\end{equation}
\end{lemma}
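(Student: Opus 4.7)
The plan is to reduce the problem to a uniform estimate on the unit sphere via rescaling, and then to control the resulting singularity by splitting according to the location of the rescaled centre. Parametrising $\mathbb{S}^{d-1}_r(a)$ by $v_1 = a + r\omega$ with $\omega \in \mathbb{S}^{d-1}$, the surface measure scales as $d\sigma(v_1) = r^{d-1}d\omega$ and $|v-v_1| = r|b - \omega|$ with $b := (v-a)/r$. This rewrites the integral as
\begin{equation*}
\int_{\mathbb{S}^{d-1}_r(a)}|v-v_1|^{-\alpha}d\sigma(v_1) = r^{d-1-\alpha}\int_{\mathbb{S}^{d-1}}|b-\omega|^{-\alpha}d\omega,
\end{equation*}
so that it suffices to show
\begin{equation*}
K_{d,\alpha} := \sup_{b \in \R^d}\int_{\mathbb{S}^{d-1}}|b-\omega|^{-\alpha}d\omega < \infty
\end{equation*}
for $0 \leq \alpha < d-1$, the appropriate power of $r$ being absorbed into the constant on the right-hand side.

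To bound $K_{d,\alpha}$, I would split into three regimes according to $|b|$. When $|b| \leq 1/2$ one has $|b - \omega| \geq 1/2$ for every $\omega \in \mathbb{S}^{d-1}$; when $|b| \geq 2$ one has $|b - \omega| \geq |b| - 1 \geq 1$. In both extreme regimes the integrand is bounded by $2^\alpha$, so the integral is trivially bounded by $2^\alpha |\mathbb{S}^{d-1}|$. The delicate case is $1/2 \leq |b| \leq 2$, in which $b$ may approach $\mathbb{S}^{d-1}$ and produce a near-singularity. Choosing $b/|b|$ as the north pole and parametrising $\omega = \cos\theta\,b/|b| + \sin\theta\,\eta$ with $\eta \in \mathbb{S}^{d-2}$ and $\theta \in [0,\pi]$, one finds $|b-\omega|^2 = |b|^2 - 2|b|\cos\theta + 1$, and in particular $|b-\omega| \geq c\,\theta$ near $\theta = 0$ for a constant $c > 0$ depending only on the range of $|b|$. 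The relevant contribution is then dominated by
\begin{equation*}
|\mathbb{S}^{d-2}|\int_0^\pi \theta^{-\alpha}\sin^{d-2}\theta\,d\theta,
\end{equation*}
which converges when $\alpha - (d-2) < 1$, i.e.\ when $\alpha < d-1$.

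The only genuine obstacle is the antipodal singularity in the middle regime, and it is precisely the hypothesis $\alpha \leq d-1$ (with strict inequality truly needed for the one-dimensional integral to converge) that resolves it. Combining the three cases yields $K_{d,\alpha} < \infty$ depending only on $d$ and $\alpha$, and the scaling identity above produces the constant $C_{d,\alpha}$ in the claimed estimate.
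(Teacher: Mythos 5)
Your proof is correct and follows essentially the same route as the paper's: parametrising the sphere in angular coordinates, isolating the near-singular regime where the centre lies close to the sphere, and bounding everything by the one-dimensional integral $\int_0^\pi \theta^{-\alpha}\sin^{d-2}\theta\,d\theta$. The preliminary rescaling to the unit sphere is a minor cosmetic streamlining, and you are right to observe that strict inequality $\alpha < d-1$ is what the argument actually requires --- the paper's own proof has the same restriction despite the lemma being stated with $\alpha \leq d-1$.
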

\bigskip

\begin{proof}[Proof of Lemma $\ref{lem:integration of |v-v_1|^(-alpha) over the sphere}$]
We have that if $v_1\in \mathbb{S}^{d-1}_{r}(a) $ then
\begin{equation}\nonumber
\abs{v-v_1}^2=\abs{v-a}^2+r^2-2r\abs{v-a}\cos \theta = \left(\abs{v-a}-r\right)^2+2r\abs{v-a}\left(1-\cos\theta\right),
\end{equation}
where $\theta$ is the angle between the constant vector $v-a$ and the vector $v_1$. At this stage we'll look at two possibilities: $\abs{\abs{v-a}-r}>\frac{r}{2}$ and $\frac{r}{2}\leq \abs{v-a}\leq\frac{3r}{2}$.
\par In the first case we have that
\begin{equation}\nonumber
\abs{v-v_1} \geq \abs{\left(v-a\right)-r} \geq \frac{r}{2}
\end{equation}
implying that
\begin{equation}\nonumber
\int_{\mathbb{S}^{d-1}_{r}(a)}\abs{v-v_1}^{-\alpha}d\sigma(v_1)  \leq \left(\frac{r}{2}\right)^{-\alpha}\int_{\mathbb{S}^{d-1}}d\sigma(v_1)= 2^{\alpha}r^{-\alpha}.
\end{equation}

\bigskip
In the second case we have that 
\begin{equation}\nonumber
\abs{v-v_1} \geq \sqrt{2r\abs{v-a}\left(1-\cos\theta\right)} \geq \sqrt{2}r\sin\left(\frac{\theta}{2}\right)
\end{equation}
implying that
\begin{eqnarray*}
\int_{\mathbb{S}^{d-1}_{r}(a)}\abs{v-v_1}^{-\alpha}d\sigma(v_1)  &\leq& \left(\sqrt{2}r\right)^{-\alpha}C_d\int_{0}^{\pi}\frac{\sin^{d-2} (\theta)}{\sin^{\alpha}\left(\frac{\theta}{2}\right)}d\theta
\\&=&C_{d,\alpha}r^{-\alpha}\int_{0}^{\frac{\pi}{2}}\frac{\cos^{d-2}(\theta)\sin^{d-2}(\theta)}{\sin^{\alpha}(\theta)}d\theta.
\end{eqnarray*}
The last integration is finite if and only if $d-2-\alpha>-1$, which is valid in our case. The proof is thus complete.
\end{proof}
\bigskip

\begin{lemma}\label{lem:concentration of delta on a sphere}
Let $E$ be any hyperplane in $\R^d$ with $d\geq 3$ and let $a\in\R^d$ and $r>0$. Then
\begin{equation}\label{eq:concentration of delta on a sphere}
\sup_{n\in\mathbb{N}} \frac{1}{r^{d-2}}\int_{\mathbb{S}^{d-1}_r(a)}\varphi_n(x)ds(x) \leq \abs{\mathbb{S}^{d-2}}
\end{equation}
where $\varphi_n(x)=\left(\frac{n}{2\pi}\right)^{\frac{1}{2}}e^{-\frac{nD(x,E)^2}{2}}$ with $D(x,A)$ the distance of $x$ from the set $A$, $ds(x)$ is the appropriate surface measure.
\end{lemma}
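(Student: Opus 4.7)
The proof plan is to reduce to a one-dimensional integral by slicing the sphere with hyperplanes parallel to $E$ and then exploiting that $\varphi_n$ is (up to normalization) a Gaussian concentrated on $E$ of total mass $1$ in the transverse direction.

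First I would fix coordinates adapted to $E$: by rigid motion invariance of $ds(x)$, of $\mathbb{S}^{d-1}_r(a)$ and of the distance $D(\cdot,E)$, I can assume that $E=\{x\in\R^d:x_1=0\}$, so that $D(x,E)=|x_1|$. (I read the exponent as $e^{-nD(x,E)^2/2}$, matching the Gaussian approximation of the delta on $E$ used in Lemma~\ref{lem:integration on hyperplanes of Q^+}.) Writing $\omega=(\cos\theta,\sin\theta\,\hat u)$ with $\hat u\in\mathbb{S}^{d-2}$ for the unit vector at angle $\theta$ to $e_1$, the parametrisation $x=a+r\omega$ of $\mathbb{S}^{d-1}_r(a)$ gives
\begin{equation*}
ds(x)=r^{d-1}\sin^{d-2}\theta\,d\theta\,d\sigma(\hat u),\qquad x_1=a_1+r\cos\theta .
\end{equation*}

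Next I would integrate out $\hat u$ and substitute $t=a_1+r\cos\theta$. This is equivalent to applying the co-area formula along the projection onto the $x_1$-axis, which is exactly the direction normal to $E$. The level set $\{x_1=t\}\cap\mathbb{S}^{d-1}_r(a)$ is a $(d-2)$-sphere of radius $\sqrt{r^2-(t-a_1)^2}$, and the tangential gradient of $x_1$ on the big sphere has modulus $\sin\theta=\sqrt{r^2-(t-a_1)^2}/r$. Putting these together yields
\begin{equation*}
\int_{\mathbb{S}^{d-1}_r(a)}\!\varphi_n(x)\,ds(x)
=r\,|\mathbb{S}^{d-2}|\int_{a_1-r}^{a_1+r}\sqrt{\tfrac{n}{2\pi}}\,e^{-nt^2/2}\bigl(r^2-(t-a_1)^2\bigr)^{(d-3)/2}dt .
\end{equation*}

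Finally I would use the hypothesis $d\geq 3$, which makes the exponent $(d-3)/2$ nonnegative; therefore $(r^2-(t-a_1)^2)^{(d-3)/2}\leq r^{d-3}$ uniformly in $t\in[a_1-r,a_1+r]$. Dropping this factor and extending the integral to all of $\R$ gives
\begin{equation*}
\frac{1}{r^{d-2}}\int_{\mathbb{S}^{d-1}_r(a)}\!\varphi_n(x)\,ds(x)
\leq |\mathbb{S}^{d-2}|\int_{-\infty}^{+\infty}\sqrt{\tfrac{n}{2\pi}}\,e^{-nt^2/2}\,dt=|\mathbb{S}^{d-2}|,
\end{equation*}
which establishes the claim with $C_d=|\mathbb{S}^{d-2}|$, uniformly in $n$, $r$ and $a$. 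The only subtlety to watch is the restriction $d\geq 3$: for $d=2$ the slicing factor becomes $(r^2-(t-a_1)^2)^{-1/2}$ and is not uniformly bounded, which is consistent with the lemma being stated only in that regime.
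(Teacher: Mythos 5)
Your argument is correct and follows essentially the same route as the paper: reduce by rigid-motion invariance to a coordinate hyperplane, pass to spherical coordinates, collapse the sphere integral to a one-dimensional integral in the transverse variable, bound the angular Jacobian factor by $r^{d-3}$ using $d\geq 3$, and finish by extending to the full Gaussian on $\R$. You also correctly read the exponent as $e^{-nD(x,E)^2/2}$ (matching the definition in Lemma~\ref{lem:integration on hyperplanes of Q^+}); the unsquared $D$ in the lemma statement is a typo, as is the missing square on $x$ in the paper's displayed factor $\left(1-\tfrac{x}{nr^2}\right)^{(d-3)/2}$.
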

\bigskip

\begin{proof}[Proof of Lemma $\ref{lem:concentration of delta on a sphere}$]
Due to translation, rotation and reflection with respect to $E$ we may assume that $E=\left\{x\in \R^d,\quad x_d=0 \right\}$ and that $a=\abs{a}\hat{e}_d$. In that case
\begin{equation}\nonumber
\varphi_n(x)=\sqrt{\frac{n}{2\pi}}e^{-\frac{nx_d^2}{2}}
\end{equation}
and on $\mathbb{S}^{d-1}_r(a)$ we find that 
\begin{equation}\nonumber
\varphi_n(a+r\omega)=\sqrt{\frac{n}{2\pi}}e^{-\frac{n\left(\abs{a}+r\cos\theta\right)^2}{2}}
\end{equation}
where $\theta$ is the angle with respect to the $\hat{e}_d$ axis. Thus
\begin{equation}\label{eq:Cd=Sd-2}
\frac{1}{r^{d-2}}\int_{\mathbb{S}^{d-1}_r(a)}\varphi_n(x)ds(x) =\abs{\mathbb{S}^{d-2}}\frac{\sqrt{n}r}{\sqrt{2\pi}}\int_{0}^{\pi}e^{-\frac{n\left(\abs{a}+r\cos\theta\right)^2}{2}}\sin^{d-2}\theta d\theta
\end{equation}
Using the change of variables $x=\sqrt{n}r\cos \theta$ yields 
\begin{equation}\nonumber
\begin{gathered}
\frac{1}{r^{d-2}}\int_{\mathbb{S}^{d-1}_r(a)}\varphi_n(x)ds(x) =\abs{\mathbb{S}^{d-2}}\frac{1}{\sqrt{2\pi}}\int_{-\sqrt{n}r}^{\sqrt{n}r}e^{-\frac{\left(\sqrt{n}\abs{a}+x\right)^2}{2}}\left(1-\frac{x}{nr^2}\right)^{\frac{d-3}{2}}dx \\
\leq  \frac{\abs{\mathbb{S}^{d-2}}}{\sqrt{2\pi}}\int_{\R}e^{-\frac{\left(\sqrt{n}\abs{a}+x\right)^2}{2}}dx = \abs{\mathbb{S}^{d-2}},
\end{gathered}
\end{equation}
completing the proof.
\end{proof}
\bigskip
\begin{lemma}\label{lem:decay differential inequality}
Assume that $\psi$ satisfied 
\begin{equation}\nonumber
\psi' \leq -C_1(1+\abs{v})^\alpha \psi + C_2 \psi+C_3\left(1+\abs{v}\right)^{-\beta}
\end{equation}
when $C_1,C_2,C_3>0$. 
\\Then for any $0<t<T$ and $\abs{v} \geq \left(\frac{2C_2}{C_1}\right)^{\frac{1}{\alpha}}-1$ 
\begin{equation}\nonumber
\left(1+\abs{v}\right)^{\alpha+\beta}\psi(t) \leq \left(1+\abs{v}\right)^{\alpha+\beta}\psi(0)+\frac{2C_3}{C_1}. 
\end{equation}
\end{lemma}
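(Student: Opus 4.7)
The plan is to treat $v$ as a fixed parameter (satisfying the stated lower bound) and reduce the differential inequality to a standard linear one in $t$.

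First, I would observe that the smallness condition $|v| \geq (2C_2/C_1)^{1/\alpha} - 1$ is precisely what makes the dissipative term dominate the growth term $C_2\psi$. Indeed, this inequality rearranges to $C_1(1+|v|)^\alpha \geq 2C_2$, so that
$$-C_1(1+|v|)^\alpha \psi + C_2\psi \leq -\frac{C_1}{2}(1+|v|)^\alpha \psi$$
(using $\psi \geq 0$, which we inherit from the context in which the lemma is used). Substituting into the hypothesis yields
$$\psi'(t) \leq -\frac{C_1}{2}(1+|v|)^\alpha \psi(t) + C_3(1+|v|)^{-\beta}.$$

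Next, freezing $v$ and setting $A := \tfrac{C_1}{2}(1+|v|)^\alpha$ and $B := C_3(1+|v|)^{-\beta}$, I would apply the integrating factor $e^{At}$ to the resulting linear differential inequality $\psi' + A\psi \leq B$. This gives
$$\psi(t) \leq e^{-At}\psi(0) + \frac{B}{A}\bigl(1 - e^{-At}\bigr) \leq \psi(0) + \frac{B}{A},$$
where the second bound uses $e^{-At} \in [0,1]$ together with $\psi(0), B/A \geq 0$.

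Finally, I would compute $B/A = \frac{2C_3}{C_1}(1+|v|)^{-\alpha-\beta}$ and multiply the resulting estimate through by $(1+|v|)^{\alpha+\beta}$ to conclude
$$(1+|v|)^{\alpha+\beta}\psi(t) \leq (1+|v|)^{\alpha+\beta}\psi(0) + \frac{2C_3}{C_1},$$
as claimed. There is no real obstacle here; the only subtlety is noticing that the numeric threshold on $|v|$ is engineered exactly so the $C_2\psi$ term can be absorbed into half of the dissipation, after which the argument reduces to the integrating factor for a scalar linear ODE.
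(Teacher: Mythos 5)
Your proof is correct and follows essentially the same route as the paper: Gr\"onwall/integrating factor for a scalar linear ODE in $t$, with the threshold on $\abs{v}$ used to guarantee an effective decay rate of at least $\tfrac{C_1}{2}(1+\abs{v})^\alpha$. The only difference is cosmetic: you absorb the $C_2\psi$ term into half the dissipation \emph{before} applying the integrating factor, whereas the paper keeps the exact exponent $C_1(1+\abs{v})^\alpha - C_2$ and invokes the lower bound only at the end; both variants rely on $\psi \geq 0$, which you correctly note is inherited from the context.
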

\bigskip

\begin{proof}[Proof of Lemma $\ref{lem:decay differential inequality}$]
Defining $\phi(t)=e^{\left(C_1(1+\abs{v})^\alpha  - C_2\right)t }\psi(t)$, we find that
\begin{equation}\nonumber
\phi' \leq \frac{C_3}{\left(1+\abs{v}\right)^\beta}e^{\left(C_1(1+\abs{v})^\alpha  - C_2\right)t }
\end{equation}
Using the assumption on $v$, which is equivalent to
\begin{equation}\nonumber
C_1(1+\abs{v})^\alpha  - C_2>\frac{C_1}{2}(1+\abs{v})^\alpha >0
\end{equation}
we find that
\begin{equation}\nonumber
\begin{split}
\psi(t) &\leq e^{-\left(C_1(1+\abs{v})^\alpha  - C_2\right)t}\psi(0) + \frac{C_3}{\left(1+\abs{v}\right)^\beta \left(C_1(1+\abs{v})^\alpha  - C_2\right)}\left(1-e^{-\left(C_1(1+\abs{v})^\alpha  - C_2\right)}t\right)
\\&\leq \psi(0) + \frac{2C_3}{C_1\left(1+\abs{v}\right)^{\alpha+\beta}}.
\end{split}
\end{equation}
from which the result follows.
\end{proof}
\bigskip


\section{Propagation of Weighted $L^\infty$ Norms for the Truncated Operators}\label{appsec:weighted}
Here we will discuss the propagation of the weighted $L^\infty$ norms for the constructed sequence $\left\{f_{j,n} \right\}_{j\in\N}$.
\begin{lemma}\label{lemapp:prop of moments}
Consider the sequence defined in \eqref{inductionseq}. Let $s>2$ and let $C_s$ be a uniform constant such that 
$$\abs{v'}^s+\abs{v'_*}^s-\abs{v}^s-\abs{v_*}^s \leq C_s \abs{v}^{s-1}\abs{v_*} $$
Then for any $j \geq j_0= 2(1+M_2)C_s/M_0$ we have that
\begin{equation}\label{appeq:prop of moments}
\int_{\R^d}(1+\abs{v}^s) f_{j,n}^{(k)}(v)dv \leq (D_s k\Delta_j +1) \int_{\R^d}(1+\abs{v}^s) f_0(v)dv,
\end{equation}
where $D_s=4C_\Phi C_s l_b(1+2K_\infty)(1+M_2)$.
\end{lemma}
 In what follows we will drop the subscript ${j,n}$ from the proofs to simplify the notation
 \begin{proof}
 The proof, as usual, goes by induction.The step $k=0$ is immediate. Assuming the claim is valid for $k$ we have that
 $$\int_{\R^d}(1+\abs{v}^s) f^{(k+1)}(v)dv = \int_{\R^d}(1+\abs{v}^s) f^{(k)}(v)dv + \Delta_j \int_{\R^d}\abs{v}^s Q_n\left(f^{(k)}\right)(v)dv$$
 $$\leq \int_{\R^d}(1+\abs{v}^s) f^{(k)}(v)dv+C_\Phi C_s l_b(1+2K_\infty)\Delta_j\int_{\R^d\times \R^d}\left(\abs{v}^s\abs{v_*}+\abs{v}^{s-1}\abs{v_*}^2 \right)f^{(k)}(v)f^{(k)}(v_*)dvdv_*$$
 where we have used a Povzner inequality much like Proposition \ref{prop:prop of moments}. Thus,
  $$\int_{\R^d}(1+\abs{v}^s) f^{(k+1)}(v)dv \leq \left(1+ 2C_\Phi C_s l_b(1+2K_\infty)(1+M_2)\Delta_j\right)\int_{\R^d}(1+\abs{v}^s) f^{(k)}(v)dv $$
  $$\leq \left(1+ 2C_\Phi C_s l_b(1+2K_\infty)(1+M_2)\Delta_n\right)(D_s k\Delta_n +1) \int_{\R^d}(1+\abs{v}^s) f_0(v)dv.$$
  The proof follows form the choice of $D_s$.
 \end{proof}
Much like in Section \ref{sec:apriori} we denote by 
$$\psi_a(v)=\begin{cases}
0 &\abs{v}<\abs{a} \\
1 & \abs{v}\geq \abs{a}
\end{cases}$$
We have the following:
\begin{lemma}\label{lem:towards Linfty bounds}
Consider the sequence defined in \eqref{inductionseq}. We have that
\begin{itemize}
\item[(i)] if $r\geq 2$ is such that $\int_{\R^d} (1+\abs{v}^r)f_0(v)dv <\infty$ then for any $j\geq j_0$
$$\int_{E_{vv'}}\psi_v(v'_*)f_{j,n}^{(k)}(v_*')dE(v_*') \leq \int_{E_{vv'}}\psi_v(v'_*)f_0(v_*')dE(v_*') + B_\infty k \Delta_j(1+\abs{v})^{-r+\gamma-1},$$
where $B_\infty=C_\Phi C_{d,\gamma}b_\infty \left(1+D_r+K_\infty \right)^3$ and $C_{d,\gamma}$ is a uniform constant defined in Lemma \ref{lem:weight propagation preperation I}. Moreover, one can choose
$$r=\begin{cases}
2, & \mbox{if}\:s\leq d+2 \\
s^\prime, & \mbox{if}\: s>d+2\:\mbox{and}\:s^\prime<s-d
\end{cases}$$
\item[(ii)] If $\epsilon$ is small enough
$$\int_{E_{vv'}}\psi_v(v'_*)f_{j,n}^{(k)}(v_*')dE(v_*') \leq \left(C_{s,\epsilon}\norm{f_0}_{L^\infty_{s,v}}+ B_\infty k \Delta_j\right)(1+\abs{v})^{-(s-d+1-\epsilon-\gamma)},$$
where $C_{s,\epsilon}$ is a uniform constant that depends only on $s$ and $\epsilon$.
\item[(iii)] If $f_0\in L^\infty_{s,v}$ when $s>d+2\gamma$ then for any $s^\prime<s-2\gamma$
$$W_{s^\prime}=\sup_{k,j\geq j_0,n}\norm{f_{j,n}^{(k)}}_{L^\infty_{s^\prime,v}} < \infty.$$
\end{itemize}
\end{lemma}
\begin{proof}
All the proofs will follow by induction. The step $k=0$ is trivial.\\
$(i)$ Using Lemma \ref{lemapp:prop of moments}, Lemma \ref{lem:weight propagation preperation I}, the fact that $Q_n^- \geq 0$ and $Q_n^{+}\leq Q^+$ we have that
\begin{equation*}
\begin{split}
&\int_{E_{vv'}}\psi_v(v'_*)f^{(k+1)}(v_*')dE(v_*') \leq \int_{E_{vv'}}\psi'_{v,*}f^{(k)\prime}_*dE(v_*') +\Delta_j\int_{E_{vv'}}\psi'_{v,*}Q^+\left(f^{(k)\prime}_*\right)dE(v_*')
\\&\:\leq \int_{E_{vv'}}\psi'_{v,*}f'_{0,*}dE(v_*') +  \frac{B_\infty k\Delta_j}{(1+\abs{v})^{r-\gamma+1}}+C_\Phi C_{d,\gamma}b_\infty \left(1+D_r+K_\infty \right)^3\frac{\Delta_j}{(1+\abs{v})}^{r-\gamma+1}.
\end{split}
\end{equation*}
The choice of $r$ follows the remark at the beginning of the proof of Proposition \ref{prop:weight propagation preperation II}.

\bigskip
$(ii)$ follows much like Remark \ref{rem:improved formula}.
\par The proof of $(iii)$ follows the same method of the proof of Theorem \ref{theo:apriori}, with a few small changes to give a uniform bound on the weighted norm that will be independent of the truncation. As seen in the aforementioned proof, together with $(ii)$
$$Q_n^{+}(f^{(k)})(v)\leq Q^{+}(f^{(k)})(v) \leq C_{0,s,\epsilon}\left(1+\abs{v}\right)^{-\delta}$$
where $C_0$ depends only on $s$, $\epsilon$, the initial data and the collision parameters, and
$$\delta=\min \left\{s-2\gamma-\epsilon,s-d+1-\gamma-\epsilon+\frac{2(1+\gamma)}{d} \right\}.$$
As such, we find that
$$f^{(k+1)}(v) \leq f^{(k)}+Q^+(f^{(k)})(v) \leq  f^{(k)}+ C_{0,s,\epsilon}\left(1+\abs{v}\right)^{-\delta},$$
implying that one can prove inductively that there exists a constant $\widetilde{W}_1$ that depends only on the $s$, $\epsilon$, the initial data and the collision parameters such that
$$f^{(k)}(v) \leq \widetilde{W}_1 k\Delta_j \left(1+\abs{v}\right)^{-\delta}+ f_0(v).$$
This implies that, with the notations of $(iii)$, $W_\delta<\infty$. At this point we continue by induction and by using Lemma $\ref{lem:integration lemma part II}$ with $s_1=\delta$. Note that the process can not go beyond $s-2\gamma$. Denoting by $\xi=s-d+1-\gamma+\frac{2(1+\gamma)}{d}$, we see that the process can continue until
$$s^{\prime\prime}<\xi\sum_{j=0}^\infty \left(\frac{d-1-\gamma}{d} \right)^j=\frac{d}{1+\gamma}\xi. $$
Because $s>d+\gamma$ the above is bigger than $s-2\gamma$ which means that we will reach the desired result in finitely many steps, completing the proof.
\end{proof}

\bibliographystyle{acm}
\bibliography{bibliography}


\bigskip
\signmb
\signae

\end{document}